\documentclass[aps,pra,11pt,onecolumn,nofootinbib,tightenlines]{revtex4-2}

\usepackage{mathrsfs}
\usepackage{physics}
\usepackage{graphicx}
\usepackage{comment}
\usepackage{amsmath}
\usepackage{amsfonts}
\usepackage{amssymb}
\usepackage{braket}
\usepackage{amsthm}
\usepackage{caption}
\usepackage{lipsum}
\usepackage{subcaption}
\usepackage{mwe}
\usepackage{enumerate}
\usepackage{appendix}
\usepackage{tikz}
\usepackage{enumitem}
\usepackage{booktabs}
\usepackage{array}
\usepackage{forest}
\usepackage{mathtools}
\usepackage[colorlinks=true,urlcolor=blue,citecolor=blue,linkcolor=blue]{hyperref}

\usepackage[most,breakable]{tcolorbox}

\newtheorem{theorem}{Theorem}[section]
\newtheorem{lemma}[theorem]{Lemma}
\newtheorem{proposition}[theorem]{Proposition}

\newtheorem{remark}[theorem]{Remark}
\let\oldremark\remark
\renewcommand{\remark}{\oldremark\upshape}

\newtheorem{corollary}[theorem]{Corollary}

\DeclareMathOperator*{\opt}{opt}
\DeclareMathOperator*{\Trm}{Tr}

\DeclareMathOperator*{\id}{id}

\begin{document}

\title{Sharp Data-Processing Region for Quantum Conditional R\'enyi Entropies}
\author{Milad M. Goodarzi$^1$}
\email{milad.moazami@gmail.com}
\author{Yucheng Bao$^3$}
\author{Marco Tomamichel$^{1,2}$}

\affiliation{
$^1$Centre for Quantum Technologies, National University of Singapore, Singapore 117543, Singapore\\
$^2$Department of Electrical and Computer Engineering, National University of Singapore, Singapore 117583, Singapore\\
$^3$School of Mathematics and Physics, Xi’an Jiaotong-Liverpool University, Suzhou 215123, China
}

\begin{abstract}
{\bf Abstract.} We study the three-parameter family of quantum conditional R\'enyi entropies $H_{\alpha,z}^\lambda$ introduced in \textit{Commun. Math. Phys.} 407, 180 (2026). We prove that the previously established data-processing region is sharp. We then characterize the club-sandwiched family through a classical coarse-graining principle: within the full data-processing region, universal monotonicity under deterministic coarse-graining of the classical conditioned register holds if and only if $z = \alpha$. Finally, we establish rigidity of entropic duality by showing that any universal duality relation for finite-dimensional tripartite pure states must satisfy the known parameter relations. Thus, the data-processing region, the club-sandwiched subfamily, and the duality transformation are each uniquely characterized by their respective universal properties.
\end{abstract}

\maketitle

\section{Introduction}

Conditional R\'enyi entropies quantify uncertainty about a quantum system in the presence of correlated side information. They are particularly useful in non-asymptotic information theory and in the analysis of error and strong converse exponents. A three-parameter family $H_{\alpha,z}^\lambda$ was recently introduced in \cite{rubboli2026quantum} which unifies a large number of previously studied classical and quantum conditional entropies. This family has already appeared in operational applications to privacy amplification, soft covering, and composable randomness extraction \cite{li2025two,rubboli2026strong}.

We work throughout with
\begin{equation}\label{parameter space}
    \alpha > 0, \qquad \alpha \neq 1, \qquad z > 0, \qquad \lambda \in \mathbb{R}.
\end{equation}

For a bipartite state $\rho_{AB}$, with marginal $\rho_B$, the quantity $H_{\alpha,z}^\lambda(A|B)_\rho$ is defined by
\begin{equation}\label{definition}
    H_{\alpha,z}^{\lambda}(A|B)_\rho = \frac{1}{1-\alpha} \opt_{\sigma_B} \log \Tr\left(\rho_{AB}^{\frac{\alpha}{2z}}\left(I_A\otimes\rho_B^{\frac{(1-\lambda)(1-\alpha)}{2z}}\sigma_B^{\frac{\lambda(1-\alpha)}{z}}\rho_B^{\frac{(1-\lambda)(1 - \alpha)}{2z}}\right)\rho_{AB}^{\frac{\alpha}{2z}}\right)^z,
\end{equation}
where $\opt$ is an infimum (resp.\ supremum) when $\lambda(1-\alpha)$ is negative (resp.\ non-negative). The family satisfies data processing, additivity, duality, chain rules, and parameter monotonicity in suitable parameter ranges \cite{rubboli2026quantum}. It has also found direct applications in information-theoretic error-exponent problems. In \cite{li2025two}, the authors used its classical specialization to characterize strong-converse exponents for privacy amplification and soft covering under R\'enyi divergence criteria. More recently, it was shown that they determine the exact strong converse exponent for composable randomness extraction against quantum side information \cite{rubboli2026strong} as well as the strong converse exponent for the entanglement cost of quantum state merging \cite{berta2026strong}. These developments make it important to determine whether the structural properties and parameter regions established in \cite{rubboli2026quantum} are optimal.

The present work addresses three such questions concerning data processing, monotonicity under coarse-graining, and duality. The data-processing theorem in \cite{rubboli2026quantum} is stronger than monotonicity under a channel on $B$ alone. An admissible conditioned operation has the form
\begin{equation}\label{conditional operation}
    \mathcal{E}_{AB \rightarrow A'B'}(\cdot) = \sum_i \mathcal{M}_{A \rightarrow A'}^i \otimes \mathcal{N}_{B \rightarrow B'}^i(\cdot),
\end{equation}
where every $\mathcal{M}^i$ is a sub-unital channel and $\{\mathcal{N}^i\}_i$ is a quantum instrument. We say that $H_{\alpha,z}^\lambda$ satisfies the \emph{data-processing inequality} (DPI) if
\begin{equation}\label{data processing}
    H_{\alpha,z}^\lambda(A|B)_\rho \leq H_{\alpha,z}^\lambda(A'|B')_{\mathcal{E}(\rho)},
\end{equation}
for every state and every operation \eqref{conditional operation}. A channel on the side information is the special case with one summand and $\mathcal{M} = \id_A$. Moreover, the class of transformations in \eqref{conditional operation} contains one-way LOCC protocols form $B$ to $A$ in which the outcome of an instrument on $B$ determines a conditional mixing channel (for example, a unitary channel) on $A$.

The region in which \eqref{data processing} was established in \cite[Theorem 5.2]{rubboli2026quantum} is $\mathcal{D} = \mathcal{D}_1 \cup \mathcal{D}_2$, where
\begin{align}
\mathcal{D}_1 &:= \left\{ (\alpha, z, \lambda) \in \mathbb{R}^3 : 0 < \alpha < 1, \quad 1-z \leq \alpha \leq z, \quad 1-\frac{z}{1-\alpha} \leq \lambda \leq 1 \right\},\label{cconstraints1}\\
\mathcal{D}_2 &:= \left\{ (\alpha, z, \lambda) \in \mathbb{R}^3 : 1 < \alpha < \infty, \quad \alpha-1 \leq z \leq \alpha \leq 2z, \quad 1+\frac{z}{1-\alpha} \leq \lambda \leq 1 \right\}.\label{constraints2}
\end{align}

Our first main result (Theorem \ref{Sharp DPI region}) shows that this region is sharp, as conjectured in \cite[Section 13]{rubboli2026quantum}. More precisely, universal validity of \eqref{data processing} forces $(\alpha,z,\lambda) \in \mathcal{D}$; combined with the sufficiency result \cite[Theorem 5.2]{rubboli2026quantum}, this gives an exact characterization of the DPI region. Thus, none of the constraints defining $\mathcal{D}$ is an artifact of the proof of sufficiency.

Determining the exact data-processing region of parametrized quantum information measures has an important precedent in the theory of quantum R\'enyi divergences. For the $\alpha$-$z$ R\'enyi divergences introduced by Audenaert and Datta \cite{audenaert13_alphaz}, substantial parts of the region were obtained from convexity and concavity properties of the underlying trace functionals, building in particular on work of Hiai \cite{hiai2013concavity,hiai2016concavity}. Further progress was made by Carlen, Frank, and Lieb \cite{carlen2016some,carlen18_alphaz}, before Zhang \cite{zhang20_alphaz} established the exact $(\alpha,z)$ data-processing region. As we explain next, this characterization is a key ingredient in determining the corresponding region for the three-parameter conditional entropies.

A central ingredient is a transfer principle (Theorem \ref{transfer}) that may be of independent interest: data processing for the conditional entrpy forces data processing for the corresponding $\alpha$-$z$ divergence. Recall \cite{audenaert13_alphaz} that
\begin{equation}
    D_{\alpha,z}(\rho\|\sigma) = \frac{1}{\alpha-1}\log{\Trm\big(\rho^\frac{\alpha}{2z}\sigma^\frac{1-\alpha}{z}\rho^\frac{\alpha}{2z}\big)^z}.
\end{equation}
We prove that if $H_{\alpha,z}^\lambda$ satisfies \eqref{data processing} for every state and every channel on the conditioning system, then
\begin{equation}
    D_{\alpha,z}\big(\mathcal{N}(\rho)\|\mathcal{N}(\sigma)\big) \leq D_{\alpha,z}(\rho\|\sigma)
\end{equation}
for every pair of states and every quantum channel $\mathcal{N}$. The conclusion is striking because the conditional entropy depends on the additional parameter $\lambda$ and explicitly involves the state's own marginal $\rho_B$, whereas the divergence contains neither. The proof uses a small-parameter family of classical--quantum states whose marginal on the conditioning system is fixed. The first nontrivial coefficient in the optmized conditional trace functional recovers the $\alpha$-$z$ trace functional; a decomposition argument then isolates an arbitrary pair $\rho,\sigma$. Zhang's exact characterization of data processing for $D_{\alpha,z}$ \cite{zhang20_alphaz} consequently yields the sharp restrictions on $z$. The remaining bounds on $\lambda$ follow from classical restrictions of \eqref{definition} together with monotonicity under pure-state transformations.

Our second main result (Theorem \ref{coarse graining thm}) concerns a complementary form of data processing, in which the classical conditioned system rather than the conditioning system is processed. This result gives a structural characterization of the club-sandwiched family through a natural compatibility requirement with classical post-processing. Consider a classical--quantum state $\rho_{XB}$ and a deterministic function $f$ acting of the classical register $X$, with $Z = f(X)$. Such a map merges classical outcomes and therefore discards informaton contained in $X$. We say that a conditional entropy satisfies universal coarse-graining if this loss of classical information can only decrease the conditional entropy,
\begin{equation}
    H_{\alpha,z}^\lambda(X|B)_\rho \geq H_{\alpha,z}^\lambda(Z|B)_{(R_f\otimes\operatorname{id}_B)(\rho)},
\end{equation}
for every finite classical--quantum state $\rho_{XB}$ and every deterministic function $f$, where $R_f$ is the classical channel induced by $f$.

We show that, within the full data-processing region $\mathcal{D}$, this property holds universally if and only if $z = \alpha$. Thus among all conditional entropies $H_{\alpha,z}^\lambda$ satisfying data processing, the club-sandwiched surface is uniquely distinguished by monotonicity under deterministic coarse-graining of the classical system. In this sense, the condition $z = \alpha$ is not merely a convenient algebraic specialization, but exactly the one enforced by universal compatibility with coarse-garining. Furthermore, combined with the duality theorem, this characterization has a particularly sharp consequence. Suppose that $H_{\alpha,z}^\lambda$ and its dual $H_{\hat{\alpha},\hat{z}}^{\hat{\lambda}}$ both satisfy the coarse-graining inequality. The present result then yields $z = \alpha$ and $\hat{\alpha} = \hat{z}$, while the duality relations \eqref{parameter relation} immediately imply $\lambda = \hat{\lambda} = 1$ and $1/\alpha + 1/\hat{\alpha} =2$. Thus, the optimized sandwiched conditional R\'enyi etropies $\widetilde{H}_\alpha^\uparrow$ are precisely the members of the three-parameter family for which coarse-graining holds simultaneously for the entropy and its dual. These entropies were introduced in \cite{lennert13_renyi}, and their structural properties were subsequently developed in \cite{beigi13_sandwiched,tomamichel13_duality,dupuis2015chain,beigi2023operator}. They have found important operational applications in strong-converse analyses of quantum information-processing tasks \cite{leditzky2016strong}, entropy accumulation and quantum cryptography \cite{dupuis20_entropyacc}, and, more recently, the exact characterization of the strong-converse exponent for quantum privacy amplification \cite{li2024operational}; see also \cite{berta2026tight} for a recent application.

Finally, the third main result of the paper (Theorem \ref{duality-converse}) establishes rigidity of entropic duality. The duality theorem of \cite{rubboli2026quantum} states that, for parameter triples $(\alpha,z,\lambda),(\hat{\alpha},\hat{z},\hat{\lambda}) \in \mathcal{D}$ satisfying
\begin{equation}\label{parameter relation}
    \frac{z}{1-\alpha}+\frac{\hat{z}}{1-\hat{\alpha}}=0, \qquad \frac{1-z}{1-\alpha}=\hat{\lambda}, \qquad \frac{1-\hat{z}}{1-\hat{\alpha}}=\lambda,
\end{equation}
one has
\begin{equation}\label{duality relation}
H_{\alpha,z}^{\lambda}(A|B)_\rho + H_{\hat{\alpha},\hat{z}}^{\hat{\lambda}}(A|C)_\rho = 0
\end{equation}
for every pure state $\rho_{ABC}$. We prove the converse: if \eqref{duality relation} holds for every finite-dimensional tripartite pure state, then all three relations in \eqref{parameter relation} are necessary. Hence the duality map of \cite{rubboli2026quantum} is rigid; there are no additional universal duality pairings hidden inside the three-parameter family.

The rigidity proof is driven by two complementary classes of test states. A bipartite pure state tensored with a trivial third system reduces the identity to equality of ordinary R\'enyi entropies and determines two combinations of the parameters. A second family of pure states, chosen so that one reduced state has a classical block structure while the complementary reduction contains an entangled block, supplies the remaining independent relation.


\section{Preliminaries}

Throughout, all Hilbert spaces are finite-dimensional. We write $\mathcal{S}(A)$ for the quantum states on $A$, and $\mathcal{S}(A)_{++}$ for faithful states.

For a state $\rho$ and $\alpha \in (0,\infty) \backslash \{1\}$, its R\'enyi entropy is
\begin{equation}
H_{\alpha}(\rho) := \frac{1}{1-\alpha} \log\Tr(\rho^{\alpha}).
\end{equation}

We use the continuous extensions $H_0(\rho) = \log\rank\rho$, $H_1(\rho) = -\Tr \rho\log\rho$, and $H_\infty(\rho) = -\log\|\rho\|_\infty$. For a faithful state the same formula defines $H_\alpha$ for every real $\alpha$.

The trace functionals used below are
\begin{align}
    \Phi_{p,q,r,s}(A,B,C) &:= \Tr\left(A^{p/2} B^{q/2} C^r B^{q/2} A^{p/2}\right)^s,\\
    \Psi_{a,b}(A\|B) &:= \Tr\left(A^{a/2b} B^{(1 - a)/b} A^{a/2b}\right)^b.
\end{align}
Thus $D_{a,b}(A\|B) = 1/(a - 1) \log\Psi_{a,b}(A\|B)$ for normalized states. With
\begin{equation}\label{parameters2}
    p = \frac{\alpha}{z}, \qquad q = \frac{(1 - \lambda)(1 - \alpha)}{z}, \qquad r = \frac{\lambda(1 - \alpha)}{z}, \qquad s = z,
\end{equation}
the trace functional in \eqref{definition} is $\Phi$ after inserting $\rho_B$ and $\sigma_B$ as $I_A \otimes \rho_B$ and $I_A \otimes \sigma_B$.

\subsection{Auxiliary Lemmas}

\begin{lemma}\label{variational optimization}
Let $Y \geq 0$ and let the optimization be over density matrices $\sigma$, with $Y \ll \sigma$ when $\mu<0$. Then
\begin{equation}\label{Schatten duality}
\opt_{\sigma}\Tr(Y\sigma^\mu) =
    \begin{cases}
        \left(\Tr Y^{\frac{1}{1-\mu}}\right)^{1-\mu}, & \quad \mu < 1, \\
        \|Y\|_\infty, & \quad \mu \geq 1,
    \end{cases}
\end{equation}
where $\opt$ is an infimum for $\mu<0$ and a supremum for $\mu\geq0$. If $Y > 0$ and $\mu < 1$, the optimizer is $\sigma_* = Y^{1/(1 - \mu)}/\Tr Y^{1/(1 - \mu)}$ and is unique.
\end{lemma}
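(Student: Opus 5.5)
We prove the statement by reducing it to the variational characterization of Schatten (quasi-)norms, that is, Hölder's inequality and its reverse. Set $p := 1/(1-\mu)$, defined when $\mu<1$.

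\emph{Case $\mu\ge 1$.} Since $0\le\sigma\le I$ with $\Tr\sigma=1$, the eigenvalues of $\sigma$ lie in $[0,1]$. Hence $\sigma^\mu\le\sigma$, and
\[
\Tr(Y\sigma^\mu)\le\|Y\|_\infty\Tr\sigma^\mu\le\|Y\|_\infty .
\]
Equality is attained by taking $\sigma=\ket{v}\!\bra{v}$ with $v$ a top eigenvector of $Y$.

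\emph{Case $0\le\mu<1$.} Here $p\ge1$, and the supremum is taken. Let $p'$ be the Hölder conjugate of $p$, so $1/p+1/p'=1$; then $1/p'=\mu$. Hölder's inequality gives
\[
\Tr(Y\sigma^\mu)\le\|Y\|_{p}\,\|\sigma^\mu\|_{p'}=\|Y\|_p\,(\Tr\sigma)^{\mu}=\|Y\|_p .
\]
Note that $\|Y\|_p=(\Tr Y^{1/(1-\mu)})^{1-\mu}$. Plugging in $\sigma_*=Y^p/\Tr Y^p$ gives
\[
\Tr(Y\sigma_*^\mu)=\frac{\Tr Y^{1+\mu p}}{(\Tr Y^p)^\mu}=(\Tr Y^p)^{1-\mu},
\]
because $1+\mu p=p$.

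\emph{Case $\mu<0$.} Now $0<p<1$ and the infimum is taken. We use the reverse Hölder inequality for positive operators, $\Tr(XW)\ge\|X\|_p\,\|W\|_{p'}$ with $p'<0$ and $W>0$ on the relevant support. The cleanest route is to diagonalize $\sigma=\sum_j s_j\ket{j}\!\bra{j}$ and write
\[
\Tr(Y\sigma^\mu)=\sum_j y_{jj}\,s_j^\mu .
\]
Then apply two inequalities in turn. First, the scalar reverse Hölder inequality gives $\sum_j y_{jj}s_j^\mu\ge\bigl(\sum_j y_{jj}^p\bigr)^{1/p}\bigl(\sum_j s_j\bigr)^{\mu}$, since $p'\mu=1$. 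Second, since $0<p<1$, the map $t\mapsto t^p$ is operator concave, so $\sum_j y_{jj}^p\ge\Tr Y^p$ by concavity of the diagonal pinching. Together these yield the lower bound $(\Tr Y^p)^{1/p}=(\Tr Y^p)^{1-\mu}$, and the same $\sigma_*$ attains it.

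\emph{Uniqueness.} This is the main delicate point. Assume $Y>0$ and $\mu<1$. For $\mu\in(0,1)$, the map $\sigma\mapsto\Tr(Y\sigma^\mu)$ is strictly concave on density matrices, because $t\mapsto t^\mu$ is strictly operator concave and $Y>0$. For $\mu<0$, it is strictly convex, since $t^\mu$ is strictly operator convex on $(0,\infty)$ for $-1\le\mu<0$. For $\mu<-1$, operator convexity fails, so we argue via the equality conditions above instead: equality in the pinching step forces $Y$ to commute with $\sigma$, and equality in the scalar Hölder step then forces $s_j\propto y_j^{p}$. The same equality-case analysis handles $\mu\in[0,1)$ as well. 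Alternatively, for $\mu\in(0,1)$ one can use the stationarity condition on the interior of the simplex together with strict concavity. For $\mu=0$ the objective is $\Tr Y$, independent of $\sigma$, so uniqueness must be read accordingly; indeed $\sigma_*=Y/\Tr Y$ is then only one of many optimizers.
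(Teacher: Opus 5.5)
Your proof is correct, and its skeleton matches the paper's. Both use H\"older for $0\le\mu<1$, a reverse H\"older bound for $\mu<0$, and the estimate $Y\le\|Y\|_\infty I$ with $\Tr\sigma^\mu\le 1$ for $\mu\ge 1$, followed by direct evaluation at $\sigma_*$.

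Where you genuinely differ is the case $\mu<0$ and the uniqueness claim. The paper simply invokes the noncommutative reverse H\"older inequality and ``the standard equality case''. You instead work in an eigenbasis of $\sigma$ and apply the scalar reverse H\"older inequality. With $p'=1/\mu$, the $\sigma$-factor there is $(\Tr\sigma)^\mu=1$. You then finish with $\sum_j y_{jj}^p\ge\Tr Y^p$. This is more elementary and makes the equality analysis transparent:
\begin{itemize}
\item equality in the second step forces the chosen eigenbasis of $\sigma$ to diagonalize $Y$;
\item equality in the first step then forces $s_j\propto y_j^p$.
\end{itemize}

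Two simplifications are available. First, the pinching step needs only ordinary concavity of $t\mapsto t^p$ (Jensen on each diagonal entry, or Schur--Horn), not operator concavity. Second, this equality-case argument already covers every $\mu<0$. It also covers $\mu\in(0,1)$ if you run the upper bound in the same eigenbasis with the strictly convex map $t\mapsto t^p$, $p>1$. So the detour through strict operator convexity, and the split at $\mu=-1$, are unnecessary. In the sentence claiming the same analysis handles $[0,1)$, you should write $(0,1)$, since you correctly exclude $\mu=0$ right afterwards.

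Your remark about $\mu=0$ is correct and points to a real, if harmless, flaw in the statement. For $Y>0$, every faithful $\sigma$ attains $\Tr Y$, so uniqueness holds only for $\mu\in(-\infty,0)\cup(0,1)$. The paper's ``by continuity'' recovers the value at $\mu=0$ but not uniqueness. The later arguments only use the optimal value supplied by this lemma, so nothing downstream is affected.
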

\begin{proof}
    For $0 < \mu < 1$, H\"older's inequality gives
    \begin{equation}
        \Tr(Y \sigma^\mu) \leq \left(\Tr Y^{1/(1 - \mu)}\right)^{1 - \mu} (\Tr \sigma)^\mu,
    \end{equation}
    with equality at the stated $\sigma_*$. The case $\mu < 0$ is the corresponding reverse H\"older inequality; substituting the same $\sigma_*$ shows that the lower bound is attained. These are the standard equality case of noncommutative H\"older. For $\mu  = 0$ the formula follows by continuity. Finally, if $\mu \geq 1$, then $Y \leq \|Y\|_\infty I$ and $\Tr \sigma^\mu \leq 1$, so $\Tr(Y \sigma^\mu) \leq \|Y\|_\infty$. A rank-one state supported on a maximal eigenvector of $Y$ attains equality.
\end{proof}

\begin{lemma}\label{Schatten factorization lemma}
    Let $X$ be an arbitrary matrix and let $p,s,r > 0$ satisfy $1/r = 1/p + 1/s$. Then
    \begin{equation}\label{Schatten factorization}
        \inf_{\sigma \in \mathcal{S}_{++}} \left\|\sigma^{-1/s} X\right\|_p = \|X\|_r.
    \end{equation}
\end{lemma}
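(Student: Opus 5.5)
Write $X = \sigma^{1/s}(\sigma^{-1/s}X)$ for any faithful $\sigma$. The lower bound will come from the generalized Hölder inequality, and the upper bound from an explicit optimizer built out of the polar decomposition of $X$.

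\textbf{Lower bound.} Since $1/r = 1/p + 1/s$, Hölder's inequality for Schatten (quasi-)norms, valid for all positive exponents, gives
\[
\|X\|_r \le \|\sigma^{1/s}\|_s\,\|\sigma^{-1/s}X\|_p .
\]
Moreover $\|\sigma^{1/s}\|_s = (\Tr\sigma)^{1/s} = 1$ for a density matrix. Hence every faithful $\sigma$ satisfies $\|\sigma^{-1/s}X\|_p \ge \|X\|_r$, and the infimum is at least $\|X\|_r$.

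\textbf{Upper bound.} Let $|X^*| = (XX^*)^{1/2}$ denote the left modulus.

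First suppose $X$ has full rank. Take
\[
\sigma_* = \frac{|X^*|^{r}}{\Tr|X^*|^{r}} ,
\]
which is faithful. Then
\[
\big(\sigma_*^{-1/s}X\big)\big(\sigma_*^{-1/s}X\big)^* = c\,|X^*|^{-r/s}\,|X^*|^2\,|X^*|^{-r/s}, \qquad c = (\Tr|X^*|^r)^{2/s} .
\]
Since all factors commute, this equals $c\,|X^*|^{2-2r/s}$. Using $1 - r/s = r/p$, we get $\sigma_*^{-1/s}X = c^{1/2}\,|X^*|^{r/p}$ up to a unitary factor. Therefore
\[
\|\sigma_*^{-1/s}X\|_p^p = c^{p/2}\,\Tr|X^*|^{r} .
\]
Since $c^{p/2} = (\Tr|X^*|^r)^{p/s}$, this gives
\[
\|\sigma_*^{-1/s}X\|_p = (\Tr|X^*|^r)^{1/s + 1/p} = (\Tr|X^*|^r)^{1/r} = \|X\|_r .
\]
So the bound is attained.

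\textbf{Singular case.} This is the main obstacle, since the natural optimizer is then not faithful. Use the regularization
\[
\sigma_\epsilon = (1-\epsilon)\sigma_* + \epsilon\,\frac{P^\perp}{\Tr P^\perp},
\]
where $\sigma_*$ is defined as above on the support projection $P$ of $|X^*|$. Because $\sigma_\epsilon$ is block diagonal with respect to $P$ and $P^\perp$, and the range of $X$ lies inside $P$, we have $\sigma_\epsilon^{-1/s}X = (1-\epsilon)^{-1/s}\sigma_*^{-1/s}X$. This quantity tends to $\|X\|_r$ as $\epsilon \to 0$, so the infimum equals $\|X\|_r$, though it is not attained.

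In summary, the key steps are Hölder's inequality with $\Tr\sigma = 1$ for the lower bound, and the choice $\sigma \propto (XX^*)^{r/2}$, regularized on the kernel if necessary, for the upper bound.
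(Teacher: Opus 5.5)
Your proof is correct and follows the paper's argument. You use the same generalized H\"older lower bound with $\|\sigma^{1/s}\|_s = 1$, and the same optimizer $\sigma_* \propto (XX^\dagger)^{r/2}$ together with the identity $1 - r/s = r/p$. Your explicit regularization $\sigma_\epsilon$, which uses that $P^\perp X = 0$ so that $\sigma_\epsilon^{-1/s}X = (1-\epsilon)^{-1/s}\sigma_*^{-1/s}X$, makes precise the paper's one-line ``approximation'' remark for noninvertible $X$; only the trivial case $X = 0$ needs a separate word.
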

\begin{proof}
    Assume first that $X$ is invertible. For every faithful density matrix $\sigma$, generalized H\"older gives
    \begin{equation}
        \|X\|_r = \left\|\sigma^{1/s} \sigma^{-1/s} X\right\|_r \leq \left\|\sigma^{1/s}\right\|_s \left\|\sigma^{-1/s} X\right\|_p = \left\|\sigma^{-1/s} X\right\|_p,
    \end{equation}
    which proves the lower bound. Put $Y = XX^\dagger$ and $T = \Tr Y^{r/2}$. For $\sigma_* = Y^{r/2}/T$, the polar decomposition of $X$ gives
    \begin{equation}
        \sigma_*^{-1/s} X = T^{1/s} Y^{\frac{1}{2} - \frac{r}{2s}} U = T^{1/s} Y^{\frac{r}{2p}} U.
    \end{equation}
    Therefore,
    \begin{equation}
        \left\|\sigma_*^{-1/s} X\right\|_p^p = T^{p/s} \Tr Y^{r/2} = T^{p/r} = \|X\|_r^p,
    \end{equation}
    which proves \eqref{Schatten factorization}. Approximation gives the same identity for noninvertible $X$ whenever both sides are finite.
\end{proof}

\begin{lemma}\label{optimized Psi}
    Let $\alpha,z > 0$ and let $\omega$ be faithful. If $0 < \alpha < 1$, then
    \begin{equation}
        \sup_{\sigma \in \mathcal{S}} \Psi_{\alpha,z}(\omega\|\sigma) =1,
    \end{equation}
    whereas, if $\alpha > 1$, then
    \begin{equation}
        \inf_{\sigma \in \mathcal{S}_{++}} \Psi_{\alpha,z}(\omega\|\sigma) = 1.
    \end{equation}
    In both cases the optimizer is uniquely $\sigma = \omega$.
\end{lemma}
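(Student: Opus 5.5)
The plan is to rewrite $\Psi_{\alpha,z}(\omega\|\sigma)$ so that Lemma \ref{variational optimization} or Lemma \ref{Schatten factorization lemma} applies directly. By the definition, $\Psi_{\alpha,z}(\omega\|\sigma)=\Tr\big(\omega^{\alpha/2z}\sigma^{(1-\alpha)/z}\omega^{\alpha/2z}\big)^z$. Put $X=\sigma^{(1-\alpha)/2z}\omega^{\alpha/2z}$. Then
\begin{equation*}
\Psi_{\alpha,z}(\omega\|\sigma)=\Tr(X^\dagger X)^z=\|X\|_{2z}^{2z}.
\end{equation*}
So the statement reduces to a Schatten-norm optimization of $\big\|\sigma^{(1-\alpha)/2z}\omega^{\alpha/2z}\big\|_{2z}$ over states $\sigma$, with $\omega$ fixed. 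The two cases $\alpha<1$ and $\alpha>1$ are treated separately.

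\emph{Case $\alpha>1$.} Here $\sigma^{(1-\alpha)/2z}=\sigma^{-1/s}$ with $s=2z/(\alpha-1)>0$. Apply Lemma \ref{Schatten factorization lemma} with $X=\omega^{\alpha/2z}$, $p=2z$ and $1/r=1/(2z)+(\alpha-1)/(2z)=\alpha/(2z)$, i.e.\ $r=2z/\alpha$. This gives
\begin{equation*}
\inf_{\sigma\in\mathcal{S}_{++}}\big\|\sigma^{-1/s}\omega^{\alpha/2z}\big\|_{2z}=\big\|\omega^{\alpha/2z}\big\|_{2z/\alpha}=(\Tr\omega)^{\alpha/2z}=1.
\end{equation*}
The optimizer constructed in that proof is $\sigma_*\propto (XX^\dagger)^{r/2}=\omega^{\alpha/z\cdot z/\alpha}=\omega$. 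For uniqueness, note that equality in the generalized H\"older step $\|\sigma^{1/s}Y\|_r\le\|\sigma^{1/s}\|_s\|Y\|_p$ forces $|Y^\dagger|^p\propto\sigma$, where $Y=\sigma^{-1/s}X$. This makes $\sigma^{-1/s}\omega^{\alpha/z}\sigma^{-1/s}$ a power of $\sigma$, so $\sigma$ commutes with $\omega$, and then the classical equality condition gives $\sigma=\omega$.

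\emph{Case $\alpha<1$.} The exponent $(1-\alpha)/2z$ is now positive, so an upper bound from H\"older is needed. Take exponents $1/(2z)=\alpha/(2z)+(1-\alpha)/(2z)$. Then
\begin{equation*}
\|\sigma^{(1-\alpha)/2z}\omega^{\alpha/2z}\|_{2z}\le\|\sigma^{(1-\alpha)/2z}\|_{2z/(1-\alpha)}\,\|\omega^{\alpha/2z}\|_{2z/\alpha}=(\Tr\sigma)^{(1-\alpha)/2z}(\Tr\omega)^{\alpha/2z}=1,
\end{equation*}
with equality at $\sigma=\omega$, which gives $\Tr\omega^{\alpha+1-\alpha}=1$. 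This settles the supremum; $\sigma$ need not be faithful here, and the bound holds for all $\sigma$. Uniqueness again comes from the equality case of H\"older for Schatten norms. Equality in $\|AB\|_r\le\|A\|_{p_1}\|B\|_{p_2}$ with $A,B\ge0$ forces $A^{p_1}$ and $B^{p_2}$ to be proportional, giving $\sigma\propto\omega$ and hence $\sigma=\omega$ by normalization.

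The main obstacle is the uniqueness claim, since it relies on the equality conditions in noncommutative (generalized and reverse) H\"older inequalities, and these are less standard than the inequalities themselves. To handle it, I would first try to reduce to the commuting case: show that equality forces $\sigma$ and $\omega$ to commute, for instance via strict versions of the Araki–Lieb–Thirring-type log-majorization. I would then finish with the strict convexity/concavity of the classical Rényi divergence, since the classical divergence $D_\alpha(\omega\|\sigma)\ge0$ vanishes only at $\sigma=\omega$.

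Alternatively, I would lean on positivity of $D_{\alpha,z}$, i.e.\ $D_{\alpha,z}(\omega\|\sigma)\ge0$ with equality iff $\sigma=\omega$. That is exactly the statement rewritten, but I would prefer the self-contained H\"older route above.
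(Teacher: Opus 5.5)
Your proposal is correct and follows the paper's proof essentially verbatim: the same rewriting $\Psi_{\alpha,z}(\omega\|\sigma)=\|\sigma^{(1-\alpha)/2z}\omega^{\alpha/2z}\|_{2z}^{2z}$, the same H\"older bound for $0<\alpha<1$, and the same application of Lemma~\ref{Schatten factorization lemma} with $p=2z$, $s=2z/(\alpha-1)$, $r=2z/\alpha$ for $\alpha>1$. Your explicit uniqueness check for $\alpha>1$ is a useful detail that the paper leaves implicit, since it only points to the constructed optimizer; there, equality in the H\"older step forces $\omega^{\alpha/z}\propto\sigma^{2/s+1/z}=\sigma^{\alpha/z}$.
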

\begin{proof}
    Write
    \begin{equation}
        \Psi_{\alpha,z}(\omega\|\sigma) = \left\|\sigma^{(1 - \alpha)/2z} \omega^{\alpha/2z}\right\|_{2z}^{2z}.
    \end{equation}
    For $0 < \alpha < 1$, generalized H\"older yields
    \begin{equation}
    \left\|\sigma^{(1 - \alpha)/2z} \omega^{\alpha/2z}\right\|_{2z} \leq \left\|\sigma^{(1 - \alpha)/2z}\right\|_{2z/(1 - \alpha)} \left\|\omega^{\alpha/2z}\right\|_{2z/\alpha} = 1.
    \end{equation}
    Equality is attained at $\sigma = \omega$. If $\omega > 0$, the equality condition in H\"older's inequality implies that $\sigma = \omega$ is the unique optimizer. If $\alpha > 1$, apply Lemma \ref{Schatten factorization lemma} with $p = 2z$, $s = 2z/(\alpha - 1)$, $r = 2z/\alpha$, and $X = \omega^{\alpha/2z}$. The infimum of the norm is $1$, and the optimizer in the proof of Lemma \ref{Schatten factorization lemma} is precisely $\omega$.
\end{proof}

\begin{lemma}\label{renyi-parameter-uniqueness lemma}
Let $\rho\in\mathcal{S}(A)$ be a quantum state whose nonzero eigenvalues are not all equal. Then $t \mapsto H_t(\rho)$ is strictly decreasing on $[0,\infty]$. If $\rho$ is faithful, the same assertion holds on the extended real line $[-\infty,\infty]$.
\end{lemma}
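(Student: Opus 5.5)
Since $H_t(\rho)$ depends only on the spectrum, I would fix the nonzero eigenvalues $p_1,\dots,p_k$ of $\rho$ (not all equal, so $k\ge 2$), regard them as a probability vector $p$, and reduce the claim to a statement about the classical R\'enyi entropy $H_t(p)=\frac{1}{1-t}\log\sum_i p_i^t$. The plan is to compute the derivative on the open set $t\neq 1$, show it is strictly negative there, and then use continuity of the extensions $H_0,H_1,H_\infty$ (and $H_{-\infty}$ in the faithful case) to obtain strict decrease on the closed extended intervals.

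For the derivative, set $F(t)=\log\sum_i p_i^t$ and define the escort distribution $q_i(t)=p_i^t/\sum_j p_j^t$. Then $F'(t)=\sum_i q_i\log p_i$, and a short computation gives
\begin{equation}
\frac{d}{dt}H_t(p)=\frac{F(t)+(1-t)F'(t)}{(1-t)^2}=-\frac{1}{(1-t)^2}\,D(q(t)\|p),
\end{equation}
where $D(q\|p)=\sum_i q_i\log(q_i/p_i)$ is the relative entropy. To verify this, note that $\log(q_i/p_i)=(t-1)\log p_i-F(t)$, so $D(q\|p)=(t-1)F'(t)-F(t)$. Because the $p_i$ are not all equal, $q(t)\neq p$ for every $t\neq1$: equality would force $p_i^{t-1}$ to be constant in $i$. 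By strict positivity of relative entropy, the derivative is therefore strictly negative for all real $t\neq1$, including negative $t$, since all $p_i>0$ on the support.

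Hence $H_t$ is strictly decreasing on each of $(-\infty,1)$ and $(1,\infty)$. It is continuous at $t=1$, where the limit equals the Shannon or von Neumann entropy $H_1$ by L'H\^opital, so it is strictly decreasing on all of $\mathbb{R}$. For the endpoints, $H_0=\log k$ is the limit as $t\to0$, and the function is continuous there. Also $H_t\to-\log p_{\max}=H_\infty$ as $t\to\infty$. Since a strictly decreasing function is strictly above its limit at infinity, $H_t>H_\infty$ for every finite $t$, which gives strict decrease on $[0,\infty]$.

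In the faithful case the same argument applies on $\mathbb{R}$, because then $\sum_i p_i^t$ runs over all eigenvalues and the formula makes sense for $t<0$. It remains to check that $H_t\to-\log p_{\min}=:H_{-\infty}$ as $t\to-\infty$, which follows since $\sum_i p_i^t\sim m\,p_{\min}^t$, with $m$ the multiplicity of $p_{\min}$. The only delicate point is justifying the endpoint conventions and strictness at the endpoints. The main step is the derivative identity, and I do not expect a genuine obstacle there.
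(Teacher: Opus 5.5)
Your proof is correct, and it differs from the paper's mainly in which inequality supplies the strictness. The paper also sets $F(t)=\log\sum_i p_i^t$, but instead of differentiating $H_t$ it computes $F''(t)$ as the variance of $\log p$ under the tilted distribution $w(t)$ (your escort $q(t)$). From this it concludes that $F$ is strictly convex. It then uses the standard fact that the chord slope $\bigl(F(t)-F(1)\bigr)/(t-1)$ of a strictly convex function is strictly increasing; since $H_t=-\bigl(F(t)-F(1)\bigr)/(t-1)$, monotonicity on both sides of $t=1$ follows at once. You instead differentiate $H_t$ directly, identify the numerator as $-D(q(t)\|p)$, and get strictness from Gibbs' inequality. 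The two arguments are two views of the same fact. Indeed, $D(q(t)\|p)=F(1)-F(t)-(1-t)F'(t)$ is exactly the gap in the tangent-line inequality for the convex function $F$ between $t$ and $1$, and $D(q(t)\|p)/(1-t)^2$ is precisely the derivative of the paper's chord slope. The paper's route is shorter and treats $t=1$ without a separate continuity step. Yours uses only first derivatives and gives an explicit information-theoretic formula for $\frac{d}{dt}H_t$. It also spells out the endpoint values $H_0=\log k$, $H_\infty=-\log p_{\max}$ and, in the faithful case, $H_{-\infty}=-\log p_{\min}$, and explains why strictness survives at $\pm\infty$, which the paper only asserts ``by continuity''.
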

\begin{proof}
    Let $p_1,\cdots,p_d$ bee the positive eigenvalues and put $F(t) = \log\sum_i p_i^t$. For the tilted distribution $w_i(t) = p_i^t/\sum_j p_j^t$,
    \begin{equation}
        F''(t) = \sum_i w_i(t) (\log p_i)^2 - \left(\sum_i w_i(t) \log p_i\right)^2 > 0,
    \end{equation}
    the inequality being strict because the $p_i$ are not all equal. Hence the secant slope $\big(F(t) - F(1)\big)/(t - 1)$ is strictly increasing in $t$. Since $F(1) = 0$ and $H_t = -\big(F(t) - F(1)\big)/(t - 1)$, the claim follows, including the endpoint orders by continuity.
\end{proof}

The following direct-sum rule is needed only in the duality section, where the parameters lie in $\mathcal{D}$ and the nondegenerate quantity $c = 1 - \lambda(1 - \alpha)$ is positive.
\begin{lemma}\cite[Proposition 11.1]{rubboli2026quantum}\label{classical register lemma}
Let $(\alpha,z,\lambda) \in \mathcal{D}$ and let
\begin{equation}
    \rho_{ABY} = \sum_y p_y \rho_{AB}^y \otimes \ketbra{y}{y}_Y.
\end{equation}
Set $\gamma = (1 - \alpha)/c$. Then
\begin{equation}
H_{\alpha,z}^{\lambda}(A|BY)_{\rho} = \frac{1}{\gamma} \log \sum_{y} p_y \exp\left(\gamma H_{\alpha,z}^{\lambda}(A|B)_{\rho^{y}}\right).
\end{equation}
\end{lemma}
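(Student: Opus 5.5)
The plan is to exploit the block-diagonal structure of $\rho_{ABY}$ in the classical register $Y$. This splits the trace functional in \eqref{definition} into a sum over $y$, after which the optimization over $\sigma_{BY}$ decouples into an optimization over the weights of the blocks and over the blocks themselves. Both are handled by Lemma \ref{variational optimization}.

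\textbf{Step 1 (block decomposition).} Since $\rho_{ABY}=\bigoplus_y p_y\rho^y_{AB}$ and $\rho_{BY}=\bigoplus_y p_y\rho^y_B$, every fractional power of these operators is again block diagonal. With the parameters \eqref{parameters2} and $X_y := (p_y\rho^y_{AB})^{p/2}(I_A\otimes (p_y\rho^y_B)^{q/2})$, the operator inside the trace in \eqref{definition} has diagonal blocks $X_y\,(I_A\otimes(\sigma^r)_{yy})\,X_y^\dagger$. Its off-diagonal blocks vanish, because they are sandwiched between block-diagonal operators with orthogonal supports. Hence
\begin{equation}
\Phi = \sum_y \Tr\Big(X_y\,(I_A\otimes(\sigma^r)_{yy})\,X_y^\dagger\Big)^z .
\end{equation}

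\textbf{Step 2 (reduction to block-diagonal $\sigma$).} Next I must show that the optimum is attained at $\sigma_{BY}=\bigoplus_y q_y\sigma^y_B$. For such $\sigma$ one has $(\sigma^r)_{yy}=(q_y\sigma_y)^r$. For general $\sigma$ I would compare $(\sigma^r)_{yy}$ with $(\sigma_{yy})^r$ using the operator Jensen inequality for the pinching in $Y$. When $r\in(0,1]$ the map $t\mapsto t^r$ is operator concave, so $(\sigma^r)_{yy}\le(\sigma_{yy})^r$; the functional is monotone in this argument, so a supremum can only improve by passing to the pinched $\sigma$. When $r\in[-1,0)$ the map is operator convex, giving the reverse inequality, which is the right direction for an infimum. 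For the remaining values of $r$ allowed in $\mathcal{D}$ I would instead rewrite the $\sigma$-optimization variationally (via Lemma \ref{Schatten factorization lemma} and Hölder duality) and check directly that the restriction to pinched $\sigma$ costs nothing.

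I expect this step to be the main obstacle: it needs the right monotonicity of $Y\mapsto\Tr(XYX^\dagger)^z$ together with the correct convexity of $t\mapsto t^r$ in each subcase of $\mathcal{D}$.

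\textbf{Step 3 (scalar bookkeeping and optimization).} For block-diagonal $\sigma$, pull out the scalars from each block. Block $y$ carries the factor
\begin{equation}
p_y^{\alpha+(1-\lambda)(1-\alpha)}\,q_y^{\lambda(1-\alpha)} = p_y^{c}\,q_y^{\mu}, \qquad \mu:=\lambda(1-\alpha)=1-c,
\end{equation}
multiplied by $\Phi(\rho^y_{AB},\rho^y_B,\sigma^y_B)$.

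First optimize each $\sigma^y_B$ separately, in the same direction as the overall $\opt$. This yields $a_y:=\exp\big((1-\alpha)H_{\alpha,z}^\lambda(A|B)_{\rho^y}\big)$.

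Then optimize over the probability vector $(q_y)$ using the classical (commuting) case of Lemma \ref{variational optimization} with $Y=\mathrm{diag}(p_y^c a_y)$. Since $\mu<1$ exactly because $c>0$, this gives
\begin{equation}
\Big(\sum_y (p_y^c a_y)^{1/c}\Big)^{c}=\Big(\sum_y p_y\, a_y^{1/c}\Big)^{c}.
\end{equation}

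Finally, take $\frac{1}{1-\alpha}\log$ of this expression and note that $a_y^{1/c}=\exp\big(\gamma H_{\alpha,z}^\lambda(A|B)_{\rho^y}\big)$. The prefactor becomes $c/(1-\alpha)=1/\gamma$, which gives the claimed formula.
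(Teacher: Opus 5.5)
The paper gives no proof of this lemma; it imports it from \cite[Proposition 11.1]{rubboli2026quantum}. Your argument therefore has to stand on its own.

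\textbf{What works.} Step 3 is correct. For block-diagonal $\sigma_{BY}=\bigoplus_y q_y\sigma_B^y$ the exponents $z(p+q)=c$ and $zr=1-c$ are right. The blockwise optimization and the classical H\"older step via Lemma~\ref{variational optimization} (with $\mu<1$ because $c>0$) are valid, and the prefactor $c/(1-\alpha)=1/\gamma$ comes out as claimed.

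\textbf{The gap: Steps 1--2.} Your assertion that the off-diagonal blocks vanish is false for a general $\sigma_{BY}$. The $(y,y')$ block of the operator $M$ inside the trace is $X_y\,(I_A\otimes(\sigma^r)_{yy'})\,X_{y'}^\dagger$. This is nonzero as soon as $\sigma^r$ has off-diagonal blocks, already when $A$ and $B$ are trivial. Block-diagonal outer factors do not remove the coupling introduced by the middle factor. Your displayed formula is therefore $\Tr\,\mathcal{P}_Y(M)^z$, with $\mathcal{P}_Y$ the pinching in $Y$, and not $\Tr M^z$.

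Consequently your operator-Jensen comparison of $(\sigma^r)_{yy}$ with $(\sigma_{yy})^r$ covers only the second half of the chain you need. The first half compares $\Tr M^z$ with $\Tr\,\mathcal{P}_Y(M)^z$. By majorization, $\Tr M^z\le\Tr\,\mathcal{P}_Y(M)^z$ for $z\le1$, and the reverse holds for $z\ge1$. This has the right sign only for (supremum, $z\le1$) and (infimum, $z\ge1$). Both other combinations occur in $\mathcal{D}$. For instance, $(\alpha,z,\lambda)=(1/2,2,1)$ is a supremum with $z>1$, and $(3/2,4/5,1)$ is an infimum with $z<1$. At such points your reduction to block-diagonal $\sigma$ is unproven.

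\textbf{The fix.} You need a concavity/convexity statement in $\sigma$, combined with twirling $\sigma$ rather than pinching $M$. For fixed $K$, the map $\sigma\mapsto\Tr(K^\dagger\sigma^rK)^z$ is concave for $0<r\le1$, $0<z\le1/r$, and convex for $-1\le r<0$, $z>0$ (see \cite{hiai2013concavity,carlen2016some}). Throughout $\mathcal{D}$ one has $-1\le r\le1$ and $rz=\lambda(1-\alpha)=1-c\le1$, so every case is covered. In particular, your fallback for the remaining values of $r$ is vacuous.

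To apply this, take $K=(I_A\otimes\rho_{BY}^{q/2})\rho_{ABY}^{p/2}$ and the phase unitaries $V_k=I_B\otimes\sum_y e^{2\pi i ky/|Y|}\ketbra{y}{y}_Y$. The $V_k$ commute with $\rho_{BY}$, and $I_A\otimes V_k$ commutes with $\rho_{ABY}$. Hence the objective is invariant under $\sigma\mapsto V_k\sigma V_k^\dagger$, and $\mathcal{P}_Y(\sigma)=\frac{1}{|Y|}\sum_kV_k\sigma V_k^\dagger$. Jensen's inequality in $\sigma$ then shows that replacing $\sigma$ by $\mathcal{P}_Y(\sigma)$ can only move the objective in the direction of the $\opt$.

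This single argument replaces both the false block identity and the operator-Jensen step. After it, your Step 1 formula is valid because $\sigma$ is now block diagonal, and Step 3 completes the proof. As the paper notes, this requires $c>0$. The boundary point $c=0$ of $\mathcal{D}$ is excluded, since $\gamma$ is undefined there.
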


We conclude this subsection with the following elementary optimization lemma, which will allow us, for a suitable family of classical-quantum states (see Proposition \ref{approximation lemma}), to express the optimized trace functional in \eqref{definition} asymptotically in an optimization-free form.

\begin{lemma}\label{optimization lemma}
    Let $K$ be compact, let $f,g : K \rightarrow \mathbb{R}$ be continuous, and suppose $x_0$ is the unique global optimum of $f$. Then
    \begin{equation}\label{optimization identity}
        \opt_{x \in K} \big(f(x) + \varepsilon g(x)\big) = f(x_0) + \varepsilon g(x_0) + o(\varepsilon) \qquad (\varepsilon \downarrow 0).
    \end{equation}
\end{lemma}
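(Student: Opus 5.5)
Treat the case of an infimum first; the supremum case follows by replacing $(f,g)$ with $(-f,-g)$, which turns $\sup$ into $-\inf$ and leaves both $x_0$ and the right-hand side of \eqref{optimization identity} unchanged. So let $x_0$ be the unique global minimizer of $f$, and put $m(\varepsilon) := \inf_{x\in K}\big(f(x)+\varepsilon g(x)\big)$. The plan is a squeeze: an upper bound from evaluating at $x_0$, and a lower bound from the fact that approximate minimizers must approach $x_0$ as $\varepsilon\downarrow 0$.

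\emph{Upper bound.} Evaluating the objective at $x_0$ gives immediately
\begin{equation}
m(\varepsilon)\le f(x_0)+\varepsilon g(x_0).
\end{equation}

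\emph{Lower bound.} By compactness and continuity the infimum is attained; let $x_\varepsilon$ be a minimizer, and set $M:=\max_K|g|<\infty$. Comparing with the upper bound yields
\begin{equation}
f(x_\varepsilon)\le f(x_0)+\varepsilon\big(g(x_0)-g(x_\varepsilon)\big)\le f(x_0)+2M\varepsilon,
\end{equation}
so $f(x_\varepsilon)\to f(x_0)$. We next claim $x_\varepsilon\to x_0$. If not, some subsequence stays outside a neighbourhood $U$ of $x_0$. By compactness it has a further subsequence converging to some $y\in K\setminus U$. Continuity then gives $f(y)=f(x_0)$, which contradicts the uniqueness of the minimizer. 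Since $f(x_\varepsilon)\ge f(x_0)$, we obtain
\begin{equation}
m(\varepsilon)=f(x_\varepsilon)+\varepsilon g(x_\varepsilon)\ge f(x_0)+\varepsilon g(x_0)-\varepsilon\,|g(x_\varepsilon)-g(x_0)|.
\end{equation}
Because $g$ is continuous and $x_\varepsilon\to x_0$, the last term is $\varepsilon\cdot o(1)=o(\varepsilon)$. Together with the upper bound this proves \eqref{optimization identity}.

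The only step with real content is showing $x_\varepsilon\to x_0$. It is exactly where compactness and uniqueness of the optimum are used. Without either hypothesis the first-order coefficient could be an optimum of $g$ over a set of near-optimal points rather than the single value $g(x_0)$.
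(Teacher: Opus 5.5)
Your proof is correct and follows essentially the same route as the paper's: optimizers $x_\varepsilon$ converge to $x_0$ by compactness and uniqueness, and the optimum is then squeezed between the value at $x_0$ and $f(x_0)+\varepsilon g(x_\varepsilon)$. The differences are cosmetic. You reduce the supremum case to the infimum via $(f,g)\mapsto(-f,-g)$, and you obtain $f(x_\varepsilon)\to f(x_0)$ from the explicit $2M\varepsilon$ bound before invoking uniqueness. The paper instead passes to the limit directly in the optimality inequality along a subsequence.
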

\begin{proof}
For each $\varepsilon > 0$, let $x_\varepsilon$ be an optimizer of the objective function in \eqref{optimization identity}, that is,
\begin{equation}
    \opt_{x \in K} \big(f(x) + \varepsilon g(x)\big) = f(x_\varepsilon) + \varepsilon g(x_\varepsilon).
\end{equation}
Such an optimizer exists by compactness of $K$ and continuity of the objective function. We first show that $x_\varepsilon \rightarrow x_0$. Let $\varepsilon_n \downarrow 0$, and let $x_{\varepsilon_{n_k}}$ be a subsequence of $x_{\varepsilon_n}$ converging to some $\bar{x} \in K$. In the maximization case, optimality gives, for every $x \in K$,
\begin{equation}
    f(x_{\varepsilon_{n_k}}) + \varepsilon_{n_k} g(x_{\varepsilon_{n_k}}) \geq f(x) + \varepsilon_{n_k} g(x).
\end{equation}
Passing to the limit and using continuity and boundedness of $g$, we obtain
\begin{equation}
    f(\bar{x}) \geq f(x), \qquad \text{for every} \ x \in K.
\end{equation}
Hence $\bar{x}$ maximizes $f$, and uniqueness implies $\bar{x} = x_0$. This means that $x_{\varepsilon_n} \rightarrow x_0$. Since this holds for every sequence $\varepsilon_n \downarrow 0$, it follows that $x_\varepsilon \rightarrow x_0$.

Next,
\begin{equation}
    f(x_0) + \varepsilon g(x_0) \leq \opt_{x \in K} \big(f(x) + \varepsilon g(x)\big) \leq f(x_0) + \varepsilon g(x_\varepsilon).
\end{equation}
dividing by $\epsilon$ after subtracting $f(x_0)$, and using $g(x_\varepsilon) \rightarrow g(x_0)$, gives
\begin{equation}
    \max_{x \in K} \big(f(x) + \varepsilon g(x)\big) = f(x_0) + \varepsilon g(x_0) + o(\varepsilon) \qquad (\varepsilon \downarrow 0).
\end{equation}
The minimum case follows by reversing the inequalities.
\end{proof}

\subsection{Conditional entropy of product states and pure states}
In this subsection, we derive explicit expressions for the conditional entropy of tensor-product states and pure states. For a tensor-product state, the conditional entropy reduces to the Rényi entropy of the marginal state on $A$, whereas for a pure state, it is given by the negative Rényi entropy of the marginal state on $B$. These reduction formulas will be used repeatedly in the subsequent sections.

The following lemma is an immediate consequence of the additivity of $H_{\alpha,z}^\lambda$ \cite[Theorem 4.2]{rubboli2026quantum}.
\begin{lemma}\label{product-state formula}
Assume $c > 0$, or assume $(\alpha,z,\lambda) \in \mathcal{D}$ and $c = 0$. If $\rho_{AB}=\rho_A\otimes\rho_B$, then
\begin{equation}\label{product-state}
H_{\alpha,z}^{\lambda}(A|B)_{\rho} = H_\alpha(\rho_A).
\end{equation}
In particular, if $\rho_A$ is pure, then $H_{\alpha,z}^\lambda(A|B)_\rho = 0$.
\end{lemma}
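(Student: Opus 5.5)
The plan is to substitute the product form directly into \eqref{definition}. With $\rho_{AB}=\rho_A\otimes\rho_B$ the whole trace functional factorizes, and the remaining optimization over $\sigma_B$ is an instance of Lemma~\ref{optimized Psi}. One could instead cite additivity \cite[Theorem 4.2]{rubboli2026quantum} with a trivial second system. That requires care about how additivity is stated, so I would present the direct computation and note the additivity argument as an alternative.

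\textbf{Step 1 (factorization).} We have
\[
\rho_{AB}^{\alpha/2z}=\rho_A^{\alpha/2z}\otimes\rho_B^{\alpha/2z},
\]
and the middle operator is $I_A\otimes M_B$ with
\[
M_B=\rho_B^{q/2}\sigma_B^{r}\rho_B^{q/2},
\]
where $q,r$ are as in \eqref{parameters2}. The operator inside the $z$-th power is therefore
\[
\rho_A^{\alpha/z}\otimes \rho_B^{\alpha/2z}M_B\rho_B^{\alpha/2z}.
\]
Since the trace of a tensor product of positive operators raised to the power $z$ factorizes, we get
\[
\Tr\big(\cdots\big)^z=\Tr\rho_A^\alpha\cdot \Tr\big(\rho_B^{\alpha/2z}M_B\rho_B^{\alpha/2z}\big)^z .
\]
The first factor yields $H_\alpha(\rho_A)$ after applying $\frac{1}{1-\alpha}\log$.

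\textbf{Step 2 (the $B$ part).} It remains to show that the optimum over $\sigma_B$ of the second factor equals $1$. Assume first that $\rho_B$ is faithful. All powers of $\rho_B$ commute, so the second factor is unitarily similar to
\[
\Tr\big(\sigma_B^{r/2}\rho_B^{(\alpha+q z)/z}\sigma_B^{r/2}\big)^z .
\]
Here $\alpha+qz=\alpha+(1-\lambda)(1-\alpha)=1-\lambda(1-\alpha)=c$, and $rz=\lambda(1-\alpha)=1-c$.

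Suppose $c>0$. Setting $\omega=\rho_B$, the expression becomes
\[
\Tr\big(\omega^{c/2z}\sigma^{(1-c)/z}\omega^{c/2z}\big)^z=\Psi_{c,z}(\omega\|\sigma).
\]
This holds because $\Tr(X^*X)^z=\Tr(XX^*)^z$ with $X=\omega^{c/2z}\sigma^{(1-c)/2z}$. By Lemma~\ref{optimized Psi}, applied with $c$ in place of $\alpha$, the optimum equals $1$. This uses two consistency checks:
\begin{itemize}
\item the direction of $\opt$ matches, since $\lambda(1-\alpha)=1-c<0$ if and only if $c>1$, which is the infimum case of the lemma;
\item if $c=1$ the functional is independent of $\sigma$ and already equals $\Tr\rho_B=1$.
\end{itemize}
Lemma~\ref{optimized Psi} is stated for $c\neq1$, so the case $c=1$ is handled by this direct observation. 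If $\rho_B$ is not faithful, restrict to its support; when the infimum is taken, $\sigma$ may be restricted there as well, by the support condition.

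\textbf{Step 3 (the case $c=0$ in $\mathcal{D}$).} The exponent on $\rho_B$ vanishes. The functional becomes
\[
\Tr\big(\sigma^{1/2z}\Pi\,\sigma^{1/2z}\big)^z,
\]
where $\Pi$ is the support projection of $\rho_B$, and it is maximized (since $r>0$). The bound
\[
\Tr(\sigma^{1/2z}\Pi\sigma^{1/2z})^z\le \Tr\sigma=1
\]
follows from the operator inequality $\Pi\le I$ together with monotonicity of $X\mapsto\Tr X^z$. Equality holds for $\sigma$ supported in $\Pi$.

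I expect this boundary case, together with the bookkeeping of $\opt$ directions and supports, to be the only delicate point. The hypothesis $(\alpha,z,\lambda)\in\mathcal{D}$ is what guarantees $z>0$ there and keeps the expression well defined.

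The final claim is immediate: if $\rho_A$ is pure, then $\Tr\rho_A^\alpha=1$, and hence $H_{\alpha,z}^\lambda(A|B)_\rho=0$.
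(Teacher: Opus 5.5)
Your proof is correct and is essentially the paper's argument: factor the functional into $\Tr\rho_A^\alpha\cdot\Psi_{c,z}(\rho_B\|\sigma_B)$ and apply Lemma~\ref{optimized Psi} with $c$ in place of $\alpha$, where the direction of the optimization matches because $\lambda(1-\alpha)=1-c$. You are more explicit than the paper at the edges: you handle $c=1$ directly, and you settle $c=0$ by $\sigma^{1/2z}\Pi\sigma^{1/2z}\le\sigma^{1/z}$ instead of citing additivity (an argument that never actually uses $(\alpha,z,\lambda)\in\mathcal{D}$); for non-faithful $\rho_B$ it is cleaner to note that the H\"older bounds behind Lemma~\ref{optimized Psi} do not need faithfulness of $\omega$ than to ``restrict'' $\sigma$ to the support of $\rho_B$, which the support condition does not license in the infimum case.
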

\begin{proof}
By tensor factorization when $\rho_{AB} = \rho_A \otimes \rho_B$, the trace functional in \eqref{definition} becomes $\Tr \rho_A^\alpha \Psi_{c,z}(\rho_B\|\sigma_B)$. Lemma \ref{optimized Psi} shows that the optimized second factor is $1$; taking the logarithm proves the claim. The $c = 0$ point in $\mathcal{D}$ is the continuous boundary $z = \alpha > 1$, and the same identity follows by the additivity of $H_{\alpha,z}^\lambda$ \cite[Theorem 4.2]{rubboli2026quantum}.
\end{proof}

Next, we compute $H_{\alpha,z}^\lambda$ for a pure state.
\begin{proposition}\label{pure-state formula}
Let $\rho_{AB}=\ketbra{\psi}{\psi}_{AB}$ be pure with faithful marginal $\rho_B$, and let $q,r$ be as in \eqref{parameters2}.
If $r < 1$, then
\begin{equation}\label{pure-state}
H_{\alpha,z}^{\lambda}(A|B)_{\rho} = -H_{\beta}(\rho_B), \qquad \beta = \frac{1 + q}{1 - r} = \frac{z+(1-\lambda)(1-\alpha)}{z-\lambda(1-\alpha)}.
\end{equation}
If $r \geq 1$, then
\begin{equation}\label{pure-state2}
    H_{\alpha,z}^{\lambda}(A|B)_{\rho} = \frac{z}{1 - \alpha} \log \big\|\rho_B^{1 + q}\big\|_\infty.
\end{equation}
\end{proposition}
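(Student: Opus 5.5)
We reduce the optimized trace functional in \eqref{definition} for a pure state to an optimization of the form treated in Lemma~\ref{variational optimization}. Write $\ket{\psi}_{AB}$ and note that $\rho_{AB}$ is rank one, so $\rho_{AB}^{t} = \rho_{AB}$ for every $t>0$. With $M_B := \rho_B^{q/2}\sigma_B^{r}\rho_B^{q/2}$, the operator inside the trace becomes $\rho_{AB}(I_A\otimes M_B)\rho_{AB}$. This is a rank-one operator equal to $\bra{\psi}(I_A\otimes M_B)\ket{\psi}\,\rho_{AB}$. Hence its $z$-th power has trace $\bra{\psi}I_A\otimes M_B\ket{\psi}^z$, and using $\Tr_A\ketbra{\psi}{\psi}=\rho_B$ we obtain
\begin{equation*}
\Tr\left(\rho_{AB}^{\frac{\alpha}{2z}}(I_A\otimes M_B)\rho_{AB}^{\frac{\alpha}{2z}}\right)^z = \left(\Tr \rho_B M_B\right)^z = \left(\Tr \rho_B^{1+q}\sigma_B^{r}\right)^z .
\end{equation*}
This step is routine. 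The only care needed is that the expression makes sense when $q<0$, which is fine because $\rho_B$ is faithful.

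Next we optimize over $\sigma_B$. The outer map $x\mapsto \frac{z}{1-\alpha}\log x$ is monotone. The convention in \eqref{definition} is an infimum when $r<0$ (equivalently $\lambda(1-\alpha)<0$, since $z>0$) and a supremum when $r\ge 0$. This matches the convention of Lemma~\ref{variational optimization} with $\mu=r$. We apply that lemma with $Y=\rho_B^{1+q}>0$. The support condition $Y\ll\sigma$ for $\mu<0$ is automatic, since otherwise the value is $+\infty$ and does not affect the infimum. For $r<1$ the optimal value is $\left(\Tr\rho_B^{\frac{1+q}{1-r}}\right)^{1-r}$, so
\begin{equation*}
H_{\alpha,z}^\lambda(A|B)_\rho = \frac{z(1-r)}{1-\alpha}\log\Tr\rho_B^{\beta}, \qquad \beta=\frac{1+q}{1-r}.
\end{equation*}
Then $1-\beta = \frac{-r-q}{1-r}$, and $r+q = \frac{1-\alpha}{z}$. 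Therefore $\frac{z(1-r)}{1-\alpha} = -\frac{1}{1-\beta}$, and the expression equals $-H_\beta(\rho_B)$. Here $H_\beta$ is understood for real $\beta$ via faithfulness of $\rho_B$, as stated in the preliminaries. Substituting \eqref{parameters2} gives the explicit formula for $\beta$. For $r\ge1$ the lemma gives $\|\rho_B^{1+q}\|_\infty$, and raising it to the power $z$ yields \eqref{pure-state2}.

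The main point requiring care is the bookkeeping of the optimization direction. The sign of $1-\alpha$ determines whether $\frac{1}{1-\alpha}\log$ is increasing or decreasing. However, the optimization in \eqref{definition} is applied to the trace functional itself, before the prefactor, so the matching is simply between the sign of $\lambda(1-\alpha)$ and the sign of $\mu=r$. The edge cases also need a short check. The case $r=0$ is covered by the lemma's continuity statement. When $1+q\le 0$, $Y$ is still positive definite because $\rho_B$ is faithful, so no further issues arise.
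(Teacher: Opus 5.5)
Your proposal is correct and follows the paper's proof essentially step for step: the rank-one identity $\rho_{AB}(I_A\otimes X_B)\rho_{AB}=\Tr(\rho_{AB}(I_A\otimes X_B))\,\rho_{AB}$ reduces the functional to $(\Tr\rho_B^{1+q}\sigma_B^{r})^z$, and Lemma~\ref{variational optimization} with $Y=\rho_B^{1+q}$ and $\mu=r$ then gives both cases, with the same matching of optimization directions. Your check that $q+r=(1-\alpha)/z$ turns the prefactor $z(1-r)/(1-\alpha)$ into $-1/(1-\beta)$ is the same algebra the paper carries out in its chain of equalities.
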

\begin{proof}
Let $X_B = \rho_B^{q/2} \sigma_B^r \rho_B^{q/2}$.
Since $\rho_{AB}=\ketbra{\psi}{\psi}_{AB}$,
\begin{equation}
\rho_{AB}(I_A\otimes X_B)\rho_{AB} = \ketbra{\psi}{\psi} (I_A\otimes X_B) \ketbra{\psi}{\psi} = \Tr\big( \rho_{AB}(I_A\otimes X_B) \big)\rho_{AB}.
\end{equation}
Thus we have
\begin{align}
H_{\alpha,z}^{\lambda}(A|B)_{\rho} &= \frac{1}{1-\alpha} \opt_{\sigma_B}\log\Tr \Big( \Tr\!\big( \rho_{AB}(I_A\otimes X_B) \big)\rho_{AB} \Big)^z\\
&= \frac{z}{1-\alpha} \opt_{\sigma_B} \log \Tr\big(\rho_{AB}(I_A\otimes X_B) \big)\\
&= \frac{z}{1-\alpha} \log \opt_{\sigma_B} \Tr \left( \rho_B^{\frac{z+(1-\lambda)(1-\alpha)}{z}} \sigma_B^{\frac{\lambda(1-\alpha)}{z}} \right).
\end{align}
If $r < 1$, Lemma~\ref{variational optimization} gives
\begin{align}
H_{\alpha,z}^{\lambda}(A|B)_{\rho} &= \frac{z}{1-\alpha} \log \left( \Tr \left( \rho_B^{ \frac{z+(1-\lambda)(1-\alpha)}{z-\lambda(1-\alpha)} } \right) \right)^{ \frac{z-\lambda(1-\alpha)}{z} }\\
&= \frac{z-\lambda(1-\alpha)}{1-\alpha} \log \Tr \left( \rho_B^{ \frac{z+(1-\lambda)(1-\alpha)}{z-\lambda(1-\alpha)} } \right)\\ \label{pure-state-intermediate}
&= -\frac{1}{1-\beta}\log\Tr(\rho_B^\beta)\\
&= -H_\beta(\rho_B).
\end{align}
If $r \geq 1$, the second line of \eqref{Schatten duality} gives \eqref{pure-state2}.
\end{proof}


\section{Sharpness of Data-processing region}

In this section, we prove the exact converse to \cite[Theorem 5.2]{rubboli2026quantum}. The result is stated first for clarity and proved at the end of this section after the necessary restrictions have been derived.

\begin{theorem}[Sharp DPI region]\label{Sharp DPI region}
    Let $\alpha > 0$, $\alpha \neq 1$, $z > 0$, and $\lambda \in \mathbb{R}$. Then $H_{\alpha,z}^\lambda$ satisfies \eqref{data processing} for every finite-dimensional state and every admissible conditioned operation \eqref{conditional operation} if and only if
    \begin{equation}
        (\alpha,z,\lambda) \in \mathcal{D} = \mathcal{D}_1 \cup \mathcal{D}_2.
    \end{equation}
\end{theorem}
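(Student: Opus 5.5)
The "if" direction is \cite[Theorem 5.2]{rubboli2026quantum}, so only necessity has to be proved: assuming \eqref{data processing} holds for all states and all admissible operations \eqref{conditional operation}, I will show $(\alpha,z,\lambda)\in\mathcal{D}$. The plan separates the constraints on $z$ from those on $\lambda$, using different test families for each.

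\emph{Constraints on $z$.} I will use the transfer principle (Theorem \ref{transfer}). It says that if DPI holds for channels on the conditioning system alone, then $D_{\alpha,z}(\mathcal{N}(\rho)\|\mathcal{N}(\sigma))\le D_{\alpha,z}(\rho\|\sigma)$ for all states and channels. Channels on $B$ alone are the special case of \eqref{conditional operation} with one summand and $\mathcal{M}=\id_A$, so the hypothesis of the transfer principle is satisfied. Zhang's exact characterization \cite{zhang20_alphaz} of the $\alpha$-$z$ DPI region then gives:
\begin{itemize}[leftmargin=*]
\item for $0<\alpha<1$: $\max(\alpha,1-\alpha)\le z$, which is exactly $1-z\le\alpha\le z$;
\item for $\alpha>1$: $\max(\alpha-1,\alpha/2)\le z\le\alpha$, which is exactly $\alpha-1\le z\le\alpha\le 2z$.
\end{itemize}
This reproduces the $(\alpha,z)$ projections of $\mathcal{D}_1$ and $\mathcal{D}_2$.

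\emph{Constraints on $\lambda$: pure-state test.} Let $\ket{\psi}_{AB}$ and $\ket{\phi}_{AB}$ have Schmidt vectors with $\lambda(\psi)\prec\lambda(\phi)$. By Nielsen's theorem there is a one-way LOCC protocol from $B$ to $A$ converting $\psi$ into $\phi$: an instrument on $B$ followed by conditional unitaries on $A$. This has the form \eqref{conditional operation}. DPI together with Proposition \ref{pure-state formula} then demands, when $r<1$,
\begin{equation}
-H_\beta(\psi_B)\le -H_\beta(\phi_B)
\end{equation}
whenever $\psi_B$ majorizes $\phi_B$. So $H_\beta$ must be Schur-concave on faithful states, and comparing two-level spectra shows this forces $\beta\ge 0$. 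The numerator and denominator of $\beta=\frac{z+(1-\lambda)(1-\alpha)}{z-\lambda(1-\alpha)}$ correspond to the two conditions:
\begin{itemize}[leftmargin=*]
\item the denominator is positive precisely when $r<1$;
\item for $r<1$, the numerator is nonnegative precisely when $\lambda\ge 1-\frac{z}{1-\alpha}$ if $\alpha<1$, and $\lambda\ge 1+\frac{z}{1-\alpha}$ if $\alpha>1$.
\end{itemize}
The regime $r\ge1$ must be excluded separately. There \eqref{pure-state2} gives $\frac{z}{1-\alpha}\log\|\rho_B\|_\infty^{1+q}$. Depending on the signs of $1+q$ and $1-\alpha$, I expect to rule it out either by a majorization pair that increases this quantity, or by comparing the pure state with a product state. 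For the product comparison, the replacement channel on $B$ together with Lemma \ref{product-state formula} yields the necessary inequality $-H_\beta(\rho_B)\le H_\alpha(\rho_A)$ for pure states, or its $r\ge1$ analogue. Letting the spectrum degenerate should produce a contradiction when $\lambda(1-\alpha)\ge z$.

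\emph{Constraints on $\lambda$: classical restriction, and the expected obstacle.} The upper bound $\lambda\le1$ should come from classical states $\rho_{XY}$ with $B$ classical, where \eqref{definition} collapses to the two-parameter classical conditional R\'enyi family (the $z$-dependence disappears). I will take a perturbative family, for example a binary $X$ correlated with $Y$ with strength $\varepsilon$, and a coarse-graining channel on $Y$. Lemma \ref{optimization lemma} expands the optimized functional to leading order in $\varepsilon$. The goal is to exhibit a coefficient proportional to $(1-\lambda)$ times a positive quantity, whose sign flips the DPI inequality when $\lambda>1$. I expect this step to be the main obstacle. The order at which $\lambda$ first appears must be identified, and the sign of the coefficient must be controlled uniformly over both $\alpha<1$ and $\alpha>1$ with the optimizer perturbing around $\rho_Y$. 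If a single family does not work, I will fall back on an explicit two-point computation of the classical formula, using the closed form of the optimizer from Lemma \ref{variational optimization}.

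Collecting the $z$-constraints and the two-sided $\lambda$-constraints gives exactly $\mathcal{D}_1$ for $\alpha<1$ and $\mathcal{D}_2$ for $\alpha>1$.
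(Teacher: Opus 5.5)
Your proposal has a genuine gap on the $\lambda$ side. The pure-state/LOCC test cannot give the lower bound $\lambda\ge 1-\frac{z}{1-\alpha}$ of $\mathcal{D}_1$. For $0<\alpha<1$, the condition $1+q\ge 0$, i.e.\ $z+(1-\lambda)(1-\alpha)\ge 0$, is equivalent to $\lambda\le 1+\frac{z}{1-\alpha}$, since dividing by $1-\alpha>0$ does not flip the inequality. That is an \emph{upper} bound, already implied by $\lambda\le 1$. The lower bound of $\mathcal{D}_1$ is instead $q\le 1$, i.e.\ $z\ge(1-\lambda)(1-\alpha)$. Given $z\ge 1-\alpha$, every violating point has $\lambda<0$, hence $r<0<q$ and $\beta>0$. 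So $H_\beta$ is Schur-concave there and every pure-state LOCC test is passed. The diagonal classical formula you plan to use is also passed: it depends on $\lambda$ only through $\alpha/c\in(0,\alpha)$, and $p\mapsto\|p\|_\alpha^{\alpha/c}$ is concave. What is missing is a test with noncommuting side information that isolates an optimized three-matrix functional. The paper uses the c-q states $\ketbra{0}{0}\otimes(\omega-\sigma)+\sum_{j=1}^n\ketbra{j}{j}\otimes\sigma/n$ and multiplies by $n^{\alpha-1}$ to isolate $\inf_\tau\Phi_{p,q,r,s}(\sigma,\omega,\tau)$. By Lemma \ref{Schatten factorization lemma} this equals $\Psi_{\alpha/c,\,z/c}(\sigma\|\omega)^c$. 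Zhang's theorem for $D_{\alpha/c,z/c}$ then gives $z/c\ge 1-\alpha/c$, which is exactly $\lambda\ge 1-\frac{z}{1-\alpha}$. For $\alpha>1$, your pure-state argument is the paper's Proposition \ref{lower lambda bound for alpha>1}.

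The second gap is the sign of $c=1-\lambda(1-\alpha)$. Theorem \ref{transfer} requires $c>0$, since its proof needs $\sigma=\omega$ to be the unique optimizer of $\Psi_{c,z}(\omega\|\cdot)$, and you never treat $c\le 0$. This is not a formality. Consider the points $c=0$, $\alpha>1$, $z>\alpha$ (so $\lambda=-\frac{1}{\alpha-1}$), which lie outside $\mathcal{D}_2$. They satisfy $\lambda\le 1$ and $\lambda\ge 1-\frac{z}{\alpha-1}$, and they have $r=1/z<1$ and $\beta=\frac{z-\alpha}{z-1}>0$. On classical states the entropy there reduces to $\min_y H_\alpha(p^{(y)})$, which is monotone. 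So none of your tests excludes them. The paper needs Proposition \ref{c=0 case} to force $z=\alpha$ here: it isolates $\|\omega^{-p/2}\sigma^{p/2}\|_\infty$ and breaks its monotonicity with a L\"owner--Heinz counterexample and a binary measurement. For $c<0$ it needs Proposition \ref{exclude c<0}, with trivial $X$ and classical $Y$, where $Q\ge\max_y p_y^c>1$.

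Finally, your $\lambda\le 1$ step is not carried out, and the family you sketch would not give it. Take the diagonal classical formula $\frac{c}{1-\alpha}\log\sum_y p_y\|p^{(y)}\|_\alpha^{\alpha/c}$ and perturb around the uniform distribution $u$. Then $A(u,h)=\sum_i u_i^{\alpha-1}h_i=0$, so the second-order term has the sign of $\alpha-1$ for every $\lambda$. You need base points and directions with $A^2/(NB)$ arbitrarily close to its Cauchy--Schwarz maximum $1$. The paper achieves this with $p^{(n)}=(\frac12,\frac1{2n},\dots,\frac1{2n})$ and $n\to\infty$. Also, the regime $r\ge 1$ cannot be excluded outright for $\alpha<1$: for $\alpha\le\frac12$ the points $z=1-\alpha$, $\lambda=1$ lie in $\mathcal{D}_1$ and have $r=1$.
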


We begin with the following proposition, stating that DPI may hold only if $c := 1 - \lambda(1 - \alpha) \geq 0$.
\begin{proposition}\label{exclude c<0}
    If $c = 1 - \lambda(1 - \alpha) < 0$, then the DPI fails.
\end{proposition}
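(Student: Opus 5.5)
The plan is to test the inequality \eqref{data processing} on product states $\rho_A\otimes\tau_B$ and on channels that act only on the conditioning system. Such channels are admissible: they are the case of \eqref{conditional operation} with one summand and $\mathcal{M}=\id_A$. For a product state, the trace functional in \eqref{definition} factorizes, as in the proof of Lemma \ref{product-state formula}, into
\[
\Tr\rho_A^\alpha\cdot \Psi_{c,z}(\tau_B\|\sigma_B), \qquad \Psi_{c,z}(\tau_B\|\sigma_B)=\Tr\bigl(\tau_B^{c/2z}\sigma_B^{(1-c)/z}\tau_B^{c/2z}\bigr)^z .
\]
The assumption $c<0$ means $\lambda(1-\alpha)>1>0$, so the optimization is a supremum. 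Lemma \ref{optimized Psi} does not apply here, because it assumed a positive first argument. Instead, I will show directly that the supremum is strictly larger than $1$ whenever $\tau_B$ is faithful and not pure.

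\textbf{Step 1: lower bound on the supremum.} Choose $\sigma_B=\ketbra{v}{v}$, where $v$ is an eigenvector of $\tau_B$ with the smallest eigenvalue $t_{\min}\in(0,1)$. Since $\sigma_B$ is a projector, $\sigma_B^{(1-c)/z}=\sigma_B$, because the exponent is positive. Hence the inner operator is the rank-one operator $t_{\min}^{c/z}\ketbra{v}{v}$, and therefore $\Psi_{c,z}=t_{\min}^{c}$. Because $c<0$, this gives
\[
\sup_{\sigma_B}\Psi_{c,z}(\tau_B\|\sigma_B)\ \ge\ t_{\min}^{c}>1 .
\]
It follows that
\[
H^\lambda_{\alpha,z}(A|B)_{\rho_A\otimes\tau_B}=\frac{1}{1-\alpha}\Bigl(\log\Tr\rho_A^\alpha+\log\sup_\sigma\Psi_{c,z}\Bigr),
\]
where the second logarithm is strictly positive. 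If instead $B$ is one-dimensional, then $\Psi\equiv1$ and the entropy equals $H_\alpha(\rho_A)$.

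\textbf{Step 2: choose the channel according to the sign of $1-\alpha$.}
\begin{itemize}
\item If $0<\alpha<1$, take the state $\rho_A\otimes\tau_B$ and apply the trace channel $B\to B'$ with $B'$ trivial. Before the channel the entropy is strictly larger than $H_\alpha(\rho_A)$; after it the entropy equals $H_\alpha(\rho_A)$. The entropy therefore strictly decreases, which violates \eqref{data processing}.
\item If $\alpha>1$, the prefactor $1/(1-\alpha)$ is negative. Start instead with $B$ trivial, giving entropy $H_\alpha(\rho_A)$, and apply the replacement channel that prepares $\tau_{B'}$. The resulting entropy is strictly smaller than $H_\alpha(\rho_A)$, again violating \eqref{data processing}.
\end{itemize}
In the second case the violation can even be made arbitrarily large by letting $t_{\min}\to0$.

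\textbf{Expected obstacle.} The only delicate point is the direction of the inequality in each regime, since both the sign of $1-\alpha$ and the sup/inf convention enter. The case distinction above handles this. Beyond that, the argument needs only the factorization of the trace functional for product states. That factorization holds for any $\sigma_B$ because $I_A\otimes\tau_B$ and $I_A\otimes\sigma_B$ act trivially on $A$, so it does not depend on $c$.
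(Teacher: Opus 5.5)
Your proof is correct and takes essentially the same route as the paper. The paper uses a trivial conditioned system and applies DPI to both the discarding channel and the preparation channel on the conditioning system, which forces $H^\lambda_{\alpha,z}(X|Y)_P = 0$; it then contradicts this via $Q^\lambda_{\alpha,z}(X|Y)_P \geq \max_y p_y^{c} > 1$, which comes from the same pure-state choice of $\sigma$ on the smallest eigenvalue. Your version, with a general $\rho_A$ and an explicit case split on the sign of $1-\alpha$, is only a cosmetic generalization of this argument.
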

\begin{proof}
Observe first that DPI implies $H_{\alpha,z}^\lambda(X|Y)_P = 0$ for a trivial system $X$. Indeed, one may apply DPI both to the channel that discards $Y$ and to the channel that prepares $P_Y$ from a trivial conditioning system.

Let $P_Y = \sum_{y = 1}^d p_y \ketbra{y}{y}$ with $p_y > 0$ and $d \geq 2$. Then, if $c < 0$, it can readily be seen that $Q_{\alpha,z}^\lambda(X|Y)_P \geq \max_y p_y^c > 1$ since $0 < p_y < 1$. Consequently $H_{\alpha,z}^\lambda(X|Y)_P \neq 0$, which is a contradiction.
\end{proof}

Except for one genuine boundary family, which is treated directly in Proposition \ref{c=0 case}, we may consequently work with $c > 0$.

\subsection{Constraints on $z$}

Let $\rho_{XB} = \sum_{i} p_i \ketbra{i}{i}_X \otimes \omega_B^i$ be a classical--quantum state, where $p_i \geq 0$, $\sum_i p_i = 1$, and each $\omega_B^i$ is a density operator. Its $B$-marginal is $\rho_B = \sum_i p_i \omega_B^i$. Recall that for a fixed state $\sigma_B$,
\begin{align}
    Q_{\alpha,z}^\lambda(\rho_{XB}|\sigma_B) &:= \Tr\left(\rho_{XB}^{p/2} \left(I_X \otimes \rho_B^{q/2} \sigma_B^r \rho_B^{q/2}\right)\rho_B^{p/2}\right)^z,\\
    \label{block divergence}
    &= \sum_i p_i^\alpha \Tr\left(\left(\omega_B^i\right)^{p/2} \rho_B^{q/2} \sigma_B^r \rho_B^{q/2} \left(\omega_B^i\right)^{p/2}\right)^z
\end{align}
where $p = \alpha/z$, $q = (1 - \lambda)(1 - \alpha)/z$ and $r = \lambda(1 - \alpha)/z$. In particular, the right-hand side is independent of $\lambda$ for $\sigma_B = \rho_B$, and it can be written in terms of the $\alpha$-$z$ trace functional $\Psi_{\alpha,z}$:
\begin{equation}\label{reduced Q}
    Q_{\alpha,z}^\lambda(\rho_{XB}|\rho_B) = \sum_i p_i^\alpha \Psi_{\alpha,z}\left(\omega_B^i \| \rho_B\right).
\end{equation}

For a quantum state $\rho_{AB}$, we use the following notation:
\begin{equation}
    Q_{\alpha,z}^\lambda(A|B)_\rho := \opt_{\sigma_B} \Tr\left(\rho_{AB}^{\frac{\alpha}{2z}}\left(I_A\otimes\rho_B^{\frac{(1-\lambda)(1-\alpha)}{2z}}\sigma_B^{\frac{\lambda(1-\alpha)}{z}}\rho_B^{\frac{(1-\lambda)(1 - \alpha)}{2z}}\right)\rho_{AB}^{\frac{\alpha}{2z}}\right)^z,
\end{equation}
where $\opt$ is an infimum (resp.\ supremum) when $\lambda(1-\alpha)$ is negative (resp.\ non-negative).

\begin{proposition}\label{approximation lemma}
    Let $\alpha > 0$, $\alpha \neq 1$ and $c = 1 - \lambda(1 - \alpha) > 0$. Let $\omega$ be a faithful density matrix and let
    \begin{equation}
        \omega = \sum_{i = 1}^n \omega^i, \qquad \omega^i > 0,
    \end{equation}
    be a finite positive decomposition. Consider the one-parameter family of c-q states $\rho_{XB}^{(t)}$ defined by
    \begin{equation}\label{cq family}
        \rho_{XB}^{(t)} := (1 - t) \ketbra{0}{0}_X \otimes \omega + t \sum_{i = 1}^n \ketbra{i}{i}_X \otimes \omega^i, \qquad 0 < t < 1.
    \end{equation}
    Then
    \begin{equation}\label{approximate optimum}
        Q_{\alpha,z}^\lambda(X|B)_{\rho^{(t)}} = (1 - t)^\alpha + t^\alpha \sum_{i = 1}^n \Psi_{\alpha,z}\left(\omega^i \| \omega\right) + o(t^\alpha) \qquad (t \downarrow 0).
    \end{equation}
\end{proposition}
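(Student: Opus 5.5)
The approach is to exploit that the family \eqref{cq family} has a $B$-marginal independent of $t$. This turns the optimized functional into a perturbation problem to which Lemma \ref{optimization lemma} applies directly. First I will note that
\begin{equation}
\rho_B^{(t)} = (1-t)\,\omega + t\sum_i \omega^i = \omega
\end{equation}
for every $t$. Then the block formula \eqref{block divergence} gives, for each fixed $\sigma_B$,
\begin{equation}
Q_{\alpha,z}^\lambda\big(\rho^{(t)}_{XB}\big|\sigma_B\big) = (1-t)^\alpha F(\sigma_B) + t^\alpha G(\sigma_B),
\end{equation}
with
\begin{equation}
F(\sigma) = \Tr\big(\omega^{(p+q)/2}\sigma^r\omega^{(p+q)/2}\big)^z, \qquad G(\sigma) = \sum_i \Tr\big((\omega^i)^{p/2}\omega^{q/2}\sigma^r\omega^{q/2}(\omega^i)^{p/2}\big)^z .
\end{equation}
Since $p+q = (\alpha + (1-\lambda)(1-\alpha))/z = c/z$ and $r = (1-c)/z$, we have $F(\sigma) = \Psi_{c,z}(\omega\|\sigma)$. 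At $\sigma=\omega$ the second term becomes $G(\omega) = \sum_i \Psi_{\alpha,z}(\omega^i\|\omega)$, as in \eqref{reduced Q}.

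Next I will identify the unperturbed optimum. The optimization is a supremum exactly when $\lambda(1-\alpha)\ge 0$, i.e.\ $c\le 1$, and an infimum when $c>1$. This matches Lemma \ref{optimized Psi}, applied with $c$ in place of $\alpha$ (legitimate since $c>0$ and $\omega$ is faithful): the optimum of $F$ is $1$, attained uniquely at $\sigma=\omega$. The degenerate case $c=1$ gives $r=0$, so $F\equiv 1$ and $G$ is constant; the claim then holds exactly, with $G = \sum_i\Psi_{\alpha,z}(\omega^i\|\omega)$. For $c\neq 1$ I will factor out $(1-t)^\alpha$ and set $\varepsilon = t^\alpha/(1-t)^\alpha \downarrow 0$. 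The optimum is then $(1-t)^\alpha \opt_\sigma\big(F(\sigma)+\varepsilon G(\sigma)\big)$. Lemma \ref{optimization lemma} yields $(1-t)^\alpha\big(1+\varepsilon G(\omega)+o(\varepsilon)\big)$, which equals $(1-t)^\alpha + t^\alpha G(\omega) + o(t^\alpha)$, i.e.\ \eqref{approximate optimum}.

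The main obstacle is verifying the hypotheses of Lemma \ref{optimization lemma}, namely compactness and continuity. For $0<c<1$ we have $r>0$, so $\sigma\mapsto\sigma^r$ is continuous on the compact set $\mathcal{S}(B)$, and $F,G$ are continuous there. For $c>1$ we have $r<0$ and the infimum runs over faithful states, which form a non-compact set. Here I will restrict to a compact set $K_\delta=\{\sigma\in\mathcal{S}: \sigma\ge\delta I\}$. Because $\omega$ is faithful and $r<0$, $F(\sigma)\to\infty$ as $\lambda_{\min}(\sigma)\to 0$; this can be seen, for example, by bounding $F(\sigma)$ from below via $\sigma^r\ge \lambda_{\min}(\sigma)^r\,\ketbra{v}{v}$ for a minimal eigenvector $v$ and using $\omega\ge\lambda_{\min}(\omega)I$. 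Hence there is $\delta>0$ with $F>2$ off $K_\delta$. Since $G\ge 0$, any $\sigma\notin K_\delta$ gives $F+\varepsilon G > 2 > 1+\varepsilon G(\omega)$ for small $\varepsilon$. So the infimum over faithful states equals the minimum over $K_\delta$, on which $F$ and $G$ are continuous and $\omega$ is still the unique minimizer. Lemma \ref{optimization lemma} then applies on $K_\delta$.
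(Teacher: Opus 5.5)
Your proposal is correct and follows the paper's proof essentially step for step. The shared steps are the $t$-independent marginal, the split $(1-t)^\alpha\Psi_{c,z}(\omega\|\sigma)+t^\alpha G(\sigma)$, Lemma~\ref{optimized Psi} with $c$ in place of $\alpha$, and Lemma~\ref{optimization lemma} with $\varepsilon_t=t^\alpha/(1-t)^\alpha$; your minimal-eigenvector bound, which gives $F(\sigma)\geq\lambda_{\min}(\sigma)^{1-c}\lambda_{\min}(\omega)^{c}$, makes explicit the sublevel-set argument that the paper only sketches for $c>1$. The one detail you skip, and which the paper mentions, is that for $c=1$ a singular $\sigma$ under the convention $\sigma^0=\Pi_\sigma\leq I$ can only decrease each block term, so the supremum is still the constant value attained at faithful states.
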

\begin{proof}
    Note first that the $B$-marginal of $\rho_{XB}^{(t)}$ is
    \begin{equation}
        \rho_B^{(t)} = (1 - t)\omega + t \sum_i \omega^i = \omega.
    \end{equation}
    For a fixed state $\sigma = \sigma_B$, using \eqref{block divergence} we obtain
    \begin{equation}\label{Q for cq family}
        Q_{\alpha,z}^\lambda\big(\rho_{XB}^{(t)}\big|\sigma\big) = (1 - t)^\alpha \Psi_{c,z}(\omega\|\sigma) + t^\alpha G(\sigma),
    \end{equation}
    where $G(\sigma) := \sum_i \Tr\left(\left(\omega^i\right)^{p/2} \omega^{q/2} \sigma^r \omega^{q/2} \left(\omega^i\right)^{p/2}\right)^z$.

    Suppose first that $c \neq 1$. By Lemma \ref{optimized Psi}, $\sigma = \omega$ is the unique optimizer of $\Psi_{c,z}(\omega\|\sigma)$: a maximum if $0 < c < 1$, and a minimum if $c > 1$. Factor
    \begin{equation}
        Q_{\alpha,z}^\lambda\big(\rho_{XB}^{(t)}\big|\sigma\big) = (1 - t)^\alpha \big(\Psi_{c,z}(\omega\|\sigma) + \varepsilon_t G(\sigma)\big), \qquad \varepsilon_t := \frac{t^\alpha}{(1 - t)^\alpha}.
    \end{equation}
    For $0 < c < 1$, the state space is compact and the two functions are continuous, so Lemma \ref{optimization lemma} applies directly. For $c > 1$, the negative powers make $\Psi_{c,z}(\omega\|\sigma)$ infinite on the boundary because $\omega$ is faithful. Comparing with the value at $\sigma = \omega$ shows that, for all sufficiently small $t$, every minimizer lies in one fixed compact faithful sublevel set. The same Lemma therefore applies on that set. We obtain
    \begin{equation}
        \opt_\sigma \big(\Psi_{c,z}(\omega\|\sigma) + \varepsilon_t G(\sigma)\big) = 1 + \varepsilon_t G(\omega) + o(\varepsilon_t).
    \end{equation}
    Multiplying by $(1 - t)^\alpha$ and using \eqref{reduced Q} proves \eqref{approximate optimum}.

    If $c = 1$, then $\lambda(1 - \alpha) = 0$ and the optimization is a supremum. For every faithful $\sigma$ on has $\sigma^0 = I$, so
    \begin{equation}
        Q_{\alpha,z}^\lambda\big(\rho_{XB}^{(t)}\big|\sigma\big) = (1 - t)^\alpha + t^\alpha G(\omega).
    \end{equation}
    With the support convention $\sigma^0 = \Gamma_\sigma$ for a singular state, th insertion $\Gamma \leq I$ cannot increase withrpositive block contribution. Thus a faithful state is optimal, and \eqref{approximate optimum} is exact without the $o(t^\alpha)$ term.
\end{proof}

\begin{theorem}\label{transfer}
    Let $c = 1 - \lambda(1 - \alpha) > 0$. If $H_{\alpha,z}^\lambda$ satisfies DPI, then $D_{\alpha,z}$ satisfies DPI under every quantum channel.
\end{theorem}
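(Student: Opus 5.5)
The plan is to feed the classical--quantum family \eqref{cq family} of Proposition \ref{approximation lemma} into the assumed DPI, using a channel acting on $B$ only, and then compare the $t^\alpha$ coefficients on both sides. Fix a channel $\mathcal{N}$, a faithful $\omega$ and a decomposition $\omega=\sum_i\omega^i$ with $\omega^i>0$. Applying $\operatorname{id}_X\otimes\mathcal{N}$ to $\rho^{(t)}_{XB}$ gives
\begin{equation*}
(1-t)\ketbra{0}{0}_X\otimes\mathcal{N}(\omega)+t\sum_i\ketbra{i}{i}_X\otimes\mathcal{N}(\omega^i),
\end{equation*}
which is again of the form \eqref{cq family}, now with $\omega$ replaced by $\mathcal{N}(\omega)$ and $\omega^i$ by $\mathcal{N}(\omega^i)$.

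To apply Proposition \ref{approximation lemma} to the output as well, we need $\mathcal{N}(\omega)$ and all $\mathcal{N}(\omega^i)$ to be faithful. I would ensure this by restricting the output space to the support of $\mathcal{N}(\omega)$; if positivity of the pieces is still an issue, I would replace $\mathcal{N}$ by $(1-\delta)\mathcal{N}+\delta\,\Tr(\cdot)\,I/d$, let $\delta\to0$ at the end, and use continuity of $\Psi_{\alpha,z}$ on faithful arguments.

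DPI then states $H(X|B)\le H(X|B')$. Since the prefactor $1/(1-\alpha)$ changes sign at $\alpha=1$, this becomes $Q\le Q'$ for $\alpha<1$ and $Q\ge Q'$ for $\alpha>1$. Inserting \eqref{approximate optimum} on both sides, the terms $(1-t)^\alpha$ cancel. Dividing by $t^\alpha$ and letting $t\downarrow0$ yields a \emph{decomposition-DPI}:
\begin{equation*}
\sum_i\Psi_{\alpha,z}(\omega^i\|\omega)\;\le\;\sum_i\Psi_{\alpha,z}\big(\mathcal{N}(\omega^i)\|\mathcal{N}(\omega)\big)\quad(\alpha<1),
\end{equation*}
with the reverse inequality for $\alpha>1$.

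Next we isolate an arbitrary pair $\rho,\sigma$. By continuity we may assume $\sigma$ is faithful. Take $\omega=\sigma$ with the two pieces $\omega^1=s\rho$ and $\omega^2=\sigma-s\rho$, which are positive for small $s>0$. Homogeneity gives $\Psi_{\alpha,z}(s\rho\|\sigma)=s^\alpha\Psi_{\alpha,z}(\rho\|\sigma)$, and the same holds after applying $\mathcal{N}$. For the second piece, I would show that
\begin{equation*}
\Psi_{\alpha,z}(\sigma+sX\|\sigma)=1+\alpha s\Tr X+O(s^2).
\end{equation*}
The first-order term comes from differentiating $\Tr(\cdot)^z$ at the point where $A=\sigma$ commutes with $\sigma$, so that cyclicity collapses the derivative to $\alpha\Tr X$. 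Because $\mathcal{N}$ is trace preserving, the linear terms $-\alpha s$ coincide on the input and output sides and cancel. Dividing by $s^\alpha$ and letting $s\to0$ then gives $\Psi_{\alpha,z}(\rho\|\sigma)\le\Psi_{\alpha,z}(\mathcal{N}(\rho)\|\mathcal{N}(\sigma))$ for $\alpha<1$ and the reverse for $\alpha>1$, which is exactly DPI for $D_{\alpha,z}$.

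The main obstacle is the remainder $O(s^2)$. It is negligible against $s^\alpha$ only when $\alpha<2$, and the $s^2$ coefficients differ between the input and output sides. For $\alpha\ge2$, I would first try to remove the second-order term by refining the decomposition: split $\sigma-s\rho$ into $m$ pieces, possibly using an auxiliary classical flag on $B$ so that the pieces are orthogonal, choosing them so that their combined $\Psi$-contribution is exactly linear in $s$ or is the same on both sides. Failing that, I would fall back on the non-analytic structure: for non-integer $\alpha$, the $s^\alpha$ coefficient can be separated from the analytic part of the difference, after which one extends to integer $\alpha$ by continuity in the parameters. Alternatively, the required range may follow by combining the $\alpha<2$ conclusion with Zhang's characterization, which already restricts $\alpha\le2z$.
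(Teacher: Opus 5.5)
Your first step is exactly the paper's: you feed \eqref{cq family} and its image under $\operatorname{id}_X\otimes\mathcal{N}$ into DPI, compare the $t^\alpha$ coefficients, and correctly obtain \eqref{positive decomposition inequality}. The gap is in the isolation step. With only the two pieces $s\rho$ and $\sigma-s\rho$, you must expand $\Psi_{\alpha,z}(\sigma-s\rho\|\sigma)$ in $s$. After the linear terms cancel, the difference of the two sides is $s^\alpha\bigl(\Psi_{\alpha,z}(\rho\|\sigma)-\Psi_{\alpha,z}(\mathcal{N}(\rho)\|\mathcal{N}(\sigma))\bigr)+c_2s^2+O(s^3)$. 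Here $c_2$ is the difference of the second-order coefficients and has no controlled sign, so this determines the sign of the bracket only for $\alpha<2$.

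The range $\alpha\ge2$ is not peripheral. $\mathcal{D}_2$ contains arbitrarily large $\alpha$, and the $z$-constraints there are obtained in Theorem \ref{Sharp DPI region} only through this transfer. None of your fallbacks closes the gap as stated:
\begin{itemize}
\item If $\alpha>2$ and $c_2>0$, the inequality holds for all small $s$ whatever the sign of the $s^\alpha$ coefficient. So no ``separation of the non-analytic part'' can be read off from it.
\item Continuity in the parameters is unavailable, because DPI is assumed at a single fixed triple $(\alpha,z,\lambda)$.
\item Zhang's restriction $\alpha\le2z$ presupposes DPI for $D_{\alpha,z}$, which is what you are proving. It would not give $\alpha<2$ anyway.
\end{itemize}

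The missing idea is that no expansion in $s$ is needed. Keep $s$ fixed with $0<s\rho<\sigma$. You must regularize both $\rho$ and $\sigma$ to be faithful, since Proposition \ref{approximation lemma} requires positive definite pieces. Then split one piece into $m$ \emph{equal} copies and use homogeneity, $m\,\Psi_{\alpha,z}(Y/m\|\sigma)=m^{1-\alpha}\Psi_{\alpha,z}(Y\|\sigma)$.
\begin{itemize}
\item For $\alpha>1$, use $\sigma=s\rho+\sum_{j=1}^m(\sigma-s\rho)/m$. The copies contribute $m^{1-\alpha}\Psi_{\alpha,z}(\sigma-s\rho\|\sigma)\to0$ on both sides, leaving $\Psi_{\alpha,z}(\rho\|\sigma)\ge\Psi_{\alpha,z}(\mathcal{N}(\rho)\|\mathcal{N}(\sigma))$ with no error term at all.
\item For $0<\alpha<1$, split $s\rho$ into $m$ copies instead and multiply by $m^{\alpha-1}$, which suppresses the $\sigma-s\rho$ term.
\end{itemize}
This is the paper's argument. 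Your first fallback becomes exactly it once the pieces are taken equal. It also makes your first-order computation of $\Psi_{\alpha,z}(\sigma+sX\|\sigma)$ unnecessary, and it works uniformly for every $\alpha\neq1$.
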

\begin{proof}
    Let $\omega_B$ be a faithful density matrix and let $\mathcal{N}_{B \rightarrow B'}$ be a quantum channel. Without loss of generality, we may assume that all output states of $\mathcal{N}$ are faithful. We first derive an inequality associated with an arbitrary positive decomposition.
    
    Let $\omega = \sum_{i = 1}^n \omega^i$ be a finite positive decomposition where $\omega^i > 0$ for all $i$. For $0 < t < 1$, define the c-q state $\rho_{XB}^{(t)}$ as in \eqref{cq family}. By Lemma \ref{approximation lemma},
    \begin{equation}\label{approximation identity appl}
        Q_{\alpha,z}^\lambda(X|B)_{\rho^{(t)}} = (1 - t)^\alpha + t^\alpha \sum_{i = 1}^n \Psi_{\alpha,z}(\omega^i\|\omega) + o(t^\alpha).
    \end{equation}
    On the other hand, applying $\mathcal{N}$ to $\rho^{(t)}$ gives
    \begin{equation}
        \mathcal{N}(\rho^{(t)}) = (1 - t) \ketbra{0}{0}_X \otimes \mathcal{N}(\omega) + t \sum_i \ketbra{i}{i}_X \otimes \mathcal{N}(\omega^i).
    \end{equation}
    Sine $\mathcal{N}(\omega) = \sum_i \mathcal{N}(\omega^i)$, this is again a state of the same form. Lemma \ref{approximation lemma} therefore gives
    \begin{equation}\label{approximation identity appl2}
        Q_{\alpha,z}^\lambda(X|B')_{\mathcal{N}(\rho^{(t)})} = (1 - t)^\alpha + t^\alpha \sum_{i = 1}^n \Psi_{\alpha,z}\big(\mathcal{N}(\omega^i)\|\mathcal{N}(\omega)\big) + o(t^\alpha).
    \end{equation}

    We compare \eqref{approximation identity appl} and \eqref{approximation identity appl2} using conditional DPI
    \begin{equation}
        H_{\alpha,z}^\lambda(X|B)_{\rho^{(t)}} \leq H_{\alpha,z}^\lambda(X|B')_{\mathcal{N}(\rho^{(t)})}.
    \end{equation}
    In the case $0 < \alpha < 1$, this inequality is equivalent to
    \begin{equation}
        Q_{\alpha,z}^\lambda(X|B)_{\rho^{(t)}} \leq Q_{\alpha,z}^\lambda(X|B')_{\mathcal{N}(\rho^{(t)})}.
    \end{equation}
    Substituting \eqref{approximation identity appl} and \eqref{approximation identity appl2}, canceling the common term $(1 - t)^\alpha$, dividng by $t^\alpha$, and letting $t \downarrow 0$, we obtain
    \begin{equation}
        \sum_{i = 1}^n \Psi_{\alpha,z}(\omega^i\|\omega) \leq \sum_{i = 1}^n \Psi_{\alpha,z}\big(\mathcal{N}(\omega^i)\|\mathcal{N}(\omega)\big).
    \end{equation}
    In the case $\alpha > 1$, the conditional DPI is equivalent to the reverse inequality
    \begin{equation}
        Q_{\alpha,z}^\lambda(X|B)_{\rho^{(t)}} \geq Q_{\alpha,z}^\lambda(X|B')_{\mathcal{N}(\rho^{(t)})}.
    \end{equation}
    The same coefficient comparison gives
    \begin{equation}
        \sum_{i = 1}^n \Psi_{\alpha,z}(\omega^i\|\omega) \geq \sum_{i = 1}^n \Psi_{\alpha,z}\big(\mathcal{N}(\omega^i)\|\mathcal{N}(\omega)\big).
    \end{equation}
    Thus, for every finite positive decomposition $\omega = \sum_i \omega^i$,
    \begin{equation}\label{positive decomposition inequality}
    \sum_i \Psi_{\alpha,z}(\omega^i\|\omega)
        \begin{cases}
            \leq \sum\limits_i \Psi_{\alpha,z}\big(\mathcal{N}(\omega^i)\|\mathcal{N}(\omega)\big), & 0 < \alpha < 1,\\
            \geq \sum\limits_i \Psi_{\alpha,z}\big(\mathcal{N}(\omega^i)\|\mathcal{N}(\omega)\big), & \alpha > 1.
        \end{cases}
    \end{equation}

    Now let $\sigma_B$ be a faithful density matrix such that $\sigma_B < \omega_B$. For each positive integer $n$, we consider the decomposition
    \begin{equation}
        \omega = \sigma + \underbrace{\frac{\omega - \sigma}{n} + \cdots + \frac{\omega - \sigma}{n}}_{n \text{ copies}}.
    \end{equation}
    In the case $\alpha > 1$, applying \eqref{positive decomposition inequality} gives
    \begin{equation}\label{scaled inequlaity}
        \Psi_{\alpha,z}(\sigma\|\omega) + n \Psi_{\alpha,z}\left(\frac{\omega - \sigma}{n}\Big\|\omega\right) \geq \Psi_{\alpha,z}\big(\mathcal{N}(\sigma)\|\mathcal{N}(\omega)\big) + n \Psi_{\alpha,z}\left(\frac{\mathcal{N}(\omega - \sigma)}{n}\Big\|\mathcal{N}(\omega)\right).
    \end{equation}
    The trace functional is homogeneous of degree $\alpha$ in its first argument. Therefore,
    \begin{equation}
        n \Psi_{\alpha,z}\left(\frac{\omega - \sigma}{n}\Big\|\omega\right) = n^{1 - \alpha} \Psi_{\alpha,z}(\omega - \sigma\|\omega),
    \end{equation}
    and similarly after applying $\mathcal{N}$. Hence \eqref{scaled inequlaity} becomes
    \begin{equation}
        \Psi_{\alpha,z}(\sigma\|\omega) + n^{1 - \alpha} \Psi_{\alpha,z}(\omega - \sigma\|\omega) \geq \Psi_{\alpha,z}\big(\mathcal{N}(\sigma)\|\mathcal{N}(\omega)\big) + n^{1 - \alpha} \Psi_{\alpha,z}\big(\mathcal{N}(\omega - \sigma)\|\mathcal{N}(\omega)\big).
    \end{equation}
    Since $\alpha > 1$, we have $m^{1 - \alpha} \longrightarrow 0$. Letting $m \rightarrow \infty$ in the preceding inequality, we obtain
    \begin{equation}\label{pre DPI}
        \Psi_{\alpha,z}(\sigma\|\omega) \geq \Psi_{\alpha,z}\big(\mathcal{N}(\sigma)\|\mathcal{N}(\omega)\big).
    \end{equation}

    In the case $0 < \alpha < 1$, we use the decomposition
    \begin{equation}
        \omega = \underbrace{\frac{\sigma}{n} + \cdots + \frac{\sigma}{n}}_{n \text{ copies}} + (\omega - \sigma).
    \end{equation}
    Applying \eqref{positive decomposition inequality} and using homogeneity, we obtain
    \begin{equation}
        \Psi_{\alpha,z}(\sigma\|\omega) + n^{\alpha - 1} \Psi_{\alpha,z}(\omega - \sigma\|\omega) \leq \Psi_{\alpha,z}\big(\mathcal{N}(\sigma)\|\mathcal{N}(\omega)\big) + n^{\alpha - 1} \Psi_{\alpha,z}\big(\mathcal{N}(\omega - \sigma)\|\mathcal{N}(\omega)\big).
    \end{equation}
    Since $0 < \alpha < 1$, we have $m^{\alpha - 1} \longrightarrow 0$. Letting $m \rightarrow \infty$, we obtain
    \begin{equation}\label{pre DPI2}
        \Psi_{\alpha,z}(\sigma\|\omega) \leq \Psi_{\alpha,z}\big(\mathcal{N}(\sigma)\|\mathcal{N}(\omega)\big).
    \end{equation}
    For arbitrary faithful states $\sigma,\omega$, one can choose $s$ so large that $\sigma/s < \omega$. Repeating the argument above with $\sigma/s$ in place of $\sigma$, and using homogeneity to cancel the factor $s^{-\alpha}$, we obtain the same inequalities \eqref{pre DPI} and \eqref{pre DPI2} which are equivalent to DPI for $D_{\alpha,z}$ after accounting for the sign of $\alpha - 1$.

    To remove the temporary strict-positivity assumption on the channel, replace $\mathcal{N}$ by
    \begin{equation}
        \mathcal{N}_\delta(X) = (1 - \delta) \mathcal{N}(X) + \delta \Tr(X) \tau,
    \end{equation}
    where $\tau$ is faithful, and let $\delta \downarrow 0$. Positivity of the channel implies that its outputs at faithful states have the same support; continuity on that support yields the limit. Nonfaithful pairs follow by adding a faithful state an then taking the regularization parameter to zero. This also covers the extended-valued case when $\alpha > 1$.
\end{proof}

\subsection{Constraints on $\lambda$}

\begin{proposition}[The upper $\lambda$ bound]\label{upper lambda bound}
    Let $\alpha > 0$, $\alpha \neq 1$ and $c,z > 0$. Let $H_{\alpha,z}^\lambda$ satisfy the DPI. If $0 < \alpha < 1$, then $0 < \beta \leq 1$; if $\alpha >1$, then $\beta \geq 1$. In either case, $\lambda \leq 1$.
\end{proposition}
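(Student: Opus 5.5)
The plan is to test DPI on bipartite pure states, where Proposition \ref{pure-state formula} turns $H_{\alpha,z}^\lambda$ into an explicit spectral function of $\rho_B$.

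\textbf{Step 1: Schmidt coefficients must be Schur-monotone.} Let $\ket{\psi}_{AB}$ have faithful marginal $\rho_B$ with spectrum $\mu$. By Proposition \ref{pure-state formula}, when $r<1$,
\begin{equation*}
H_{\alpha,z}^\lambda(A|B)_\psi = \kappa \log \sum_i \mu_i^{\beta}, \qquad \kappa := \frac{z-\lambda(1-\alpha)}{1-\alpha}.
\end{equation*}
The admissible operations \eqref{conditional operation} contain one-way LOCC from $B$ to $A$: an instrument on $B$ followed by an outcome-dependent unitary, hence unital, channel on $A$. By Nielsen's theorem, such a protocol deterministically converts $\psi$ into any pure state $\phi$ whose Schmidt vector $\nu$ majorizes $\mu$. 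Applying DPI to these conversions shows that
\begin{equation*}
\mu \mapsto \kappa\log\sum_i\mu_i^\beta
\end{equation*}
must be Schur-convex on the interior of the simplex.

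\textbf{Step 2: read off the admissible $\beta$.} I would test Schur-convexity on two-level vectors $(x,1-x)$ moving toward $(1,0)$, where the function reduces to a one-variable calculation. Schur-convexity of $\kappa\log\sum_i\mu_i^\beta$ is decided by the sign of $\kappa\,\beta(\beta-1)$, which must be nonnegative. I would then rewrite this sign condition using the algebraic identity
\begin{equation*}
\beta-1 = \frac{1-\alpha}{z-\lambda(1-\alpha)} = \frac{1}{\kappa}.
\end{equation*}
This identity links $\kappa$ and $\beta$ directly, and I expect it to collapse the sign analysis into the stated ranges for $\beta$, separately for $\alpha<1$ and $\alpha>1$. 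I am not confident of the exact direction of the $\beta$ inequalities as stated, so this computation should be done carefully rather than taken on trust.

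\textbf{Step 3: the case $r\geq 1$.} Here the formula is \eqref{pure-state2}, namely
\begin{equation*}
H_{\alpha,z}^\lambda(A|B)_\psi = \frac{z}{1-\alpha}(1+q)\log\|\rho_B\|_\infty.
\end{equation*}
The same majorization argument then forces a sign condition on $(1+q)/(1-\alpha)$, and hence on $q$. In addition, I would compare with a product state with pure $A$ marginal, whose entropy is $0$ by Lemma \ref{product-state formula}, and with the maximally entangled state, whose entropy is $-\log d$ in the ordinary case. DPI under discarding and re-preparing $B$ fixes the orientation of the inequality. If these checks contradict the required monotonicity, the case $r\geq1$ is excluded altogether.

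\textbf{Step 4: conclude $\lambda\le1$.} I would translate the bounds on $\beta$, or on $q$ in the regime $r\geq1$, back into bounds on $\lambda$ via \eqref{parameters2}, using $c>0$ and $z>0$. For example, $q\ge 0$ is equivalent to $(1-\lambda)(1-\alpha)\ge0$. So for $\alpha<1$ it directly gives $\lambda\le1$, while for $\alpha>1$ the sign is reversed and the $\beta$ relation has to be used instead.

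\textbf{Main obstacle.} The main obstacle is Step 2, because the orientation (Schur-convex versus Schur-concave) interacts with the sign of $\kappa$. The denominator $z-\lambda(1-\alpha)$ can change sign, which is exactly the split $r<1$ versus $r\ge 1$. I would first fix the orientation using product states (entropy $0$) and maximally entangled states (entropy $-\log d$) before doing the general majorization analysis.
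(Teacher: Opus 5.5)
\textbf{Which $\beta$ is meant.} The $\beta$ in the statement is not the pure-state exponent of Proposition~\ref{pure-state formula}. The paper sets $\beta:=\alpha/c=\alpha/(1-\lambda(1-\alpha))$. For $c>0$, both $\beta\le1$ (when $\alpha<1$) and $\beta\ge1$ (when $\alpha>1$) are just restatements of $\lambda\le1$. Under your reading the claim would be false: $(\alpha,z,\lambda)=(1/2,1,0)\in\mathcal{D}_1$ satisfies DPI, yet its pure-state exponent is $3/2$.

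\textbf{The gap: pure states cannot detect the upper $\lambda$ bound.} Write $\beta_{\mathrm{pure}}=(1+q)/(1-r)$ for your exponent. If you finish Step 2 with your own identity $\beta_{\mathrm{pure}}-1=1/\kappa$, you get $\kappa\beta_{\mathrm{pure}}(\beta_{\mathrm{pure}}-1)=\beta_{\mathrm{pure}}$. So Schur-convexity only demands $\beta_{\mathrm{pure}}\ge0$, i.e.\ $q\ge-1$ when $r<1$. For $\alpha>1$ this is the lower bound of Proposition~\ref{lower lambda bound for alpha>1}. For $\alpha<1$ it is $\lambda\le1+z/(1-\alpha)$, which is strictly weaker than $\lambda\le1$.

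Concretely, take $(\alpha,z,\lambda)=(1/2,1,3/2)$. It has $c=1/4$, $r=3/4$, $q=-1/4$ and $\beta_{\mathrm{pure}}=3$. Then $-H_3$ passes every Nielsen/LOCC test. Product states give the $\lambda$-independent value $H_\alpha(\rho_A)$, and the maximally entangled state gives $-\log d$. This point also satisfies the $z$-constraints and the lower $\lambda$-bound of $\mathcal{D}_1$, so combining with Theorem~\ref{transfer} does not help. For $\alpha>1$, every $\lambda>1$ has $r<0<q$, hence $\beta_{\mathrm{pure}}>0$, and passes all your tests as well. Nothing in Steps 1--3 produces $q\ge0$, so Step 4 has nothing to translate.

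\textbf{The missing idea.} Use classical (commuting) states together with DPI for discarding the conditioning register. This probes how the entropy averages over $Y$, which is exactly where $\lambda$ enters through $c$.

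Take a classical $P_{XY}$ and $\lambda>0$; the case $\lambda\le0$ is immediate. Restricting $\sigma_Y$ to diagonal states gives a lower bound on $H^\lambda_{\alpha,z}(X|Y)_P$ for either sign of $1-\alpha$. Lemma~\ref{variational optimization} with $\mu=\lambda(1-\alpha)$ and $1-\mu=c$ then yields
$H^\lambda_{\alpha,z}(X|Y)_P\ge\frac{c}{1-\alpha}\log\sum_y p_y\|p^{(y)}\|_\alpha^{\alpha/c}$.
On the other side, $H_\alpha(p_X)=\frac{c}{1-\alpha}\log\|p_X\|_\alpha^{\alpha/c}$.

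DPI under tracing out $Y$ therefore forces $p\mapsto\|p\|_\alpha^{\beta}$, with $\beta=\alpha/c$, to be concave for $\alpha<1$ and convex for $\alpha>1$. Now take the second directional derivative at $p^{(n)}=(\tfrac12,\tfrac1{2n},\dots,\tfrac1{2n})$ along $h^{(n)}=(-\tfrac12,\tfrac1{2n},\dots,\tfrac1{2n})$. Its sign converges to that of $\beta-1$ as $n\to\infty$. This rules out $\beta>1$ for $\alpha<1$ and $\beta<1$ for $\alpha>1$, i.e.\ it gives $\lambda\le1$.
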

\begin{proof}
Put $\beta := \alpha/(1 - \lambda(1 - \alpha)) > 0$ and define on the probability simplex
\begin{equation}
    p \longmapsto \|p\|_\alpha^\beta := \left(\sum_x p_x^\alpha\right)^{\beta/\alpha}.
\end{equation}
Let $P_{XY}$ be a faithful classical joint distribution and write $p_{xy} = p_y p_{x|y}$. If $\lambda \leq 0$, then the assertions are immediate. Hence we assume that $\lambda > 0$. Then, restricting the optimization in the definition of $H_{\alpha,z}^{\lambda}(X|Y)_P$ to diagonal reference states and applying Lemma~\ref{variational optimization} gives
\begin{equation}\label{clssical lower bound}
    \frac{\alpha}{(1-\alpha)\beta} \log\sum_y p_y \left(\sum_x p_{x|y}^\alpha\right)^{\beta/\alpha} \leq H_{\alpha,z}^\lambda(X|Y)_P.
\end{equation}
Discarding $Y$ maps the collection of conditional distributions $p^{(y)} = (p_{x|y})_x$ to their mixture $p_X = \sum_y p_y p^{(y)}$. Combined with \eqref{clssical lower bound}, DPI under discarding the conditioning system $Y$ implies
\begin{equation}
\begin{cases}
    \left\|\sum_y p_y p^{(y)}\right\|_\alpha^\beta \geq \sum_y p_y \left\|p^{(y)}\right\|_\alpha^\beta, \quad & 0 < \alpha < 1,\\[7pt]
    \left\|\sum_y p_y p^{(y)}\right\|_\alpha^\beta \leq \sum_y p_y \left\|p^{(y)}\right\|_\alpha^\beta, \quad & \alpha > 1.
\end{cases}
\end{equation}
Thus $p \mapsto \|p\|_\alpha^\beta$ must be concave for $0 < \alpha < 1$ and convex for $\alpha > 1$.

The function $p \mapsto \|p\|_\alpha^\beta$ is smooth on the positive orthant. Let
\begin{equation}
    N(p) := \sum_i p_i^\alpha.
\end{equation}
For a tangent vector $h$ satisfying $\sum_i h_i = 0$, put
\begin{equation}
    A(p,h) := \sum_i p_i^{\alpha - 1} h_i, \qquad B(p,h) := \sum_i p_i^{\alpha -2} h_i^2.
\end{equation}
By direct differentiation, we compute the second derivative of $p \mapsto \|p\|_\alpha^\beta$ in the direction $h$:
\begin{equation}\label{second directional derivatve}
    D_h^2 \|p\|_\alpha^\beta = \beta N(p)^{\beta/\alpha - 2}\Big((\alpha -1) N(p) B(p,h) + (\beta - \alpha) A(p,h)^2\Big).
\end{equation}

For $n \geq 1$, consider the probability vector and tangent direction
\begin{equation}
    p^{(n)} = \left(\frac{1}{2},\underbrace{\frac{1}{2n},...,\frac{1}{2n}}_{n\text{ entries}}\right), \qquad h^{(n)} = \left(-\frac{1}{2},\underbrace{\frac{1}{2n},...,\frac{1}{2n}}_{n\text{ entries}}\right).
\end{equation}
A direct calculation gives
\begin{equation}
    A_n := A(p^{(n)},h^{(n)}) = 2^{-\alpha}(n^{1 - \alpha} - 1), \qquad B_n := B(p^{(n)},h^{(n)}) = 2^{-\alpha}(1 + n^{1 - \alpha}).
\end{equation}
Note also that $N(p^{(n)}) = B_n$ which follows term by term:
\begin{equation}
    \left(\frac{1}{2}\right)^{\alpha - 2} \left(\frac{1}{2}\right)^2 = 2^{-\alpha}, \qquad n \left(\frac{1}{2n}\right)^{\alpha - 2} \left(\frac{1}{2n}\right)^2 = 2^{-\alpha} n^{1 - \alpha}.
\end{equation}
Moreover,
\begin{equation}\label{lim}
    \frac{A_n^2}{B_n^2} = \left(\frac{n^{1 - \alpha} -1}{n^{1 - \alpha} + 1}\right)^2 \longrightarrow 1, \qquad \text{as} \quad n \rightarrow \infty.
\end{equation}
Substituting $N(p^{(n)}) = B_n$ into \eqref{second directional derivatve} and factoring out the strictly positive quantity $\beta B_n^{\beta/\alpha}$ gives
\begin{equation}
    D_{h^{(n)}}^2 \left\|p^{(n)}\right\|_\alpha^\beta = \beta B_n^{\beta/\alpha} \left((\alpha - 1) + (\beta - \alpha) \frac{A_n^2}{B_n^2}\right).
\end{equation}
The bracket converges to $\beta - 1$. Concavity therefore excludes $\beta > 1$ when $0 < \alpha < 1$, while convexity excludes $\beta < 1$ when $\alpha > 1$. In both cases, it follows from the definition of $\beta$ that $\lambda \leq 1$.
\end{proof}

\begin{proposition}[The lower $\lambda$ bound for $0 < \alpha < 1$]\label{lower lambda bound}
   Assume $c > 0$, $0 < \alpha < 1$, and DPI. Then
   \begin{equation}
       \lambda \geq 1 - \frac{z}{1 - \alpha}.
   \end{equation}
\end{proposition}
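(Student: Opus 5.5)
The bound is equivalent to $q=(1-\lambda)(1-\alpha)/z\le 1$. I plan to argue by contradiction: assume $q>1$ and exhibit a state together with an admissible conditioned operation \eqref{conditional operation} that violates \eqref{data processing}. Since $0<\alpha<1$, \eqref{data processing} reads $Q_{\alpha,z}^\lambda(A|B)_\rho\le Q_{\alpha,z}^\lambda(A'|B')_{\mathcal{E}(\rho)}$, so it suffices to make the left side dominate the right by a scaling argument in a small parameter $\varepsilon$. By Proposition~\ref{upper lambda bound} we already know $\lambda\le 1$, hence $q\ge 0$ and $r=\lambda(1-\alpha)/z\le (1-\alpha)/z$. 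In the regime $q>1$ we have $\lambda<0$, so $r<0$ and the optimization is an infimum.

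\textbf{Reduction to a pure-state source.} I first want a source state whose entropy is explicit. I intend to take a pure state $\psi_{AB}$ with Schmidt coefficients $(1-\varepsilon,\varepsilon)$, so that $\rho_B=\operatorname{diag}(1-\varepsilon,\varepsilon)$ is faithful. Since $r<0<1$, Proposition~\ref{pure-state formula} gives
\begin{equation*}
Q_{\alpha,z}^\lambda(A|B)_\psi=\big(\Tr\rho_B^{\beta}\big)^{1-r},\qquad \beta=\frac{1+q}{1-r}.
\end{equation*}
This quantity stays bounded and tends to $1$ as $\varepsilon\downarrow 0$.

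\textbf{Choice of operation.} For the output I will use the conditioned operation that measures $B$ with a two-outcome instrument $\{\mathcal N^i\}$ whose Kraus operators are not diagonal in the Schmidt basis, keeps the outcome in $B'$, and does nothing (or applies a unitary) on $A$. The output is a classical--quantum-type state whose $B'$-marginal has a small eigenvalue of order $\varepsilon$. The target is the factor $\rho_{B'}^{q/2}$ sandwiched around $\sigma^{r}$ in the definition of $Q$. Paired with the infimum over $\sigma$ (with $r<0$, the optimizer wants to place weight where $\rho_{B'}^{q}$ is small), I expect the output value to scale like
\begin{equation*}
\varepsilon^{\,\kappa(q)},\qquad \kappa(q)>0 \text{ exactly when } q>1,
\end{equation*}
with the exponent governed by $z(q-1)$ coming from the $z$-th power of the trace. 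Concretely, I will bound $Q$ of the output from above by inserting a judiciously chosen $\sigma_B$, diagonal in the relevant basis with one eigenvalue of order $\varepsilon^{\theta}$, and optimize the exponent $\theta$. If the output value tends to $0$ while the source value tends to $1$, \eqref{data processing} is contradicted.

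\textbf{Main obstacle.} The exponent bookkeeping is the crux: the threshold must come out exactly at $q=1$ rather than at some other combination such as $c$ or $\beta$. As a warning sign, for a block that is a rank-one projector the exponent is $\alpha+(1-\lambda)(1-\alpha)$, which involves $c$ rather than $q$. The blocks therefore have to be genuinely non-commuting with $\rho_{B'}$, so that the $\varepsilon^{q/2}$ factors multiply an $O(1)$ off-diagonal overlap.

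If this direct construction fails, my fallback uses the transfer structure of Proposition~\ref{approximation lemma}. I would take the family \eqref{cq family} with $\omega=\operatorname{diag}(1-\varepsilon,\varepsilon)$ and compare the first-order coefficient for a channel on $B$, now keeping $\sigma=\omega$ fixed only to first order. I would then expand in $\varepsilon$ to extract a term of order $\varepsilon^{z(1-q)}$, which diverges precisely when $q>1$ and has the wrong sign for \eqref{positive decomposition inequality}.
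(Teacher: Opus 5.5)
There is a genuine gap. The decisive claim, that the output value behaves like $\varepsilon^{\kappa(q)}$ with $\kappa(q)>0$ exactly when $q>1$, is only anticipated, and it cannot hold.

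Since $r<0$ and $\sigma\le I$, every admissible $\sigma$ gives $\rho_{B'}^{q/2}\sigma^{r}\rho_{B'}^{q/2}\ge\rho_{B'}^{q}$. Bound $\rho_{B'}^{q}\ge\mu^{q}\ketbra{e}{e}$ with $\mu=\|\rho_{B'}\|_\infty\ge 1/d_{B'}$. Then use $\rho_{AB'}^{p}\ge\rho_{AB'}$, which is valid because $p=\alpha/z\le 1$; this is already forced by Theorem~\ref{transfer} together with Zhang's characterization. You get, for every state,
\[
Q_{\alpha,z}^{\lambda}(A|B')_{\rho}\;\ge\;\min\{1,d_A^{1-z}\}\,d_{B'}^{-(1+q)z} .
\]
So no fixed-dimensional construction can drive the output value to $0$. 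A violation would have to come from an $O(1)$ comparison, which you do not supply.

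Your construction is also blind to $q$ at the level you propose to compute it. Suppose $B'$ carries the classical outcome and the trial $\sigma$ is block diagonal in it. Then your upper bound is exactly the classical-register expression $\big(\sum_y p_y (Q^y)^{1/c}\big)^{c}$, which depends only on $c$ and the block spectra.
\begin{itemize}
\item For pure blocks, $Q^y=(\Tr(\rho_A^y)^{\beta})^{z(1-r)}$; note the exponent is $z(1-r)$, not $1-r$. The map $X\mapsto(\Tr X^{\beta})^{z(1-r)/c}=\|X\|_\beta^{1+(z-\alpha)/c}$ is convex for $z\ge\alpha$, and $\sum_y p_y\rho_A^y=\rho_A$. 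Hence this bound never drops below the source value.
\item For outcome-only blocks, $Q^y=\Tr(\rho_A^y)^{\alpha}\ge 1$, so the bound even exceeds $1$, which is larger than the source value.
\end{itemize}

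The fallback fails more simply. The first-order coefficient in Proposition~\ref{approximation lemma} is $\sum_i\Psi_{\alpha,z}(\omega^i\|\omega)$, which is independent of $\lambda$. This is because the optimizer is pinned at $\omega$, where $\omega^{q/2}\omega^{r}\omega^{q/2}=\omega^{(1-\alpha)/z}$. No expansion in $\varepsilon$ can then produce an exponent involving $q$. Separately, your reduction "$q>1\Rightarrow\lambda<0$" uses $z\ge 1-\alpha$, which you should cite from the same source.

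The missing idea is that the obstruction at $q>1$ is a non-commutative data-processing failure of a \emph{rescaled} $\alpha$-$z$ divergence. It is reached in the regime opposite to Proposition~\ref{approximation lemma}: the reference state must be driven by the small block rather than pinned at the marginal.
\begin{itemize}
\item The paper takes $\rho^{(n)}_{XB}=\ketbra{0}{0}\otimes(\omega-\sigma)+\sum_{j=1}^{n}\ketbra{j}{j}\otimes\sigma/n$ with $0<\sigma<\omega$.
\item For $0<\alpha<1$, splitting amplifies the $\sigma$-contribution by $n^{1-\alpha}$. Hence $n^{\alpha-1}Q^{\lambda}_{\alpha,z}(X|B)_{\rho^{(n)}}\to\inf_{\tau}\Phi_{p,q,r,z}(\sigma,\omega,\tau)$, and the same holds after any channel on $B$.
\item Lemma~\ref{Schatten factorization lemma} evaluates this infimum exactly as $\Psi_{\alpha/c,\,z/c}(\sigma\|\omega)^{c}$. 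So DPI for $H^{\lambda}_{\alpha,z}$ forces DPI for $D_{\alpha/c,z/c}$.
\item Zhang's necessary condition $z/c\ge 1-\alpha/c$ is then precisely $q\le 1$.
\end{itemize}
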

\begin{proof}
If $\lambda \geq 0$, the desired lower bound is automatic. It remains to treat $\lambda < 0$. Put $p := \alpha/z$, $q := (1 - \lambda)(1 - \alpha)/z$ and $r := \lambda(1 - \alpha)/z$, and $s := z$.

Let $\omega$ be a faithful density matrix and let $\sigma$ be positive with $0 < \sigma < \omega$. For each $n$, consider the c-q state
\begin{equation}
    \rho_{XB}^{(n)} := \ketbra{0}{0}_X \otimes (\omega - \sigma) + \sum_{j = 1}^n \ketbra{j}{j}_X \otimes \frac{\sigma}{n}.
\end{equation}
Its $B$-marginal is $\omega$. Since $r < 0$, the optimization in the conditional entropy is infimum. Block diagonality therefore gives
\begin{equation}
    Q_{\alpha,z}^\lambda(X|B)_{\rho^{(n)}} = \inf_{\tau \in \mathcal{S}_{++}} \Big\{\Phi_{p,q,r,s}(\omega - \sigma,\omega,\tau) + n^{1 - \alpha} \Phi_{p,q,r,s}(\sigma,\omega,\tau)\Big\}.
\end{equation}
Because $0 < \alpha < 1$, we have $n^{1 - \alpha} \rightarrow \infty$, and hence
\begin{equation}\label{limiting optimization}
    \lim_{n \rightarrow \infty} n^{\alpha - 1} Q_{\alpha,z}^\lambda(X|B)_{\rho^{(n)}} = \inf_{\tau \in \mathcal{S}_{++}} \Phi_{p,q,r,s}(\sigma,\omega,\tau).
\end{equation}
Indeed, after multiplying the preceding variational expression by $n^{\alpha - 1}$, the resulting objective is bounded below by $\Phi_{p,q,r,s}(\sigma,\omega,\tau)$. This gives the lower bound. Conversely, evaluating the objective at an arbitrarily good faithful state for the infimum on the right-hand side gives the matching upper bound, since the contribution involving $\omega - \sigma$ is multiplied by $n^{\alpha - 1} \rightarrow 0$.

Let $\mathcal{N}_{B \rightarrow B'}$ be a quantum channel. Applying $\mathcal{N}$ to $\rho_{XB}^{(n)}$ produces the same construction with $\sigma,\omega$ replaced by $\mathcal{N}(\sigma),\mathcal{N}(\omega)$. Since $0 < \alpha < 1$, conditional DPI is equivalent to
\begin{equation}
    Q_{\alpha,z}^\lambda(X|B)_{\rho^{(n)}} \leq Q_{\alpha,z}^\lambda(X|B')_{\mathcal{N}(\rho^{(n)})}
\end{equation}
Multiplying by $n^{\alpha - 1}$ and using \eqref{limiting optimization} yields
\begin{equation}\label{optimized ineq}
    \inf_{\tau_{B}} \Phi_{p,q,r,s}(\sigma,\omega,\tau) \leq \inf_{\tau_{B'}} \Phi_{p,q,r,s}\big(\mathcal{N}(\sigma),\mathcal{N}(\omega),\tau_{B'}\big).
\end{equation}
The same harmless depolarizing regularization used in Theorem \ref{transfer} makes all matrices faithful.

Set
\begin{equation}
    t = \frac{z}{1 - rz} = \frac{z}{c} > 0.
\end{equation}
For $X = \omega^{q/2} \sigma^{p/2}$,
\begin{equation}
    \Phi_{p,q,r,s}(\sigma,\omega,\tau) = \big\|\tau^{r/2} X\big\|_{2s}^{2s}.
\end{equation}
Applying Lemma \ref{Schatten factorization lemma}, we obtain the exact identity
\begin{align}
    \inf_{\tau \in \mathcal{S}_{++}} \Phi_{p,q,r,s}(\sigma,\omega,\tau) &= \big\|\omega^{q/2} \sigma^{p/2}\big\|_{2s}^{2t}\\
    &= \Psi_{pt,t}(\sigma\|\omega)^{z/t} = \Psi_{pt,t}(\sigma\|\omega)^c.
\end{align}
Since $c > 0$, \eqref{optimized ineq} implies DPI for $D_{pt,t}$. Zhang's exact characterization of data-processing for the $\alpha$-$z$ divergence \cite{zhang20_alphaz} now implies $t \geq 1 - pt$, which is equivalent to $\lambda \geq 1 - /(1 - \alpha)$, as desired.
\end{proof}

\begin{proposition}[The lower $\lambda$ bound for $\alpha > 1$]\label{lower lambda bound for alpha>1}
    Assume $c \geq 0$, $\alpha > 1$, and DPI. Then
    \begin{equation}
        \lambda \geq 1 + \frac{z}{1 - \alpha}.
    \end{equation}
\end{proposition}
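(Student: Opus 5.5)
The plan is to restate the desired bound as a sign condition on the exponent of $\rho_B$ in the pure-state formula, and then to violate DPI with a single LOCC conversion between bipartite pure states. Since $\alpha>1$, we have
\begin{equation}
\lambda\ \geq\ 1+\frac{z}{1-\alpha}\iff (1-\lambda)(\alpha-1)\leq z \iff 1+q\geq 0, \qquad q=\frac{(1-\lambda)(1-\alpha)}{z}.
\end{equation}
So I will assume $1+q<0$ and derive a contradiction with DPI. Note that $1+q<0$ forces $1-\lambda>0$, so $\lambda<1$, and then $r=\lambda(1-\alpha)/z$ may have either sign. Proposition \ref{pure-state formula} does not use $c$, so the boundary case $c=0$ needs no separate treatment.

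\textbf{Test operation.} Take $d=2$. Let $\psi_{AB}$ be the maximally entangled state, and let $\phi_{AB}$ be a pure state with Schmidt coefficients $(1-\varepsilon,\varepsilon)$ for some $0<\varepsilon<1/2$. Both states have faithful $B$-marginals. The spectrum $(1/2,1/2)$ is majorized by $(1-\varepsilon,\varepsilon)$, so by Nielsen's theorem $\psi\mapsto\phi$ is achievable by one-way LOCC. This can be arranged as an instrument $\{\mathcal{N}^i\}$ on $B$ followed by unitary corrections $\mathcal{M}^i$ on $A$ that depend on the outcome. Unitary channels are sub-unital, so this is an admissible operation of the form \eqref{conditional operation}. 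Concretely, measure $B$ with Kraus operators $\sqrt{1-\varepsilon}\,|0\rangle\langle 0|+\sqrt{\varepsilon}\,|1\rangle\langle 1|$ and its swapped version, and undo the swap by a local flip on $A$. DPI then demands
\begin{equation}
H_{\alpha,z}^\lambda(A|B)_\psi\leq H_{\alpha,z}^\lambda(A|B)_\phi .
\end{equation}

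\textbf{Evaluation by Proposition \ref{pure-state formula}.} I split according to the value of $r$.
\begin{itemize}
\item If $r<1$, then $\beta=(1+q)/(1-r)<0$. For $\psi$ the formula gives $-H_\beta(I/2)=-\log 2$, and for $\phi$ it gives $-H_\beta\big((1-\varepsilon,\varepsilon)\big)$. The state $\mathrm{diag}(1-\varepsilon,\varepsilon)$ is faithful with unequal eigenvalues, so Lemma \ref{renyi-parameter-uniqueness lemma} yields $H_\beta(\phi_B)>H_0(\phi_B)=\log 2$ because $\beta<0$. Hence $H^\lambda_{\alpha,z}(A|B)_\phi<-\log 2=H^\lambda_{\alpha,z}(A|B)_\psi$, which contradicts DPI.
\item If $r\geq 1$, formula \eqref{pure-state2} applies. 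Since $1+q<0$, we have $\|\rho_B^{1+q}\|_\infty=\lambda_{\min}(\rho_B)^{1+q}$, and therefore
\begin{equation}
H=\frac{z(1+q)}{1-\alpha}\log\lambda_{\min}(\rho_B).
\end{equation}
The prefactor is positive because $1+q<0$ and $1-\alpha<0$. So $H_\psi=\kappa\log\tfrac12>\kappa\log\varepsilon=H_\phi$ with $\kappa>0$, which again violates DPI.
\end{itemize}

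\textbf{Main obstacle.} The only delicate step is checking that the Nielsen conversion really has the admissible form \eqref{conditional operation}, with the instrument on $B$ and the sub-unital channels on $A$, rather than the reverse orientation. For two qubits I will verify this directly with the explicit instrument above: each outcome branch yields $\phi$ up to a local swap on both sides, and the swap on $B$ can be absorbed into the Kraus operator. I will also double-check the optimization conventions ($\inf$ versus $\sup$) used in Proposition \ref{pure-state formula} when $r<0$. That case falls under $\mu<0$ in Lemma \ref{variational optimization}, and the lemma's formula still applies there.
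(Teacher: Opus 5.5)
Your proposal is correct and follows essentially the paper's own argument. Like the paper, you rewrite the bound as $1+q\geq 0$, evaluate the maximally entangled state and a nonuniform pure state with the pure-state formula (using strict monotonicity of $H_\beta$ for $\beta<0$), and then contradict DPI via the majorization-allowed conversion; beyond the paper, you spell out the $r\geq 1$ case, which the paper only asserts, and you give an explicit two-qubit instrument on $B$ with unitary corrections on $A$, which confirms that the conversion has the admissible form \eqref{conditional operation}.
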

\begin{proof}
Let $\psi_P$ denote a bipartite pure state with Schmidt vector $P = (p_1,\cdots,p_d)$, i.e.,
\begin{equation}
    \ket{\psi_P}_{AB} := \sum_{i = 1}^d \sqrt{p_i} \ket{i}_A \ket{i}_B, \qquad p_i > 0.
\end{equation}

Towards a contradiction, assume $\lambda < 1 + z/(1 - \alpha)$, or equivalently, $q < -1$. If $r < 1$, then $\gamma := (1 + q)/(1 - r) < 0$. For the uniform vector $u_d$, $H_\gamma(u_d) = \log d$, while every faithful nonuniform $P$ satisfies $H_\gamma(P) > \log d$. Thus,
\begin{equation}\label{strict decrease}
    H_{\alpha,z}^\lambda(A|B)_{\psi_P} = -H_\gamma(P) < -H_\gamma(u_d) = H_{\alpha,z}^\lambda(A|B)_{\psi_{u_d}}.
\end{equation}
The same inequality holds when $r \geq 1$.

Nielsen's theorem \cite{nielsen1999conditions} states that $\ket{\psi_x}$ can be converted deterministically to $\ket{\psi_y}$ by LOCC if and only if $x$ is majorized by $y$. Since $u_d$ is majorized by $P$, the maximally entangled state van be converted to $\psi_P$. The strict decrease in \eqref{strict decrease} contradicts DPI. Hence $q \geq -1$, which is equivalent to $\lambda \geq 1 + z/(1 - \alpha)$.
\end{proof}

\begin{proposition}[The degenerate case $c = 0$]\label{c=0 case}
    Suppose $c = 1 - \lambda(1 - \alpha) = 0$ and $H_{\alpha,z}^\lambda$ satisfies DPI. Then
    \begin{equation}
        \alpha > 1, \qquad z= \alpha, \qquad \lambda = \frac{1}{1 - \alpha}.
    \end{equation}
    In particular, this case is exactly the $z = \alpha$ endpoint of the lower $\lambda$ boundary of $\mathcal{D}_2$.
\end{proposition}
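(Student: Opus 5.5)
The plan is to first force $\alpha>1$, and then pin down $z$ by two opposite inequalities. At $c=0$ the constraint $\lambda(1-\alpha)=1$ fixes $\lambda=1/(1-\alpha)$, so the optimization in \eqref{definition} is a supremum. In the notation \eqref{parameters2} this gives $r=1/z$ and $q=-p=-\alpha/z$. The reduced trace functional then reads
\begin{equation*}
\Tr\big(\rho_{AB}^{p/2}(I_A\otimes\rho_B^{-p/2}\sigma_B^{1/z}\rho_B^{-p/2})\rho_{AB}^{p/2}\big)^z .
\end{equation*}

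\emph{Step 1: exclude $0<\alpha<1$.} I would use a classical test, as in Proposition~\ref{upper lambda bound}. For a faithful $P_{XY}$ with diagonal $\sigma_Y$, the functional becomes $\sum_y\sigma_y\|p^{(y)}\|_\alpha^\alpha$. Its supremum over $\sigma$ is $\max_y\|p^{(y)}\|_\alpha^\alpha$, and this is also the full supremum, by the $\mu=1$ case of Lemma~\ref{variational optimization} applied blockwise. Hence for $\alpha<1$ we get $H(X|Y)_P=\max_y H_\alpha(p^{(y)})$. Now take $p^{(1)}$ uniform on $d\ge 2$ letters and $p^{(2)}$ nearly deterministic, with weight $p_1$ small. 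Then $H_\alpha(p_X)<\log d=H(X|Y)_P$, so discarding $Y$ violates DPI. Therefore $\alpha>1$ and $\lambda=1/(1-\alpha)<0$.

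\emph{Step 2: show $z\ge\alpha$.} Proposition~\ref{lower lambda bound for alpha>1} is stated for $c\ge0$, so it applies here and gives $\lambda\ge 1+z/(1-\alpha)$. Substituting $\lambda=1/(1-\alpha)$ and multiplying by $1-\alpha<0$ yields $1\le 1-\alpha+z$, that is, $z\ge\alpha$. As an independent check, I would rerun the argument of Theorem~\ref{transfer} with the family \eqref{cq family}. Since $c=0$, the zeroth-order term $\Psi_{0,z}(\omega\|\sigma)$ does not single out $\sigma=\omega$, so Lemma~\ref{optimization lemma} is unnecessary. The $t^\alpha$ coefficient is instead $\sup_\sigma G(\sigma)$. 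Using the decomposition $\omega=\sigma_0+n\cdot(\omega-\sigma_0)/n$, homogeneity, and $n^{1-\alpha}\to0$, this leaves
\begin{equation*}
\sup_\tau\big\|\tau^{1/2z}\omega^{-p/2}\sigma_0^{p/2}\big\|_{2z}^{2z}=\big\|\omega^{-p/2}\sigma_0^{p/2}\big\|_\infty^{2z},
\end{equation*}
where the equality is by H\"older together with a rank-one $\tau$. Monotonicity of this quantity under channels is an operator-monotonicity statement for $x\mapsto x^{p}$, and requires $p=\alpha/z\le1$, which again gives $z\ge\alpha$.

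\emph{Step 3: show $z\le\alpha$. This is the main obstacle.} Both tests above push in the same direction, so a different witness is needed. My first attempt would be a product state $\rho_A\otimes\pi_B$ with $\pi_B$ maximally mixed on $d_B$ levels. Literally, the supremum gives $\sup_\sigma(\Tr\sigma^{1/z})^z=d_B^{\,z-1}$ when $z>1$, so
\begin{equation*}
H(A|B)=H_\alpha(\rho_A)-\frac{z-1}{\alpha-1}\log d_B .
\end{equation*}
DPI under appending the system $B$ then forces $z\le1<\alpha$, which is inconsistent. This shows that at $c=0$ the entropy must be understood as the continuous extension mentioned in Lemma~\ref{product-state formula}, namely the limit $c\downarrow0$ along $\mathcal D_2$, rather than through the literal formula.

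So the genuine work in this step is to fix that convention and to transfer the necessary condition from nearby points $c>0$. Concretely, I would approximate $(\alpha,z,1/(1-\alpha))$ by triples with $c>0$. I would then apply Theorem~\ref{transfer}, which leaves $(\alpha,z)$ unchanged, together with Zhang's characterization~\cite{zhang20_alphaz}. For $\alpha>1$ this characterization yields $\alpha-1\le z\le\alpha\le 2z$, in particular $z\le\alpha$.

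The delicate point is justifying that DPI at the $c=0$ point entails the leading-order comparison used in Theorem~\ref{transfer}. I would try to obtain this by exchanging the limit $c\downarrow0$ with the $t\downarrow0$ expansion of Proposition~\ref{approximation lemma}, using uniform convergence on a compact faithful sublevel set. Combining this with Step 2 gives $z=\alpha$, and then $\lambda=1/(1-\alpha)=1+z/(1-\alpha)$, which is the lower boundary of $\mathcal D_2$ at $z=\alpha$.
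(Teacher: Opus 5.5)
Your Steps 1 and 2 are sound. Step 1 is a valid classical alternative to the paper's trace-channel argument. You only need the lower bound obtained by restricting to diagonal $\sigma_Y$, which is fortunate, since Lemma~\ref{variational optimization} does not by itself show that this is the full supremum when $z\neq1$. The main argument of Step 2 is exactly the paper's use of Proposition~\ref{lower lambda bound for alpha>1}.

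The genuine gap is Step 3, the bound $z\le\alpha$. First, the product-state ``inconsistency'' that motivates it is a miscomputation. Here $\Tr(\cdot)^z$ denotes $\Tr\big[(\cdot)^z\big]$, so for faithful $\pi_B$ the operator is $\rho_A^{p}\otimes\sigma^{1/z}$. The functional then equals $\Tr\rho_A^\alpha\,\Tr\sigma=\Tr\rho_A^\alpha$ for every $\sigma$, giving exactly $H_\alpha(\rho_A)$. This is the same identity $\Psi_{0,z}(\omega\|\sigma)=\Tr\sigma=1$ that you correctly use in Step 2, and the literal definition at $c=0$ is perfectly consistent. Second, and decisively, your remedy is not a valid inference. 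The hypothesis is DPI at the single triple $(\alpha,z,1/(1-\alpha))$, which says nothing about DPI at nearby triples with $c>0$. So Theorem~\ref{transfer} and Zhang's theorem are unavailable. No exchange of limits fixes this: at $c=0$ the zeroth-order term does not pin the optimizer at $\omega$, so the $t^\alpha$ coefficient is $\sup_\tau G(\tau)$, not $\sum_i\Psi_{\alpha,z}(\omega^i\|\omega)$. The $c=0$ hypothesis simply does not encode monotonicity of $\Psi_{\alpha,z}$.

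The missing idea is already in your Step 2 ``check'', but you drew the wrong conclusion from it. Since $\|\omega^{-p/2}\sigma^{p/2}\|_\infty^2=\min\{\mu:\sigma^p\le\mu\,\omega^p\}$, monotonicity asks that an ordering of $p$-th powers survive a channel, which only preserves orderings of $\sigma,\omega$ themselves. It therefore fails for every $p<1$, i.e.\ it forces $p\ge1$, not $p\le1$.

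Concretely, if $p<1$ then $x\mapsto x^{1/p}$ is not operator monotone (L\"owner--Heinz). So there are positive definite $A\le B$ with $A^{1/p}\not\le B^{1/p}$. Put $\sigma_0=A^{1/p}$ and $\omega_0=B^{1/p}$. Pick $v$ with $\langle v,\sigma_0v\rangle>\langle v,\omega_0v\rangle$ and $0<k<1$ with $k\sigma_0<\omega_0$. For $\sigma=k\sigma_0$ and $\omega=\omega_0$, the input value is $k^{p/2}\|B^{-1/2}A^{1/2}\|_\infty\le k^{p/2}$. After the two-outcome measurement with effects $\lvert v\rangle\langle v\rvert$ and $I-\lvert v\rangle\langle v\rvert$, the value is at least $\big(k\langle v,\sigma_0v\rangle/\langle v,\omega_0v\rangle\big)^{p/2}>k^{p/2}$. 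This contradicts the $\alpha>1$ direction of the isolated inequality. Hence $z\le\alpha$, and together with your $z\ge\alpha$ this gives $z=\alpha$. This is precisely how the paper concludes.
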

\begin{proof}
    When $c = 0$, \eqref{parameters2} gives $q = -p$, $r = 1/z$, $p = \alpha/z$. For the state \eqref{cq family}, the first term in \eqref{Q for cq family} is exactly $(1 - t)^\alpha$, independently of the optimizer. Consequently,
    \begin{equation}\label{Q in the proof of degenerate case}
        Q_{\alpha,z}^\lambda(X|B)_{\rho^{(t)}} = (1 - t)^\alpha + t^\alpha \sup_{\tau\in\mathcal{S}}G(\tau).
    \end{equation}
    For every $0 < \sigma < \omega$, we can isolate the functional
    \begin{equation}
        (\sigma,\omega)\longmapsto\sup_{\tau\in\mathcal{S}} \Phi_{p,-p,1/z,z}(\sigma,\omega,\tau) = \big\|\omega^{-p/2}\sigma^{p/2}\big\|_\infty^{2z}
    \end{equation}
    as in the proof of Theorem~\ref{transfer}. Indeed, for $\alpha > 1$ use $\omega = \sigma n(\omega - \sigma)/n$, while for $0 < \alpha < 1$ use $\omega = n(\sigma/n) + (\omega - \sigma)$ and multiply by $n^{\alpha - 1}$. Equation \eqref{Q in the proof of degenerate case} and DPI imply for every channel $\mathcal{N}$,
    \begin{equation}\label{positive decomposition inequality2}
    \big\|\omega^{-p/2}\sigma^{p/2}\big\|_\infty
        \begin{cases}
            \leq \big\|\mathcal{N}(\omega)^{-p/2}\mathcal{N}(\sigma)^{p/2}\big\|_\infty, & 0 < \alpha < 1,\\[5pt]
            \geq \big\|\mathcal{N}(\omega)^{-p/2}\mathcal{N}(\sigma)^{p/2}\big\|_\infty, & \alpha > 1.
        \end{cases}
    \end{equation}

    If $0 < \alpha < 1$, we choose commuting diagonal $0 < \sigma < \omega$ for which the eigenvalue ratios $\sigma_i/\omega_i$ are not all equal and let $\mathcal{N} = \Tr$. Then
    \begin{equation}
        \big\|(\Tr\omega)^{-p/2}(\Tr\sigma)^{p/2}\big\|_\infty = \left(\frac{\Tr\sigma}{\Tr\omega}\right)^{p/2} < \max_i \left(\frac{\sigma_i}{\omega_i}\right)^{p/2} = \big\|\omega^{-p/2}\sigma^{p/2}\big\|_\infty,
    \end{equation}
    contradicting the first line of \eqref{positive decomposition inequality2}. Hence $\alpha > 1$.

    Since $c = 0$ is equivalent to $\lambda = 1/(1 - \alpha)$, applying Proposition \ref{lower lambda bound for alpha>1} gives $z \geq \alpha$, or $p \leq 1$. Toward a contradiction, suppose $p < 1$. The L\"owner--Heinz theorem states that $t\mapsto t^a$ is operator monotone on positive matrices exactly for $0 \leq a \leq 1$ \cite[Chapter V]{bhatia1997graduate}. Since $1/p > 1$, there exist positive definite $A \leq B$ such that $A^{1/p} \nleq B^{1/p}$. Put $\sigma_0 = A^{1/p}$ and $\omega_0 = B^{1/p}$. Then
    \begin{equation}
        \big\|\omega_0^{-p/2}\sigma_0^{p/2}\big\|_\infty = \big\|B^{-1/2}A^{1/2}\big\|_\infty \leq 1,
    \end{equation}
    whereas there is a unit vector $v$ with $\langle v,\sigma_0 v\rangle > \langle v,\omega_0 v\rangle$. Choose $0 < k < 1$ so small that $k\sigma_0 < \omega_0$, and consider the binary quantum-to-classical measurement channel $\mathcal{N}$ with first effect $P_v = \ketbra{v}{v}$:
    \begin{equation}
        \mathcal{N}(X) = \Tr(P_v X)\ketbra{0}{0} + \Tr\big((I - P_v)X\big)\ketbra{1}{1}.
    \end{equation}
    For inputs $\sigma = k \sigma_0$ and $\omega = \omega_0$, the first output coordinates of $\mathcal{N}(\sigma)$ and $\mathcal{N}(\omega)$ are $k\langle v,\sigma_0 v\rangle$ and $\langle v,\omega_0 v\rangle$, respectively. Then
    \begin{align}
        \big\|\omega^{-p/2}\sigma^{p/2}\big\|_\infty &= k^{p/2}\big\|\omega_0^{-p/2}\sigma_0^{p/2}\big\|_\infty = k^{p/2}\big\|B^{-1/2}A^{1/2}\big\|_\infty\\
        &\leq k^{p/2} < \left(\frac{k\langle v,\sigma_0 v\rangle}{\langle v,\omega_0 v\rangle}\right)^{p/2} \leq \big\|\mathcal{N}(\omega)^{-p/2}\mathcal{N}(\sigma)^{p/2}\big\|_\infty.
    \end{align}
    This contradicts the second line of \eqref{positive decomposition inequality2}. Hence $p = 1$, i.e. $z = \alpha$.
\end{proof}

\begin{proof}[Proof of Theorem \ref{Sharp DPI region}]
    Sufficiency is \cite[Theorem 5.2]{rubboli2026quantum}. For necessity, Proposition \ref{exclude c<0} excludes $c = 1 - \lambda(1 - \alpha) < 0$. If $c = 0$, Proposition \ref{c=0 case} gives precisely a point on the lower boundary of $\mathcal{D}_2$. If $c > 0$, Theorem \ref{transfer} supplies the stated $z$ bounds, Proposition \ref{upper lambda bound} gives $\lambda \leq 1$, and Propositions \ref{lower lambda bound} and \ref{lower lambda bound for alpha>1} give the appropriate lower bound. These conditions are exactly \eqref{cconstraints1} and \eqref{constraints2}.
\end{proof}


\section{Coarse-graining and the club-sandwiched family}

Let $X$ be a classical register with finite alphabet $\mathcal{X}$, and let $B$ be a finite-dimensional quantum system. A classical--quantum state on $XB$ can be written as
\begin{equation}
\rho_{XB} = \sum_{x\in\mathcal{X}} |x\rangle\langle x|_X \otimes T_x, \qquad T_x \geq 0, \qquad \rho_B = \sum_{x\in\mathcal{X}} T_x.
\end{equation}
For an arbitrary deterministic function $f:\mathcal{X}\to\mathcal{Z}$, let $R_f$ denote the associated classical channel defined by
\begin{equation}
R_f\left(\lvert x\rangle\langle x\rvert_X\right) = \lvert f(x)\rangle\langle f(x)\rvert_Z.
\end{equation}
We say that $H^\lambda_{\alpha,z}$ satisfies universal coarse-graining if, for every finite classical--quantum state $\rho_{XB}$ and every deterministic function $f$,
\begin{equation}
H^\lambda_{\alpha,z}(X|B)_\rho \geq H^\lambda_{\alpha,z}(Z|B)_{(R_f\otimes\operatorname{id}_B)(\rho)}. \label{coarse-graining-inequality}
\end{equation}

We now introduce the main result of this section, which shows that universal monotonicity under deterministic coarse-graining characterizes the club-sandwiched surface within the data-processing region.
\begin{theorem}\label{coarse graining thm}
Let $(\alpha,z,\lambda)\in D$. Then the following are equivalent:
\begin{enumerate}
\item[(i)] $z=\alpha$;
\item[(ii)] $H^\lambda_{\alpha,z}$ satisfies universal coarse-graining \eqref{coarse-graining-inequality}.
\end{enumerate}
\end{theorem}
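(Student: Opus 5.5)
The plan is to prove the two implications separately. For (i)$\Rightarrow$(ii) the argument works blockwise for a fixed reference state $\sigma_B$. For (ii)$\Rightarrow$(i) I would build a perturbative counterexample from the classical--quantum family of Proposition~\ref{approximation lemma}.

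\emph{(i)$\Rightarrow$(ii).} Let $z=\alpha$, so $p=\alpha/z=1$. For a c-q state $\rho_{XB}=\sum_x \ketbra{x}{x}\otimes T_x$ and a fixed $\sigma_B$, put $M:=\rho_B^{q/2}\sigma_B^r\rho_B^{q/2}\geq 0$. Block diagonality gives
\begin{equation*}
Q(\rho_{XB}|\sigma_B)=\sum_x \Tr\big(T_x^{1/2}MT_x^{1/2}\big)^\alpha=\sum_x \Tr\big(M^{1/2}T_xM^{1/2}\big)^\alpha ,
\end{equation*}
where the second equality uses that $Y^\dagger Y$ and $YY^\dagger$ have the same nonzero spectrum. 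The coarse-grained state has blocks $\sum_{x\in f^{-1}(z')}T_x$. It has the same $B$-marginal $\rho_B$, hence the same $M$ for each $\sigma_B$. The map $Y\mapsto M^{1/2}YM^{1/2}$ is linear, and the McCarthy inequalities apply to positive operators:
\begin{align*}
\Tr(A+B)^\alpha &\geq \Tr A^\alpha+\Tr B^\alpha \qquad (\alpha>1),\\
\Tr(A+B)^\alpha &\leq \Tr A^\alpha+\Tr B^\alpha \qquad (0<\alpha<1).
\end{align*}
Hence for every $\sigma_B$ we get $Q_Z(\sigma_B)\geq Q_X(\sigma_B)$ when $\alpha>1$, and $Q_Z(\sigma_B)\leq Q_X(\sigma_B)$ when $\alpha<1$. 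Both sides are optimized in the same direction, so the inequality passes to the optimized quantities. Taking $\frac{1}{1-\alpha}\log$, whose sign flips exactly according to $\alpha\gtrless1$, yields \eqref{coarse-graining-inequality} in both cases.

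\emph{(ii)$\Rightarrow$(i).} Take $(\alpha,z,\lambda)\in\mathcal D$ with $c>0$; the point with $c=0$ already has $z=\alpha$ by Proposition~\ref{c=0 case}. Take the state $\rho^{(t)}_{XB}$ from \eqref{cq family} and the function $f$ that merges all symbols $i\geq1$ into one symbol. The coarse-grained state is again of the form \eqref{cq family}, now with the trivial decomposition $\omega=\omega$. Proposition~\ref{approximation lemma}, applied to both states, gives leading coefficients $\sum_i\Psi_{\alpha,z}(\omega^i\|\omega)$ and $\Psi_{\alpha,z}(\omega\|\omega)=1$. Cancelling $(1-t)^\alpha$, dividing by $t^\alpha$ and letting $t\downarrow0$, exactly as in Theorem~\ref{transfer}, shows that universal coarse-graining forces the following for every finite positive decomposition $\omega=\sum_i\omega^i$:
\begin{align*}
\sum_i\Psi_{\alpha,z}(\omega^i\|\omega) &\leq 1 \qquad (\alpha>1),\\
\sum_i\Psi_{\alpha,z}(\omega^i\|\omega) &\geq 1 \qquad (\alpha<1).
\end{align*}
It remains to violate this inequality whenever $z\neq\alpha$.

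The decomposition I would test is rank one up to regularization. Fix an orthonormal basis $\{e_i\}$ and set $\omega^i:=\omega^{1/2}\ketbra{e_i}{e_i}\omega^{1/2}$, which sum to $\omega$. For such rank-one $\omega^i$, a direct computation with $s=(1-\alpha)/z$ gives
\begin{equation*}
\Psi_{\alpha,z}(\omega^i\|\omega)=\langle e_i,\omega e_i\rangle^{\alpha-z}\,\langle e_i,\omega^{1+s}e_i\rangle^{z}.
\end{equation*}
When $\{e_i\}$ is the eigenbasis of $\omega$, the sum equals $\sum_i w_i=1$, so we are exactly at equality. I would therefore work in dimension two with $\omega=\mathrm{diag}(w,1-w)$ and $e_i$ the eigenbasis rotated by an angle $\theta$. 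The plan is to expand $S(\theta):=\sum_i\Psi_{\alpha,z}(\omega^i\|\omega)$ to second order in $\theta$. The first-order term should vanish by symmetry, so the quadratic coefficient $\kappa(\alpha,z,w)$ decides.

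The main obstacle is to show that $\kappa$ has the forbidden sign, meaning $\kappa>0$ for $\alpha>1$ or $\kappa<0$ for $\alpha<1$, for some $w$ whenever $z\neq\alpha$ within $\mathcal D$. Part (i) guarantees that $\kappa$ has the allowed sign when $z=\alpha$. I expect $\kappa$ to be proportional to a factor like $(z-\alpha)$ times a sign-definite expression coming from the strict concavity or convexity of $x\mapsto x^{\alpha-z}$ and $x\mapsto x^{1+s}$. If the sign does not flip for every $w$, I would instead use the extreme regime $w\to0$, or replace rank-one pieces by higher-multiplicity splittings $\omega^i/n$ and exploit the homogeneity $n^{1-\alpha}$ as in Theorem~\ref{transfer}. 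Finally, the rank-one $\omega^i$ must be replaced by faithful $(1-\delta)\omega^i+\delta\omega/n$ so that Proposition~\ref{approximation lemma} applies. Continuity of $\Psi_{\alpha,z}$ in its first argument then preserves a strict violation for small $\delta$.
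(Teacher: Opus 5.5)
Your proof of (i)$\Rightarrow$(ii) is fine. For fixed $\sigma_B$ both states share $\rho_B$ and hence $M$, and McCarthy's inequality on the blocks $M^{1/2}T_xM^{1/2}$ gives a pointwise inequality that survives either optimization; this makes self-contained a direction the paper only cites. The gap is in (ii)$\Rightarrow$(i), and it is fatal to your route. Because $f$ merges \emph{all} symbols $i\ge 1$, your leading-order criterion compares $\sum_i\Psi_{\alpha,z}(\omega^i\|\omega)$ with $\Psi_{\alpha,z}(\omega\|\omega)=1$. That inequality holds on all of $\mathcal{D}$ for every $z$, so no decomposition can violate it. To see this, set $\rho=\sum_i\ketbra{i}{i}\otimes\omega^i$ and $X=I\otimes\omega$; then $\rho\le X$ and the sum equals $\Psi_{\alpha,z}(\rho\|X)$. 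For $\alpha<1$ one has $\gamma=(1-\alpha)/z\in(0,1]$ on $\mathcal{D}_1$, so L\"owner--Heinz gives $\rho^{\alpha/2z}X^{\gamma}\rho^{\alpha/2z}\ge\rho^{1/z}$, and the sum is at least $\Tr\rho=1$. For $\alpha>1$ one has $\mu=(\alpha-1)/z\in(0,1]$ on $\mathcal{D}_2$. Then $\rho^{\mu}\le X^{\mu}$, hence $\rho^{\mu/2}X^{-\mu}\rho^{\mu/2}\le I$, hence $\rho^{\alpha/2z}X^{-\mu}\rho^{\alpha/2z}\le\rho^{1/z}$, and the sum is at most $1$. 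These are exactly the allowed directions. So the rank-one test, the regime $w\to0$, the splittings $\omega^i/n$ and the regularized pieces can never produce a contradiction. Your own expansion confirms this. With $\beta=1+(1-\alpha)/z$ one finds $S(\theta)=1+\kappa\theta^2+O(\theta^4)$, where $\kappa=z\bigl(w^{1-\beta}(1-w)^{\beta}+w^{\beta}(1-w)^{1-\beta}-1\bigr)$. The $(\alpha-z)$ contributions cancel, so $\kappa$ is not proportional to $z-\alpha$. The bracket is convex in $\beta$ and vanishes at $\beta=0,1$. Since $\beta>1$ for $\alpha<1$ and $\beta\in[0,1)$ for $\alpha>1$, $\kappa$ always has the allowed sign.

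The missing idea is to merge only \emph{part} of the fine blocks and keep an unmerged remainder, so that the reference state is decoupled from the merged operator. The paper writes $\tau=R+\varepsilon A_\eta+\varepsilon C_\eta$ and merges only the $A$- and $C$-blocks. The $R$-block then cancels at order $t^\alpha$, which gives $(\alpha-1)\Delta_\tau(A,C)\ge 0$ for \emph{arbitrary} faithful $\tau$ and $A,C\ge0$. Your criterion is the degenerate case $R=0$, $\tau=A+C$. The paper then perturbs $A+C$ and $\tau$ independently: $A=\ketbra{0}{0}$, $C=\ketbra{v_t}{v_t}$, and $\tau_s\propto(I+sS)^{z/(1-\alpha)}$ with $S$ the Pauli-$X$ matrix. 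The Hessian of $\Delta_{\tau_s}(A,C)$ at $(t,s)=(0,0)$ has determinant proportional to $z(\alpha-1)(z-\alpha)$. The decisive entry is the off-diagonal one, $z(\alpha-1)$ up to a positive factor, which exists only because $\tau$ moves independently of $A+C$. The required semidefiniteness forces $(\alpha-1)(z-\alpha)\ge0$. Combined with $z\ge\alpha$ on $\mathcal{D}_1$ and $z\le\alpha$ on $\mathcal{D}_2$, this gives $z=\alpha$.
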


For the necessity direction, we first derive an auxiliary consequence of universal coarse-graining that reduces the problem to a matrix inequality independent of $\lambda$. For a faithful state $\tau$ and $A,C\geq 0$, define
\begin{equation}
\Delta_\tau(A,C):=\Psi_{\alpha,z}(A+C\Vert\tau)-\Psi_{\alpha,z}(A\Vert\tau)-\Psi_{\alpha,z}(C\Vert\tau).
\end{equation}

\begin{lemma}\label{coarse graining criterion}
Let $(\alpha,z,\lambda)\in D$ with $z\neq\alpha$. If $H^\lambda_{\alpha,z}$ satisfies universal coarse-graining, then for every faithful state $\tau$ and every $A,C\geq 0$,
\begin{equation}
(\alpha-1)\Delta_\tau(A,C)\geq 0.
\end{equation}
\end{lemma}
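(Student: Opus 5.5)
The plan is to reuse the small-parameter classical--quantum family of Proposition~\ref{approximation lemma}, choosing the decomposition so that a deterministic merge of two labels implements the map $(A,C)\mapsto A+C$ on the first-order coefficient.

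\emph{Preliminary reduction.} On $\mathcal{D}$ we have $c=1-\lambda(1-\alpha)\geq 0$. By Proposition~\ref{c=0 case}, $c=0$ occurs in $\mathcal{D}$ only at $z=\alpha$, so the hypothesis $z\neq\alpha$ guarantees $c>0$ and Proposition~\ref{approximation lemma} applies.

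\emph{Homogeneity and scaling.} The functional $\Psi_{\alpha,z}(\cdot\Vert\tau)$ is homogeneous of degree $\alpha$ in its first argument, so $\Delta_\tau(kA,kC)=k^\alpha\Delta_\tau(A,C)$. It therefore suffices to prove the claim for $A,C>0$ small enough that $A+C<\tau$. The case of general positive semidefinite $A,C$ then follows by replacing $A,C$ with $A+\epsilon I$ and $C+\epsilon I$ and letting $\epsilon\downarrow 0$. This uses continuity of $\Psi_{\alpha,z}(\cdot\Vert\tau)$ in its first argument for faithful $\tau$, since only nonnegative powers of the first argument appear.

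\emph{Construction and expansion.} Take $\omega=\tau$ with decomposition $\omega^1=A$, $\omega^2=C$, $\omega^3=\tau-A-C>0$, and form $\rho^{(t)}_{XB}$ as in \eqref{cq family} with $X$ ranging over $\{0,1,2,3\}$. Let $f$ merge labels $1$ and $2$ into a single symbol and fix $0$ and $3$. Then $(R_f\otimes\operatorname{id}_B)(\rho^{(t)})$ is again of the form \eqref{cq family}, with the same $\omega=\tau$ and the three-term decomposition $\{A+C,\ \tau-A-C\}$ (plus the block $0$). Both states have $B$-marginal $\tau$. Proposition~\ref{approximation lemma} gives
\begin{align}
Q^\lambda_{\alpha,z}(X|B)_{\rho^{(t)}} &= (1-t)^\alpha+t^\alpha\big[\Psi_{\alpha,z}(A\Vert\tau)+\Psi_{\alpha,z}(C\Vert\tau)+\Psi_{\alpha,z}(\tau-A-C\Vert\tau)\big]+o(t^\alpha),\\
Q^\lambda_{\alpha,z}(Z|B)_{R_f(\rho^{(t)})} &= (1-t)^\alpha+t^\alpha\big[\Psi_{\alpha,z}(A+C\Vert\tau)+\Psi_{\alpha,z}(\tau-A-C\Vert\tau)\big]+o(t^\alpha).
\end{align}

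\emph{Comparing coefficients.} Since $H=\frac{1}{1-\alpha}\log Q$, the coarse-graining inequality \eqref{coarse-graining-inequality} is equivalent to $Q(X|B)\geq Q(Z|B)$ when $0<\alpha<1$, and to $Q(X|B)\leq Q(Z|B)$ when $\alpha>1$. Subtract the two expansions, so that the common terms $(1-t)^\alpha$ and the $\Psi_{\alpha,z}(\tau-A-C\Vert\tau)$ contributions cancel. Divide by $t^\alpha$ and let $t\downarrow 0$. For $0<\alpha<1$ this gives $\Delta_\tau(A,C)\leq 0$, and for $\alpha>1$ it gives $\Delta_\tau(A,C)\geq 0$. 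Both cases are exactly $(\alpha-1)\Delta_\tau(A,C)\geq 0$.

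\emph{Expected obstacle.} The delicate point is ensuring Proposition~\ref{approximation lemma} applies verbatim, in particular that the $o(t^\alpha)$ remainders are genuinely lower order for both decompositions. This requires that every block be positive definite, which is why I reduce first to $A,C>0$ with $A+C<\tau$ via homogeneity. The remaining checks are routine: the coarse-grained state again has marginal $\tau$, and the merged block $A+C$ is positive definite.
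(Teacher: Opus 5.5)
Your proposal is correct and follows essentially the same route as the paper. Both arguments use the four-label family of Proposition~\ref{approximation lemma} with $\tau$ split into a positive remainder plus (scaled, regularized) copies of $A$ and $C$, merge the two labels carrying $A$ and $C$, compare the $t^\alpha$ coefficients, and then remove the scaling by degree-$\alpha$ homogeneity and the regularization by continuity; the only cosmetic difference is that you get $c>0$ by invoking Proposition~\ref{c=0 case} (which implicitly relies on sufficiency of DPI on $\mathcal{D}$), whereas it follows directly from the defining inequalities, since $c\geq\alpha$ on $\mathcal{D}_1$ and $c\geq\alpha-z>0$ on $\mathcal{D}_2$ when $z<\alpha$.
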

\begin{proof}
For $\eta>0$, let $A_\eta = A+\eta I$ and $C_\eta = C+\eta I$. Then $A_\eta,C_\eta>0$. Since $\tau>0$, there exists a sufficiently small $\varepsilon>0$ such that $R = \tau-\varepsilon(A_\eta+C_\eta) > 0$, and hence $\tau = R+\varepsilon A_{\eta}+\varepsilon C_{\eta}$. Since $(\alpha,z,\lambda)\in D$ and $z\neq\alpha$, we have $c:=1-\lambda(1-\alpha)>0$, so Proposition~\ref{approximation lemma} applies. For $0<t<1$, consider the classical--quantum state
\begin{equation}
\rho^{\mathrm f}_{XB}(t) = (1-t)\lvert 0\rangle\langle 0\rvert_X \otimes \tau + t\lvert 1\rangle\langle 1\rvert_X \otimes R + t\lvert 2\rangle\langle 2\rvert_X \otimes \varepsilon A_\eta + t\lvert 3\rangle\langle 3\rvert_X \otimes \varepsilon C_\eta.
\end{equation}
Let $f:\{0,1,2,3\}\to\{0,1,2\}$ be given by $f(0)=0$, $f(1)=1$, and $f(2)=f(3)=2$. The corresponding coarse-grained state is
\begin{equation}
\rho^{\mathrm c}_{ZB}(t) = (R_f \otimes \operatorname{id}_B)(\rho^{\mathrm f}_{XB}(t)) = (1-t)\lvert 0\rangle\langle 0\rvert_Z \otimes \tau + t\lvert 1\rangle\langle 1\rvert_Z \otimes R + t\lvert 2\rangle\langle 2\rvert_Z \otimes \varepsilon(A_\eta+C_\eta).
\end{equation}
Both states have marginal $\tau$ on $B$. Proposition~\ref{approximation lemma}, together with the homogeneity of $\Psi_{\alpha,z}$, gives
\begin{align}
Q^\lambda_{\alpha,z}(X|B)_{\rho^{\mathrm f}(t)} &= (1-t)^\alpha + t^\alpha\left(\Psi_{\alpha,z}(R\Vert\tau) + \varepsilon^\alpha\Psi_{\alpha,z}(A_\eta\Vert\tau) + \varepsilon^\alpha\Psi_{\alpha,z}(C_\eta\Vert\tau)\right) + o(t^\alpha),\\
Q^\lambda_{\alpha,z}(Z|B)_{\rho^{\mathrm c}(t)} &= (1-t)^\alpha + t^\alpha\left(\Psi_{\alpha,z}(R\Vert\tau) + \varepsilon^\alpha\Psi_{\alpha,z}(A_\eta + C_\eta\Vert\tau)\right)+o(t^\alpha).
\end{align}
Therefore,
\begin{equation}
Q^\lambda_{\alpha,z}(X|B)_{\rho^{\mathrm f}(t)} - Q^\lambda_{\alpha,z}(Z|B)_{\rho^{\mathrm c}(t)} = -t^\alpha\varepsilon^\alpha\Delta_\tau(A_\eta,C_\eta) + o(t^\alpha).
\end{equation}
Universal coarse-graining implies
\begin{equation}
H^\lambda_{\alpha,z}(X|B)_{\rho^{\mathrm f}(t)}\geq H^\lambda_{\alpha,z}(Z|B)_{\rho^{\mathrm c}(t)}.
\end{equation}
If $0<\alpha<1$, then $Q^\lambda_{\alpha,z}(X|B)_{\rho^{\mathrm f}(t)}\geq Q^\lambda_{\alpha,z}(Z|B)_{\rho^{\mathrm c}(t)}$. Dividing by $t^\alpha$ and letting $t\downarrow 0$ gives $\Delta_\tau(A_\eta,C_\eta)\leq 0$. If $\alpha>1$, the inequality between the two $Q$-quantities is reversed, and the same argument gives $\Delta_\tau(A_\eta,C_\eta)\geq 0$. Hence,
\begin{equation}
(\alpha-1)\Delta_\tau(A_\eta,C_\eta)\geq 0.
\end{equation}
Since $\Psi_{\alpha,z}(T\Vert\tau)$ is continuous in $T\geq 0$, we have $\Delta_\tau(A_\eta,C_\eta)\to\Delta_\tau(A,C)$ as $\eta\downarrow 0$. Therefore,
\begin{equation}
(\alpha-1)\Delta_\tau(A,C)\geq 0.
\end{equation}
The proof is complete.
\end{proof}

We are now ready to prove Theorem~\ref{coarse graining thm}.
\begin{proof}[Proof of Theorem~\ref{coarse graining thm}]
The case $(i)\Rightarrow(ii)$ was proved in \cite[Lemma 3]{rubboli2026strong}. 

We now prove $(ii) \Rightarrow (i)$. Assume that $H^\lambda_{\alpha,z}$ satisfies universal coarse-graining, and suppose for contradiction that $z \neq \alpha$. Let
\begin{equation}
S = 
\begin{pmatrix} 
0 & 1 \\ 
1 & 0 
\end{pmatrix}, \qquad K_s = I + sS, \qquad \lvert s\rvert < 1.
\end{equation}
Since the eigenvalues of $K_s$ are $1 + s$ and $1 - s$, we have $K_s > 0$. Define the faithful state $\tau_s = \frac{K_s^{\frac{z}{1-\alpha}}}{\Tr K_s^{\frac{z}{1-\alpha}}}$. There exists a scalar $\nu_s > 0$ such that $\tau_s^{\frac{1-\alpha}{z}} = \nu_s K_s$. Hence, for every $T \geq 0$,
\begin{equation}
\Psi_{\alpha,z}(T\Vert\tau_s) = \nu_s^z \Tr\left(T^{\frac{\alpha}{2z}} K_s T^{\frac{\alpha}{2z}}\right)^z.
\end{equation}

Let
\begin{equation}
P = \lvert 0\rangle\langle 0\rvert, \qquad \lvert v_t\rangle = \sin t \lvert 0\rangle + \cos t \lvert 1\rangle, \qquad R_t = \lvert v_t\rangle\langle v_t\rvert.
\end{equation}
Define
\begin{equation}
\delta(t,s) = \Tr\left((P + R_t)^{\frac{\alpha}{2z}} K_s (P + R_t)^{\frac{\alpha}{2z}}\right)^z - \Tr\left(P^{\frac{\alpha}{2z}} K_s P^{\frac{\alpha}{2z}}\right)^z - \Tr\left(R_t^{\frac{\alpha}{2z}} K_s R_t^{\frac{\alpha}{2z}}\right)^z.
\end{equation}
Then
\begin{equation}
\Delta_{\tau_s}(P,R_t) = \nu_s^z \delta(t,s).
\end{equation}
Since $\nu_s > 0$, Lemma \ref{coarse graining criterion} implies
\begin{equation}
(\alpha - 1)\delta(t,s) \geq 0.
\end{equation}

We next determine the quadratic part of $\delta(t,s)$ at $(0,0)$. For $s = 0$, we have $K_0 = I$, and since $P$ and $R_t$ are rank-one projections,
\begin{equation}
\delta(t,0) = \Tr(P + R_t)^\alpha - 2.
\end{equation}
The eigenvalues of $P + R_t$ are $1 + \sin t$ and $1 - \sin t$, and therefore
\begin{equation}
\delta(t,0) = (1 + \sin t)^\alpha + (1 - \sin t)^\alpha - 2.
\end{equation}
It follows that
\begin{equation}
\frac{\partial\delta}{\partial t}(0,0) = 0, \qquad \frac{\partial^2\delta}{\partial t^2}(0,0) = 2\alpha(\alpha - 1).
\end{equation}
For $t = 0$, we have $P + R_0 = I$. Since the eigenvalues of $K_s = I + sS$ are $1 + s$ and $1 - s$, while $PK_sP = P$ and $R_0K_sR_0 = R_0$, we obtain
\begin{equation}
\delta(0,s) = (1 + s)^z + (1 - s)^z - 2.
\end{equation}
Hence
\begin{equation}
\frac{\partial\delta}{\partial s}(0,0) = 0, \qquad \frac{\partial^2\delta}{\partial s^2}(0,0) = 2z(z - 1).
\end{equation}
It remains to calculate the mixed derivative. Since $R_t K_s R_t = \left(1 + s\sin(2t)\right)R_t$,
\begin{equation}
\Tr\left(R_t^{\frac{\alpha}{2z}} K_s R_t^{\frac{\alpha}{2z}}\right)^z = \left(1 + s\sin(2t)\right)^z.
\end{equation}
Therefore,
\begin{equation}
\left.\frac{\partial^2}{\partial t\partial s}\Tr\left(R_t^{\frac{\alpha}{2z}} K_s R_t^{\frac{\alpha}{2z}}\right)^z\right|_{t = s = 0} = 2z.
\end{equation}
For the remaining term, using $K_s = I + sS$, we have
\begin{equation}
(P + R_t)^{\frac{\alpha}{2z}} K_s (P + R_t)^{\frac{\alpha}{2z}} = (P + R_t)^{\frac{\alpha}{z}} + s(P + R_t)^{\frac{\alpha}{2z}} S (P + R_t)^{\frac{\alpha}{2z}}.
\end{equation}
Then,
\begin{align}
\left.\frac{\partial}{\partial s}\Tr\left((P + R_t)^{\frac{\alpha}{2z}} K_s (P + R_t)^{\frac{\alpha}{2z}}\right)^z\right|_{s = 0} &= z\Tr\left((P + R_t)^{\frac{\alpha(z - 1)}{z}} (P + R_t)^{\frac{\alpha}{2z}} S (P + R_t)^{\frac{\alpha}{2z}}\right) \\
&= z\Tr\left(S(P + R_t)^\alpha\right).
\end{align}
Moreover,
\begin{equation}
P + R_t = 
\begin{pmatrix} 
1 + \sin^2 t & \sin t\cos t \\ 
\sin t\cos t & \cos^2 t 
\end{pmatrix}, \qquad \left.\frac{d}{dt}(P + R_t)\right|_{t = 0} = S.
\end{equation}
Since $P + R_0 = I$, the derivative of the matrix power at $t = 0$ gives
\begin{align}
\left.\frac{d}{dt}\Tr\left(S(P + R_t)^\alpha\right)\right|_{t = 0} &= \alpha\Tr\left(S(P + R_0)^{\alpha - 1}\left.\frac{d}{dt}(P + R_t)\right|_{t = 0}\right) \\
&= \alpha\Tr\left(SI^{\alpha - 1}S\right) \\
&= 2\alpha.
\end{align}
Therefore,
\begin{equation}
\left.\frac{\partial^2}{\partial t\partial s}\Tr\left((P + R_t)^{\frac{\alpha}{2z}} K_s (P + R_t)^{\frac{\alpha}{2z}}\right)^z\right|_{t = s = 0} = 2\alpha z.
\end{equation}
Combining this with the contribution $2z$ from the $R_t$ term gives
\begin{equation}
\frac{\partial^2\delta}{\partial t\partial s}(0,0) = 2\alpha z - 2z = 2z(\alpha - 1).
\end{equation}

The second-order Taylor expansion of $\delta$ at $(0,0)$ is therefore
\begin{equation}
\delta(t,s) = \alpha(\alpha - 1)t^2 + 2z(\alpha - 1)ts + z(z - 1)s^2 + O\left((\lvert t\rvert + \lvert s\rvert)^3\right).
\end{equation}
The corresponding quadratic-form matrix is
\begin{equation}
M_{\alpha,z} = 
\begin{pmatrix} 
\alpha(\alpha - 1) & z(\alpha - 1) \\ 
z(\alpha - 1) & z(z - 1) 
\end{pmatrix}, \qquad \det M_{\alpha,z} = z(\alpha - 1)(z - \alpha).
\end{equation}
If $0 < \alpha < 1$, condition $(\alpha - 1)\delta(t,s) \geq 0$ requires $M_{\alpha,z}$ to be negative semidefinite. If $\alpha > 1$, it requires $M_{\alpha,z}$ to be positive semidefinite. In both case, $\det M_{\alpha,z} \geq 0$, and hence
\begin{equation}
(\alpha - 1)(z - \alpha) \geq 0.
\end{equation}
On the other hand, since $(\alpha,z,\lambda)\in D$, we have $z \geq \alpha$ for $0 < \alpha < 1$ and $z \leq \alpha$ for $\alpha > 1$. It follows that $z = \alpha$, contradicting the assumption $z \neq \alpha$. This proves $(i)$.
\end{proof}

The preceding characterization becomes more rigid when combined with duality. Indeed, coarse-graining for a single entropy identifies the full club-sandwiched surface $z=\alpha$, whereas requiring the same property for its dual also fixes the optimization parameter.

\begin{corollary}\label{dual-coarse-graining}
Let $(\alpha,z,\lambda)$ and $(\hat{\alpha},\hat{z},\hat{\lambda})$ be dual parameter triples in $\mathcal{D}$. Then the following are equivalent:
\begin{enumerate}
\item[(i)] Both $H_{\alpha,z}^{\lambda}$ and its dual $H_{\hat{\alpha},\hat{z}}^{\hat{\lambda}}$ satisfy universal coarse-graining.
\item[(ii)] $z = \alpha$ and $\lambda=1$, or equivalently, $\hat{z} = \hat{\alpha}$ and $\hat{\lambda} = 1$.
\end{enumerate}
In this case, the dual orders satisfy $1/\alpha + 1/\hat{\alpha} = 2$. Consequently, the optimized sandwiched conditional R\'enyi entropies are precisely the members of the three-parameter family for which universal coarse-graining holds simultaneously for the entropy and its dual.
\end{corollary}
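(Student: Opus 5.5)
The plan is to combine Theorem~\ref{coarse graining thm}, applied separately to each member of the dual pair, with the duality relations \eqref{parameter relation}.

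For (i)$\Rightarrow$(ii): both triples lie in $\mathcal{D}$ and both entropies satisfy universal coarse-graining, so Theorem~\ref{coarse graining thm} gives $z=\alpha$ and $\hat z=\hat\alpha$. Substituting into the first relation of \eqref{parameter relation} yields
\begin{equation*}
\frac{\alpha}{1-\alpha}+\frac{\hat\alpha}{1-\hat\alpha}=0 .
\end{equation*}
Clearing denominators gives $\alpha(1-\hat\alpha)+\hat\alpha(1-\alpha)=0$, that is, $\alpha+\hat\alpha=2\alpha\hat\alpha$. Dividing by $\alpha\hat\alpha>0$ gives $1/\alpha+1/\hat\alpha=2$. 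The remaining relations then read $\hat\lambda=(1-\alpha)/(1-\alpha)=1$ and $\lambda=(1-\hat\alpha)/(1-\hat\alpha)=1$. Here the denominators are nonzero because $\alpha,\hat\alpha\neq1$.

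For (ii)$\Rightarrow$(i): assume $z=\alpha$ and $\lambda=1$. The third relation gives $(1-\hat z)/(1-\hat\alpha)=1$, hence $\hat z=\hat\alpha$. The second relation gives $\hat\lambda=(1-z)/(1-\alpha)=1$. The first relation then gives $1/\alpha+1/\hat\alpha=2$ exactly as above. Since both triples lie in $\mathcal{D}$ and satisfy $z=\alpha$ and $\hat z=\hat\alpha$, the direction (i)$\Rightarrow$(ii) of Theorem~\ref{coarse graining thm} yields coarse-graining for both entropies.

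The equivalence of the two descriptions in (ii) is symmetric and follows from the same algebra. Starting from $\hat z=\hat\alpha$ and $\hat\lambda=1$, the second relation gives $z=\alpha$, and the third gives $\lambda=1$.

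For the final sentence, I will identify $H^{1}_{\alpha,\alpha}$ with the optimized sandwiched entropy $\widetilde H^\uparrow_\alpha$. With $z=\alpha$ and $\lambda=1$ the $\rho_B$ factors in \eqref{definition} vanish, because $q=0$. The trace functional then becomes $\Tr(\rho_{AB}^{1/2}\sigma_B^{(1-\alpha)/\alpha}\rho_{AB}^{1/2})^\alpha$, optimized over $\sigma_B$, which is exactly the sandwiched form. I expect no genuine obstacle here. The only point needing care is to check that the boundary cases are handled correctly: for the pair $z=\alpha$, $\lambda=1$ we have $c=\alpha>0$, so the boundary point $c=0$ never arises and Theorem~\ref{coarse graining thm} applies to both triples as stated.
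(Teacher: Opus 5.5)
Your proposal is correct and follows the paper's route, which the paper sketches in its introduction: Theorem~\ref{coarse graining thm} applied to each member of the dual pair gives $z=\alpha$ and $\hat z=\hat\alpha$, and the relations \eqref{parameter relation} then force $\lambda=\hat\lambda=1$ and $1/\alpha+1/\hat\alpha=2$, while the converse follows from the sufficiency direction of the theorem. Your identification of $H^{1}_{\alpha,\alpha}$ with $\widetilde{H}^\uparrow_\alpha$ is also right: with $q=0$, and the optimization switching between supremum and infimum according to the sign of $1-\alpha$, the formula reduces to $\widetilde{H}^\uparrow_\alpha$.
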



\section{Rigidity of the Duality Relations}
In this section, we establish the uniqueness of the duality relation between the conditional entropies $H_{\alpha,z}^\lambda$ and $H_{\hat{\alpha},\hat{z}}^{\hat{\lambda}}$ on tripartite pure states. Our aim is to show that requiring such a duality to hold universally imposes very rigid constraints on the two parameter triples. In fact, it is enough to test the identity on carefully chosen low-dimensional families of pure states for which the conditional entropies can be evaluated explicitly. Thus, the duality property is already strongly constrained at the level of very simple systems, and these elementary test states suffice to force the full parameter relations necessary for duality.

\begin{theorem}\label{duality-converse}
Let $(\alpha,z,\lambda)$, $(\hat{\alpha},\hat{z},\hat{\lambda}) \in\mathcal{D}$. Suppose that
\begin{equation}
H_{\alpha,z}^{\lambda}(A|B)_{\rho} + H_{\hat{\alpha},\hat{z}}^{\hat{\lambda}}(A|C)_{\rho} = 0 \label{duality-relation}
\end{equation}
for every tripartite pure state $\rho_{ABC}$. Then
\begin{equation}
\frac{z}{1-\alpha} + \frac{\hat{z}}{1-\hat{\alpha}} = 0, \qquad \frac{1-z}{1-\alpha} = \hat{\lambda}, \qquad \frac{1-\hat{z}}{1-\hat{\alpha}} = \lambda. \label{parameter-duality-relations}
\end{equation}
\end{theorem}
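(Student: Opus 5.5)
The plan is to evaluate both sides of \eqref{duality-relation} on two families of tripartite pure states where every conditional entropy reduces, by Lemma~\ref{product-state formula}, Proposition~\ref{pure-state formula} and Lemma~\ref{classical register lemma}, to ordinary R\'enyi entropies. This gives three independent scalar equations in the six parameters, which I then solve to obtain \eqref{parameter-duality-relations}. Throughout, $c=1-\lambda(1-\alpha)$ and $\hat c=1-\hat\lambda(1-\hat\alpha)$; on $\mathcal{D}$ these are nonnegative. The boundary point $c=0$ (where $z=\alpha>1$) may need a separate continuity argument.

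\emph{Step 1 (trivial third system).} Take $\rho_{ABC}=\psi_{AB}\otimes\ketbra{0}{0}_C$ with $\psi_{AB}$ having Schmidt vector $P$, faithful and non-uniform.
\begin{itemize}
\item Lemma~\ref{product-state formula} gives $H_{\hat\alpha,\hat z}^{\hat\lambda}(A|C)=H_{\hat\alpha}(P)$.
\item Proposition~\ref{pure-state formula} gives $H_{\alpha,z}^{\lambda}(A|B)=-H_\beta(P)$ with $\beta=(z+(1-\lambda)(1-\alpha))/(z-\lambda(1-\alpha))$, provided $r<1$.
\item Duality then reads $H_\beta(P)=H_{\hat\alpha}(P)$ for all such $P$. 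Lemma~\ref{renyi-parameter-uniqueness lemma} (the extended-line version, since $\beta$ may be negative or infinite) forces $\beta=\hat\alpha$.
\item If $r\ge 1$, formula \eqref{pure-state2} gives a multiple of $\log\|P\|_\infty$. That would force $H_{\hat\alpha}(P)$ to be proportional to $H_\infty(P)$ for all $P$, which is impossible for finite $\hat\alpha$ by the same strict monotonicity; so this case is excluded.
\item Exchanging the roles of $B$ and $C$ gives symmetrically $\hat\beta=\alpha$.
\end{itemize}

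\emph{Step 2 (classical block structure).} Take $\ket{\psi}=\sum_y\sqrt{p_y}\,\ket{\phi_y}_{AB_1}\ket{y}_{B_2}\ket{y}_C$, with $B=B_1B_2$ and $\phi_y$ pure with Schmidt vectors $P^{(y)}$.
\begin{itemize}
\item $\rho_{AB}=\sum_y p_y\,\phi_y\otimes\ketbra{y}{y}_{B_2}$ is classical on $B_2$ with pure blocks. Lemma~\ref{classical register lemma} and Step~1 give $H(A|B)=\frac1\gamma\log\sum_y p_ye^{-\gamma H_{\hat\alpha}(P^{(y)})}$, where $\gamma=(1-\alpha)/c$.
\item $\rho_{AC}=\sum_y p_y\,\rho_A^y\otimes\ketbra{y}{y}_C$ is classical on $C$ with product blocks. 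Lemma~\ref{classical register lemma} and Lemma~\ref{product-state formula} give $H(A|C)=\frac1{\hat\gamma}\log\sum_y p_ye^{\hat\gamma H_{\hat\alpha}(P^{(y)})}$, where $\hat\gamma=(1-\hat\alpha)/\hat c$.
\item Set $x_y=H_{\hat\alpha}(P^{(y)})$; these take arbitrary values in an interval. Duality becomes $M_{-\gamma}(x)=M_{\hat\gamma}(x)$ for the exponential means $M_g(x)=\frac1g\log\sum_y p_ye^{gx_y}$. Since $g\mapsto M_g$ is strictly increasing whenever the $x_y$ are not all equal (Jensen / convexity of log-sum-exp), this forces $\hat\gamma=-\gamma$, i.e.\ $(1-\alpha)/c+(1-\hat\alpha)/\hat c=0$.
\end{itemize}

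\emph{Step 3 (algebra).} Write $k=z/(1-\alpha)$ and $\hat k=\hat z/(1-\hat\alpha)$. Then $c/(1-\alpha)=1/(1-\alpha)-\lambda$, and $\beta=(k+1-\lambda)/(k-\lambda)$, so $1/(1-\beta)=\lambda-k$. Hence:
\begin{itemize}
\item $\beta=\hat\alpha$ becomes $\lambda-k=\frac{1}{1-\hat\alpha}$;
\item $\hat\beta=\alpha$ becomes $\hat\lambda-\hat k=\frac1{1-\alpha}$;
\item $\hat\gamma=-\gamma$ becomes $\frac1{1-\alpha}-\lambda=-\frac1{1-\hat\alpha}+\hat\lambda$.
\end{itemize}
Substituting the first two into the third gives $k+\hat k=0$, which is the first relation in \eqref{parameter-duality-relations}. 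Then $\hat\lambda=\frac1{1-\alpha}+\hat k=\frac{1-z}{1-\alpha}$, and symmetrically $\lambda=\frac{1-\hat z}{1-\hat\alpha}$.

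I expect the main obstacle to be the edge cases rather than the main computation:
\begin{itemize}
\item the $r\ge1$ branch of Proposition~\ref{pure-state formula};
\item the degenerate $c=0$ point, where Lemma~\ref{classical register lemma} is unavailable and I would treat $\gamma$ as $\pm\infty$ via the limiting max/min formula;
\item checking that $\beta$ may legitimately be negative or infinite when Lemma~\ref{renyi-parameter-uniqueness lemma} is invoked;
\item ensuring the $x_y$ can be made genuinely distinct, which is easy by using Schmidt vectors of varying uniformity.
\end{itemize}
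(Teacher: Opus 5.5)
Your proposal is correct and follows the paper's proof: the same two families of test states, the same reduction lemmas, the same three scalar relations and the same algebra. The differences are minor: you build the classical flag $B_2$ into the state instead of using the paper's auxiliary $Y$ with a controlled unitary, and you get $\hat\gamma=-\gamma$ from strict monotonicity of $g\mapsto\frac{1}{g}\log\sum_y p_y e^{g x_y}$ on a single state rather than by differentiating twice in $p$. Your handling of the $r\ge 1$ branch is more careful than the paper's, and the $c=0$ worry needs no limiting version of Lemma~\ref{classical register lemma}: at that point of $\mathcal{D}$ one has $z=\alpha$ and $q=-1$, so $\beta=0$, and Step~1 already contradicts $\beta=\hat\alpha>0$.
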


\begin{proof}
In order to obtain the first two parameter relations, we consider a first class of states consisting of bipartite pure state tensored with a trivial system.

Let $A$ and $B$ be $d$-dimensional systems, and $C$ be a trivial system. Choose a non-uniform probability distribution $(p_i)_{i=1}^d$ with $p_i>0$, and define $\lvert\psi\rangle_{AB} = \sum_{i=1}^d\sqrt{p_i}\lvert ii \rangle_{AB}$. Consider the tripartite pure state
\begin{equation}
\rho_{ABC} = \lvert\psi\rangle\langle\psi\rvert_{AB} \otimes  \lvert0\rangle\langle0\rvert_C.
\end{equation}
Its relevant marginals are
\begin{equation}
\rho_{AB} = \lvert\psi\rangle\langle\psi\rvert_{AB}, \quad \rho_A = \sum_{i=1}^dp_i\lvert i\rangle\langle i\rvert_A, \quad \rho_B = \sum_{i=1}^dp_i\lvert i\rangle\langle i\rvert_B, \quad \rho_{AC} = \rho_A\otimes\lvert0\rangle\langle0\rvert_C.
\end{equation}
Define $\beta = \frac{z+(1-\lambda)(1-\alpha)}{z-\lambda(1-\alpha)}$. Since $\rho_{AB}$ is pure, Proposition~\ref{pure-state formula} gives $H_{\alpha,z}^{\lambda}(A|B)_{\rho} = -H_{\beta}(\rho_B)$. On the other hand, since $\rho_{AC}$ is a product state, Lemma~\ref{product-state formula} gives $H_{\hat{\alpha},\hat{z}}^{\hat{\lambda}}(A|C)_{\rho} = H_{\hat{\alpha}}(\rho_A)$. Since $\rho_A$ and $\rho_B$ have the same eigenvalues, the duality relation \eqref{duality-relation} therefore implies $H_{\beta}(\rho_A) = H_{\hat{\alpha}}(\rho_A)$. Given that $p$ is nonuniform, Lemma~\ref{renyi-parameter-uniqueness lemma} yields $\beta = \hat{\alpha}$, and hence
\begin{equation}\label{1st-parameter-relation}
\frac{z}{1-\alpha} + \frac{1}{1-\hat{\alpha}} = \lambda.
\end{equation}

By exchanging $A$ and $B$, we also obtain
\begin{equation}\label{2nd-parameter-relation}
\frac{\hat{z}}{1-\hat{\alpha}} + \frac{1}{1-\alpha} = \hat{\lambda}.
\end{equation}

The first class of test states yields two relations involving the parameters, but these are not sufficient to determine all three parameter duality relations completely. We therefore introduce a second class of tripartite states whose $AB$-marginal is quantum-classical, with the tripartite
state chosen as a purification of this marginal. This yields the remaining independent relation among the parameters.

Let $A$ and $C$ be $d+1$-dimensional systems, $B$ be a two-dimensional system, and $p\in(0,1)$. Consider the tripartite pure state
\begin{equation}
\lvert\Psi\rangle_{ABC} = \sqrt{p}\,\lvert000\rangle_{ABC} + \sqrt{\frac{1-p}{d}} \sum_{i=1}^{d}\lvert i1i\rangle_{ABC},
\end{equation}
and let $\rho_{ABC} = \lvert\Psi\rangle\langle\Psi\rvert_{ABC}$. Tracing out $C$, the cross terms vanish, and we get
\begin{equation}
\rho_{AB} = p\lvert0\rangle\langle0\rvert_{A} \otimes \lvert0\rangle\langle0\rvert_{B} + \frac{1-p}{d} \sum_{i=1}^{d} \lvert i\rangle\langle i\rvert_{A} \otimes \lvert1\rangle\langle1\rvert_{B},
\end{equation}
which is a quantum-classical state. Its $AC$-marginal is
\begin{equation}
\rho_{AC} = p\lvert00\rangle\langle00\rvert_{AC} + \frac{1-p}{d} \sum_{i,j=1}^{d} \lvert ii\rangle\langle jj\rvert_{AC}.
\end{equation}
Defining $\lvert\phi\rangle_{AC} = \frac{1}{\sqrt d}\sum_{i=1}^{d}\lvert ii\rangle_{AC}$, we can write $\rho_{AC}$ as $\rho_{AC} = p\lvert00\rangle\langle00\rvert_{AC} + (1-p)\lvert\phi\rangle\langle\phi\rvert_{AC}$.

To evaluate $H_{\alpha,z}^{\lambda}(A|B)_{\rho}$, we introduce a
trivial system $X$, whose unique state is denoted by
$\rho_X$. Consider the state
\begin{equation}
\omega_{AXB} = p\lvert0\rangle\langle0\rvert_A \otimes\rho_X \otimes\lvert0\rangle\langle0\rvert_B + (1-p)\frac{P_d}{d} \otimes \rho_X \otimes \lvert1\rangle\langle1\rvert_B.
\end{equation}
Since $X$ is trivial, adding it to the conditioning system does not change the conditional entropy. Thus
\begin{equation}
H_{\alpha,z}^{\lambda}(A|B)_{\rho} = H_{\alpha,z}^{\lambda}(A|XB)_{\omega}.
\end{equation}
Let
\begin{equation}
P_d:=\sum_{i=1}^{d}\lvert i\rangle\langle i\rvert_A, \qquad \gamma = \frac{1-\alpha}{1-\lambda(1-\alpha)}.
\end{equation}
The system $B$ is classical in $\omega_{AXB}$. Applying Lemma~\ref{classical register lemma}, we obtain
\begin{equation}
H_{\alpha,z}^{\lambda}(A|XB)_{\omega} = \frac{1}{\gamma} \log\left( p\exp\left[ \gamma H_{\alpha,z}^{\lambda}(A|X)_{ \lvert0\rangle\langle0\rvert\otimes\rho } \right] + (1-p)\exp\left[ \gamma H_{\alpha,z}^{\lambda}(A|X)_{\frac{P_d}{d}\otimes\rho} \right] \right).
\end{equation}
Both conditional states are product states. Hence, Lemma~\ref{product-state formula} gives
\begin{equation}
H_{\alpha,z}^{\lambda}(A|X)_{\lvert0\rangle\langle0\rvert\otimes\rho} = 0,
\end{equation}
and Proposition~\ref{pure-state formula} gives
\begin{equation}
H_{\alpha,z}^{\lambda}(A|X)_{\frac{P_d}{d}\otimes\rho} = H_{\alpha}\left(\frac{P_d}{d}\right) = \log d,
\end{equation}
we find
\begin{align}
H_{\alpha,z}^{\lambda}(A|B)_{\rho}
&= \frac{1}{\gamma} \log\left( p\exp(\gamma\cdot0) + (1-p)\exp(\gamma\log d) \right)\\
&= \frac{1}{\gamma} \log\left( p+(1-p)d^{\gamma} \right).
\end{align}

Next, we evaluate $H_{\hat{\alpha},\hat{z}}^{\hat{\lambda}}(A|C)_{\rho}$. Introduce a two-dimensional auxiliary system $Y$, initially prepared in the state $\lvert0\rangle\langle0\rvert_Y$, and define
\begin{equation}
\omega_{ACY} = \rho_{AC}\otimes\lvert0\rangle\langle0\rvert_Y.
\end{equation}
By additivity, adjoining this auxiliary system to the conditioning system does not change the conditional entropy. Hence,
\begin{equation}
H_{\hat{\alpha},\hat{z}}^{\hat{\lambda}}(A|C)_{\rho} = H_{\hat{\alpha},\hat{z}}^{\hat{\lambda}}(A|CY)_{\omega}. \label{adjoin-auxiliary-Y}
\end{equation}
To turn $Y$ into a classical register that distinguishes the two orthogonal components of $\rho_{AC}$, define $P_0=\lvert0\rangle\langle0\rvert_C$, $P_1=\sum_{i=1}^{d}\lvert i\rangle\langle i\rvert_C$,
and $X_Y = \lvert0\rangle\langle1\rvert_Y + \lvert1\rangle\langle0\rvert_Y.$
Consider the operator
\begin{equation}
U_{CY} = P_0\otimes I_Y + P_1\otimes X_Y.
\end{equation}
which can be easily verified to be unitary. Furthermore,
\begin{align}
\widetilde{\omega}_{ACY} = (I_A\otimes U_{CY}) \omega_{ACY} (I_A\otimes U_{CY}^{\dagger}) = p\lvert00\rangle\langle00\rvert_{AC} \otimes\lvert0\rangle\langle0\rvert_Y + (1-p)\lvert\phi\rangle\langle\phi\rvert_{AC} \otimes\lvert1\rangle\langle1\rvert_Y.
\end{align}
Unitary invariance on the conditioning system, and combining \eqref{adjoin-auxiliary-Y} gives
\begin{equation}
H_{\hat{\alpha},\hat{z}}^{\hat{\lambda}}(A|C)_{\rho} = H_{\hat{\alpha},\hat{z}}^{\hat{\lambda}}(A|CY)_{\omega} = H_{\hat{\alpha},\hat{z}}^{\hat{\lambda}} (A|CY)_{\widetilde{\omega}}.
\end{equation}
Let $\hat{\gamma} = \frac{1-\hat{\alpha}} {1-\hat{\lambda}(1-\hat{\alpha})}$. Since $Y$ is classical in $\widetilde{\omega}_{ACY}$, applying Lemma~\ref{classical register lemma} yields
\begin{equation}
H_{\hat{\alpha},\hat{z}}^{\hat{\lambda}}(A|CY)_{\widetilde{\omega}} = \frac{1}{\hat{\gamma}} \log\left(p\exp\left(\hat{\gamma} H_{\hat{\alpha},\hat{z}}^{\hat{\lambda}} (A|C)_{\ketbra{0}{0} \otimes \ketbra{0}{0}}\right) + (1-p)\exp\left(\hat{\gamma} H_{\hat{\alpha},\hat{z}}^{\hat{\lambda}}(A|C)_{\ketbra{\phi}{\phi}}\right)\right).
\end{equation}
Thus, Lemma~\ref{product-state formula} gives
\begin{equation}
H_{\hat{\alpha},\hat{z}}^{\hat{\lambda}} (A|C)_{\ketbra{0}{0}} = 0.
\end{equation}
On the other hand, $\ketbra{\phi}{\phi}_{AC}$ is a maximally entangled state on $d$-dimensional subspaces of $A$ and $C$, and so $\rho_C = \frac{1}{d} \sum_{i=1}^{d}\lvert i\rangle\langle i\rvert_C$. By Proposition~\ref{pure-state formula},
\begin{equation}
H_{\hat{\alpha},\hat{z}}^{\hat{\lambda}} (A|C)_{\ketbra{\phi}{\phi}} = -H_{\hat{\beta}}(\rho_C) = -\log d, \quad \text{where } \hat{\beta} =\frac{\hat{z}+(1-\hat{\lambda})(1-\hat{\alpha})}{\hat{z}-\hat{\lambda}(1-\hat{\alpha})}.
\end{equation}
Therefore, we obtain
\begin{align}
H_{\hat{\alpha},\hat{z}}^{\hat{\lambda}}(A|C)_{\rho} 
&= \frac{1}{\hat{\gamma}} \log\left( p\exp(\hat{\gamma}\cdot0) + (1-p)\exp(-\hat{\gamma}\log d) \right)\\
&= \frac{1}{\hat{\gamma}}\log\left(p+(1-p)d^{-\hat{\gamma}}\right).
\end{align}

Applying the assumed duality relation \eqref{duality-relation} to the state $\rho_{ABC}$, we
obtain
\begin{equation}
\frac{1}{\gamma}\log\left(p+(1-p)d^{\gamma}\right) + \frac{1}{\hat{\gamma}}\log\left(p+(1-p)d^{-\hat{\gamma}}\right) = 0,
\end{equation}
or equivalently,
\begin{equation}\label{log-identity}
    \log\left(p+(1-p)d^{\gamma}\right) = -\frac{\gamma}{\hat{\gamma}}\log\left(p+(1-p)d^{-\hat{\gamma}}\right).
\end{equation}
Let $A(p)=p+(1-p)d^{\gamma}$, $B(p)=p+(1-p)d^{-\hat{\gamma}}$ and $c=-\frac{\gamma}{\hat{\gamma}}$. Since $\gamma\neq0$, $\hat{\gamma}\neq0$ and $(\alpha,z,\lambda),(\hat{\alpha},\hat{z},\hat{\lambda}) \in\mathcal{D}$, we have $c\neq0$. Hence \eqref{log-identity} can be written as
\begin{equation}
A(p)=B(p)^c
\end{equation}
Both $A(p)$ and $B(p)$ are polynomials of degree one in $p$. In particular,
\begin{equation}
A''(p)=B''(p)=0.
\end{equation}
Taking the second derivative with respect to $p$ on both sides of $A(p)=B(p)^c$, we obtain
\begin{align}
0 = \frac{\mathrm d^2}{\mathrm dp^2}B(p)^c = c(c-1)B(p)^{c-2}\bigl(B'(p)\bigr)^2. \label{2nd-eqn}
\end{align}
Since $B(p)>0$ and $B'(p)=1-d^{-\hat{\gamma}}\neq0$ for $d\geq2$ and $\hat{\gamma}\neq0$, and since $c\neq0$, \eqref{2nd-eqn} implies
\begin{equation}
-\frac{\gamma}{\hat{\gamma}} = c = 1.
\end{equation}
Thus, the second class of test states yields the remaining relation
\begin{equation}
\frac{1}{1-\alpha} + \frac{1}{1-\hat{\alpha}} = \lambda+\hat{\lambda}. \label{3rd-parameter-relation}
\end{equation}
Combining \eqref{1st-parameter-relation}, \eqref{2nd-parameter-relation}, and \eqref{3rd-parameter-relation}, we derive \eqref{parameter-duality-relations}, completing the proof.
\end{proof}


\section{Conclusion}

We have established the sharp data-processing region for the three-parameter
conditional R\'enyi entropies $H_{\alpha,z}^{\lambda}$ and shown that, within
this region, the known duality transformation is uniquely determined by its
validity on all tripartite pure states. These results demonstrate that the
corresponding parameter restrictions are intrinsic, rather than artifacts
of the existing proofs. Requiring, in addition, universal monotonicity under
deterministic coarse-graining of the classical conditioned system singles
out precisely the club-sandwiched subfamily $z=\alpha$, providing a
structural characterization of this family through its compatibility
with classical information processing.

\paragraph*{Acknowledgements:}
Y.B. gratefully acknowledges the Centre for Quantum Technologies, National University of Singapore, for hosting his research visit. M.T. acknowledges support from the National Research Foundation Investigatorship Award (NRF-NRFI10-2024-0006). The research is also supported by the National Research Foundation, Singapore, through the National Quantum Office, hosted by A*STAR, under its Centre for Quantum Technologies Funding Initiative (S24Q2d0009).

\bibliographystyle{ultimate}
\bibliography{library}

\end{document}